\keywords{Data-Driven Control, Strategy Synthesis, Uncertain MDPs, Safe Autonomy}
\theoremstyle{plain}
\newtheorem{theorem}{Theorem}[section]
\newtheorem{corollary}[theorem]{Corollary}
\newtheorem{lemma}[theorem]{Lemma}
\newtheorem{proposition}[theorem]{Proposition}
\newtheorem*{theorem*}{Theorem}
\newtheorem{remark}[theorem]{Remark}
\newtheorem{definition}[theorem]{Definition}
\newtheorem{problem}[theorem]{Problem}
\newtheorem{assumption}[theorem]{Assumption}
\numberwithin{equation}{section}
\renewcommand{\emph}[1]{\textit{#1}}
\newcommand{\ig}[1]{\textcolor{blue}{[IG: #1]}}
\newcommand{\ml}[1]{\textcolor{orange}{[ML: #1]}}
\newcommand{\mm}[1]{\textcolor{red}{[MM: #1]}}
\newcommand{\aabate}[1]{\textcolor{magenta}{[AA: #1]}}
\newcommand{\luca}[1]{\textcolor{magenta}{[LL: #1]}}
\renewcommand{\path}{\omega}
\newcommand{\x}{\boldsymbol{{x}}} % \textnormal prevents being italicized in theorems etc.
\newcommand{\w}{\boldsymbol{{w}}}
\newcommand{\pathX}{\path_{x}}
\newcommand{\pathXbold}{\boldsymbol{\path}_{x}}
\newcommand{\PathX}{\Omega_{x}}
\newcommand{\pu}{\textbf{u}}
\newcommand{\policy}{\sigma}
\newcommand{\Setofpolicies}{\Sigma}
\newcommand{\adversary}{\xi}
\newcommand{\Adversary}{\Xi}
\newcommand{\PathFin}{\Omega_{x}^{\rm{fin}}}
\newcommand{\pq}{\textbf{q}}
\newcommand{\pathrmdp}{\omega}
\newcommand{\pathrmdpbold}{\boldsymbol{\omega}}
\newcommand{\Pathrmdp}{\Omega}
\newcommand{\PRMDP}{\M^\varphi}
\newcommand{\pathprmdp}{\path_\PRMDPS}
\newcommand{\pathrmdpfin}{\pathrmdp^{\mathrm{fin}}}
\newcommand{\Pathrmdpfin}{\Omega^{\mathrm{fin}}}
\newcommand{\last}{\mathrm{last}}
\newcommand{\reals}{\mathbb{R}}
\newcommand{\naturals}{\mathbb{N}}
\newcommand{\M}{\mathcal{M}}
\newcommand{\I}{\mathcal{I}}
\newcommand{\Inom}{\widehat{\mathcal{I}}}
\newcommand{\D}{\mathcal{D}}
\newcommand{\A}{\mathcal{A}}
\newcommand{\LTLf}{LTL$_f$\xspace}
\newcommand{\LTL}{LTL\xspace}
\newcommand{\Globally}{\Box}
\newcommand{\Eventually}{\Diamond}
\newcommand{\luntil}{\mathcal{U}}
\newcommand{\Next}{\bigcirc}
\newcommand{\trace}{\theta}
\newcommand{\APs}{AP}
\newcommand{\Prop}{\mathfrak{p}}
\newcommand{\safe}{\mathrm{safe}}
\begin{document}

\AddToShipoutPictureBG*{%
  \AtPageUpperLeft{%
    \hspace{16.5cm}%
    \raisebox{-1.5cm}{%
      \makebox[0pt][r]{To appear in \textit{Learning for Dynamics and Control Conference} (\textit{L4DC}), June 2025.}}}}

\title[\LTLf Control under Unknown Disturbances]{
% Data-Driven Control of Nonlinear Stochastic Systems with Unknown Disturbances
%Data-Driven 
Temporal Logic Control for Nonlinear Stochastic Systems Under Unknown Disturbances
}
\author[I. Gracia]{Ibon Gracia}
%\email{ibon.gracia@colorado.edu}

\author[L. Laurenti]{Luca Laurenti}
%\email{l.laurenti@tudelft.nl}

\author[M. Mazo]{Manuel Mazo Jr.}

\author[A. Abate]{Alessandro Abate}

\author[M. Lahijanian]{Morteza Lahijanian}

\thanks{The authors are  with the Smead
Department of Aerospace Engineering Sciences, CU Boulder, Boulder, CO, \{ibon.gracia, morteza.lahijanian\}@colorado.edu, 
the Delft Center for Systems and Control, 
Faculty of Mechanical, Maritime and Materials Engineering, Delft University of Technology \{m.mazo,l.laurenti\}@tudelft.nl and the Department of Computer Science, University of Oxford alessandro.abate@cs.ox.ac.uk}

\begin{abstract}%
% \luca{Consider changing the title as now is to similar than old L4DC paper}
    In this paper, we present a novel framework to synthesize robust %optimal
    strategies for discrete-time nonlinear %stochastic
    systems with random disturbances that are unknown, against temporal logic specifications. 
    The proposed framework is data-driven and abstraction-based: leveraging observations of the system, our approach learns a high-confidence abstraction of the system in the form of an uncertain Markov decision process (UMDP). The uncertainty in the resulting UMDP is used to formally account for both the error in abstracting the system and for the uncertainty coming from the data. % used to learn the noise. 
    Critically, we show that for any given state-action pair in the resulting UMDP, the uncertainty in the transition probabilities can be represented as a convex polytope obtained by a two-layer state discretization and concentration inequalities. This allows us to obtain tighter uncertainty estimates compared to existing approaches, and guarantees efficiency, as we
    % show that synthesizing robust strategies on the resulting UMDP reduces to solving linear programming problems.  
    tailor a synthesis algorithm exploiting the structure of this UMDP.
    We empirically validate our approach on several case studies, showing substantially improved performance compared to the state-of-the-art.  
    
   % Our framework is suited for systems with very general dynamics, e.g., nonlinear in both state and disturbance, and is distribution-agnostic. We first leverage knowledge of the system dynamics and samples from the noise and/or system trajectories to construct a finite state abstraction of the stochastic system, which is correct with user-defined confidence. Then we synthesize a strategy for the original system by performing robust dynamic programming on the abstraction and correctly refining the obtained strategy to the original system. This guarantees that the probability of satisfying the specification is higher than a high probability bound, with identical confidence as that of the abstraction. We consider \emph{uncertain MDP} (UMDP) as the abstract model class, and discuss its superiority with respect to existing IMDP-based approaches. Furthermore, we propose an efficient algorithm to perform strategy synthesis on the UMDP. We empirically validate our approach through several case studies, and also present a comparison with some of the existing approaches.
\end{abstract}

\maketitle

%% \linenumbers

%% main text
\section{Introduction}
\label{sec:intro}

The synthesis of safe strategies for stochastic systems is critical in ensuring \emph{reliable} and \emph{safe} operations in domains such as robotics, autonomous vehicles, and cyber-physical systems \cite{belta2005discrete,lavaei2022automated}. A key challenge arises when the system dynamics include \emph{unknown} random disturbances, making it difficult to account for uncertainties while guaranteeing performance against high-level complex specifications. Existing methods often assume known distributions for the disturbances or rely on abstractions with overly conservative uncertainty estimates, limiting their scalability and applicability to complex systems. This paper aims to address these gaps by presenting a novel framework to synthesize optimal strategies for nonlinear stochastic systems with unknown disturbances, ensuring both formal guarantees and computational efficiency.  

% Our framework employs a data-driven, abstraction-based approach. From a set of system observations, we construct a high-confidence abstraction in the form of an uncertain Markov decision process (UMDP), a flexible model capable of handling complex uncertainty. Unlike existing techniques, which often rely on interval-based abstractions or conservative statistical assumptions, our framework uses convex polytopes to represent the uncertainty in transition probabilities between states by using a 2-layered discretization--fine and coarse. We use two concentration inequalities to learn these bounds as well as the support of the unknown distribution. Incportating them in the UMDP leads to tighter uncertainty sets, yielding less conservative probabilistic guarantees. Exploiting this UMDP structure, we introduce an algorithm that ensures strategy synthesis on the resulting UMDP is computationally efficient, reducing to simpler than linear programming problems used on standard UMDPs. By incorporating the uncertainty stemming from both abstraction errors and data limitations, our framework provides robust strategies even under limited or noisy observations. 

Our framework employs a data-driven, abstraction-based approach to strategy synthesis for stochastic systems with unknown noise under \textit{linear temporal logic over finite traces} (\LTLf) \cite{de2013linear} specifications. Starting with data from the system's trajectories, we construct a high-confidence abstraction in the form of an uncertain Markov decision process (UMDP) \cite{iyengar2005robust}, a flexible model that captures complex uncertainties. Unlike existing methods relying on interval-based abstractions or conservative assumptions, our framework represents transition probability uncertainties as convex polytopes. These sets are derived through a novel two-layer discretization scheme and learning the support of the unknown disturbance. This leads to tighter uncertainty sets and less conservative results compared to existing methods. Exploiting this UMDP structure, we introduce a synthesis algorithm for \LTLf specifications that simplifies the computation, reducing the complexity of standard UMDP linear programming approaches. By incorporating uncertainties from both abstraction errors and data limitations, our framework yields a strategy that is robust. Our empirical evaluations over various types of systems reveals the efficacy of this approach over existing methods, namely in data efficiency, tightness of results, and scalability.

The main contributions of this paper are fourfold: (i) a novel framework for synthesizing strategies for nonlinear stochastic systems under non-additive, unknown disturbances with \LTLf specifications, (ii) a distribution-agnostic, data-driven construction of UMDP abstraction with a specific structure that reduces conservatism of existing abstraction-based techniques, (iii) an efficient tailored synthesis algorithm for this UMDP abstraction, which does not introduce additional conservatism%results in tight error bounds
, and (iv) a series of case studies and benchmarks that show superiority of the framework over the state-of-the-art, with up to $3$ orders of magnitude improvement in sample complexity and an order of magnitude reduction in computation time.
% the framework's superiority over the state-of-the-art, achieving up to $3$ orders of magnitude improvement in sample complexity and an order of magnitude reduction in computation time.

\paragraph{Related Work}

Abstractions of stochastic systems to finite Markov decision processes (MDPs) are powerful tools for controller synthesis on highly-complex systems under complex logic specifications \cite{lavaei2022automated}. In particular, 
Interval MDPs (IMDPs) \cite{lahijanian2015formal,givan2000bounded} abstract systems by presenting uncertain transition probabilities within intervals, capturing the full range of system behaviors. For example, \cite{cauchi2019efficiency} efficiently abstracts linear systems with additive Gaussian noise, while \cite{skovbekk2023formal} extends this to nonlinear dynamics. Uncertain MDPs (UMDPs) \cite{iyengar2005robust,el2005robust} generalize IMDPs by allowing transition probabilities to belong to more complex sets and have been used for strategy synthesis against specifications such as linear temporal logic (LTL) \cite{wolff2012robust}. However, these abstractions typically require system models, which are often unavailable in practice.
% and must be learned from data.

To address model uncertainty, various methods leverage Gaussian processes \cite{jackson2021strategy}, neural networks \cite{adams2022formal}, and ambiguity sets \cite{gracia2022distributionally}, which are then abstracted as IMDPs or UMDPs. Statistical tools like the scenario approach have also been used to abstract stochastic \cite{badings2023robust,badings2023probabilities}, non-deterministic \cite{kazemi2024data}, and deterministic systems \cite{coppola2022data,coppola2023data}. Also, techniques such as super-martingales and barrier functions enable safety verification and control synthesis for general dynamics \cite{lechner2022stability,badings2024learning,mazouz2024piecewise}. 
Nevertheless, all these works assume that the disturbance distribution is known.

When disturbance distributions are uncertain, some works combine IMDP abstractions with statistical tools %like the scenario approach 
\cite{badings2023robust,badings2023probabilities,schon2023bayesian}, while others employ barrier certificates with the scenario approach for safety verification \cite{mathiesen2023inner}. Another approach  \cite{gracia2025efficient} constructs Wasserstein ambiguity sets from data samples to abstract systems as UMDPs, which account for the uncertainty regarding the unknown distributions. However, these methods typically assume simple dynamics or additive noise. 
For general dynamics with unknown disturbance distributions, only a few works exist. \cite{salamati2021data} uses barrier certificates for safety verification, and \cite{gracia2024data} extends the ambiguity set approach to nonlinear dynamics, formally characterizing the guarantees of these approaches. Both assume that certain distribution-related properties, such as variance or support, are known—an assumption often unrealistic in practice, and also suffer from high sample complexity, especially under high-confidence requirements.  
Our work overcomes these limitations by removing assumptions about disturbance distributions and offering a data-efficient and scalable approach suitable for systems with general dynamics.
\section{Problem Formulation}

In this work the focus is on discrete-time stochastic systems given by
\vspace{-2mm}
\begin{align}
\label{eq:sys}
    \x_{k+1} = f(\x_k,u_k,\w_k),
\end{align}
where $\x_k\in\reals^n$ denotes the state at time $k\in\naturals_0$, $u_k\in U  \subset \reals^m$ is the control input chosen from a finite set $U$, and $\w_k\in W \subseteq \reals^d$ is the disturbance. The latter is a sequence of independent and identically distributed (i.i.d.) random variables on the probability space $(W,\mathcal{B}(W), P_W)$, with $\mathcal{B}$ being the Borel $\sigma$-algebra on $W$, and where the support $W$ and probability distribution $P_W$ of $\w_k$ are \emph{unknown}. The vector field (possibly nonlinear) $f:\reals^n\times U\times W \rightarrow \reals^n$ is assumed to be Lipschitz continuous on its third argument, uniformly for all values of its first argument on some set. 

\vspace{-2mm}
\begin{assumption}
\label{ass:lipschitz}
    There exists a set $X\subset\reals^n$, such that, for every $u\in U$, there exists a constant $L_u >0$ 
    such that, for all $x\in X$, $w,w'\in W$, it holds that $\|f(x, u, w) - f(x, u, w')\| \le L_u \|w - w'\|$.
\end{assumption}
\vspace{-2mm}
In lieu of unknown $W$ and $P_W$, we assume a dataset on the disturbance is available.
\vspace{-2mm}
\begin{assumption}
\label{ass:samples}
    A set $\{\boldsymbol{\hat w}^{(i)}\}_{i=1}^N$ of $N$ i.i.d. samples from $P_W$ is available.
\end{assumption}
%
% \begin{remark}
    % Although Assumption~\ref{ass:samples} may seem strong since, in general, only observations of the state and control can be collected, in many practical cases we are able to obtain samples from $P_W$ from state-control observations. Let $S_x=\{(\hat x^{(i)}, \hat u^{(i)}, \boldsymbol{\hat y}^{(i)})\}_{i=1}^N$, where $\boldsymbol{\hat y}^{(i)}=f(\hat x^{(i)}, \hat u^{(i)}, \boldsymbol{\hat w}^{(i)})$ for some $\boldsymbol{\hat w}^{(i)}$ sampled from $P_w$ randomly. Then, if $f$ is linear in $w$ or, more generally, if $S_x$ is obtained where $f(\hat x^{(i)}, \hat u^{(i)},\cdot):W\longrightarrow\mathbb{R}^n$ is injective, we can invert $f$ to solve for the samples $\boldsymbol{\hat w}^{(i)}$. Note that global injectivity of $f$ is not needed and, since $f$ is known and the obtained samples $\boldsymbol{\hat w}^{(i)}$ do not depend on $\hat x^{(i)}, \hat u^{(i)}$, obtaining samples on a small region of the space $X\times U$ is enough to characterize $P_W$, and thus, the system's behavior on the entire space $X\times U$. Such regions of injectivity exist in most systems of practical interest, for example, in all the case studies we consider in Section~\ref{sec:case_studies}.
% \end{remark}
Assumption~\ref{ass:samples} is commonly made in related work \cite{badings2023robust,gracia2024data} and can be practically satisfied through, e.g., observations of the state and control. The straightforward example is when $f$ is affine in $\w_k$; otherwise, it suffices for $f$ to be injective over only a subset of $\mathbb{R}^n$ as discussed in \cite{gracia2024data}. This condition is met by many practical systems, including those in our case studies.

% For instance, $f$ is injective Such regions of injectivity exist in most systems of practical interest, for example, in all the case studies we consider in Section~\ref{sec:case_studies}.

Given $x_0,\ldots,x_K \in \mathbb{R}^n$, $u_0,\dots,u_{K-1}\in U$, and $K \ge 0$, we denote a finite \emph{trajectory} of System~\eqref{eq:sys} by $\pathX =x_0 \xrightarrow{u_0}% x_1 \xrightarrow{u_1}  
\ldots \xrightarrow{u_{K-1}} x_K$. We let $|\pathX|$ denote the length of $\pathX$, define $\PathX$ as the set of all trajectories with $|\pathX|<\infty$ and denote by $\pathX(k)$ 
the state of $\pathX$ at time $k\in\{0,\dots, |\pathX|\}$. 
%\ig{To be consistent with our definition of path of the robust MDP, should we enforce that $x_{k+1}$ must be in the reachable set of $x_k,u_k,W$?}
%\ML{good quesiton. I think it's not needed bc once probability measure over paths is defined, then the probability of the paths that are not feasible due to the dynamics is 0.}
%and 
% given a path $\pathX \in \PathX$, we use
%by
%$\pathX^K$ its prefix of length $K+1$
 A \emph{strategy} of System~\eqref{eq:sys} is a function $\policy_x: \PathX \to U$ that assigns a control $u$ to each finite trajectory $\pathX$. 
Given $x\in\reals^n$, $u\in U$, the \emph{transition kernel} $T:\mathcal{B}(\reals^n)\times \reals^n\times U \rightarrow [0, 1]$
of System~\eqref{eq:sys} assigns the probability
$T(B\mid x,u) = \int_W \mathds{1}_{B}(f(x,u,w))P_W(dw)$, where the indicator function $\mathds{1}_B(x) = 1$ if $x \in B$, and $0$ otherwise,
to each Borel set $B\in\mathcal{B}(\reals^n)$. For a strategy $\policy_x$ and an initial condition $x_0\in\reals^n$, the transition kernel defines a unique probability measure $P_{x_0}^{\policy_x}$ over the trajectories of System~\eqref{eq:sys} \cite{bertsekas1996stochastic}.
%\ig{can we define a probability on $\PathX$? (remember that the length of the paths might be different)}  \LL{It is a bit informal the way it is written right now, as you either fix the path length or consider infinite paths and talk about different finite sub-paths using the Cylinder sigma algebra. I would like to avoid to go there, as it will take space }
%\ig{If enough space, make this more formal} \luca{Being a conference version and being this standard material, I would not go super formal. Otherwise, if we are extremely formal in here, we need to be it everywhere.}
In this way, $P_{x_0}^{\policy_x}[\pathXbold(k) \in B]$ denotes the probability that $\x_k$ belongs to the set $B\in \mathcal{B}(\reals^n)$ when following strategy $\policy_x$ from initial state $x_0$. 
In this work, we are interested in the temporal behavior of System~\eqref{eq:sys} w.r.t. a bounded (\emph{safe}) set $X\in \mathcal{B}(\reals^n)$ and a set of regions of interest $R_\text{int}% := \{r_1,\dots,r_{|R_\text{int}|}\}
$, with $r\subseteq X$ and $r\in \mathcal{B}(\reals^n)$ for all $r\in R_\text{int}$. We denote by $r_\text{unsafe} = \reals^n\setminus X$ the unsafe region. We consider a set $AP := \{\Prop_1,\dots,\Prop_{|AP|-1}, \Prop_\text{unsafe}\}$ of \emph{atomic propositions}, and associate a subset of atomic propositions to each region $r\in R_\text{int}\cup \{r_\text{unsafe}\}$. We define the \emph{labeling function} $L:\reals^n\rightarrow 2^{AP}$ as the function that maps each state $x\in \reals^n$ to the atomic propositions that are true in the region where $x$ lies, e.g., if we associate $\{\Prop_1\}$ to region $r_1$, we conclude that $\Prop_1$ is \textit{true} at $x$, denoted $\Prop_1 \equiv \top$, if $x\in r_1$. In consequence, each trajectory $\pathX= x_0 \xrightarrow{u_0} \ldots \xrightarrow{u_{K-1}} x_K$
results in the (observation) \emph{trace} 
$\rho = \rho_0\dots\rho_K, $
where $\rho_k := L(x_k)$. %for all $k\in\{0,\dots,K\}$. 

In order to formally characterize behaviors of System~\eqref{eq:sys}, we use \emph{linear temporal logic over finite traces} (\LTLf) \cite{de2013linear}, which generalizes Boolean logic to temporal behaviors.
%\ml{this last part of the sentence doesn't make sense.}\ig{what about now?}
%is a formal language commonly employed to express finite behaviors.
An \LTLf property $\varphi$ is a logical formula defined over atomic propositions $AP$ using Boolean connectives  ``negation'' ($\neg$) and ``conjunction'' ($\land$), and the temporal operators ``until'' ($\luntil$) and ``next'' ($\bigcirc$). The syntax of formula $\varphi$ is recursively defined as
%
    % \begin{align*}
    %     \varphi := \top \mid \Prop \mid \neg\varphi \mid \varphi_1\land\varphi_2 \mid \bigcirc\varphi \mid \varphi_1\luntil\varphi_2 
    % \end{align*}
    %
    \begin{align*}
    \varphi := \top \mid \Prop \mid \neg\varphi \mid \varphi_1\land\varphi_2 \mid \bigcirc\varphi \mid \varphi_1\luntil\varphi_2,
    \end{align*}
    where $\Prop\in AP$ and $\varphi_1,\varphi_2$ are also \LTLf formulas.
The temporal operators ``eventually'' ($\Diamond$) and ``globally'' ($\Box$) are derived from the above syntax as  $\Diamond \varphi := \top\luntil\varphi$ and $\Box \varphi := \neg\Diamond(\neg\varphi)$.
\LTLf formulae are semantically interpreted over finite traces \cite{de2013linear}. We say a trajectory $\pathX$ satisfies a formula $\varphi$, i.e., $\pathX \models \varphi$, if some prefix of its trace $\rho$ satisfies $\varphi$.

% First note that, given a strategy of System~\eqref{eq:sys}, satisfaction of an \LTLf formula is probabilistic. Secondly, in our setting, the distribution of the disturbance is unknown. Our goal is to leverage data samples to synthesize a strategy which guarantees that System~\eqref{eq:sys} satisfies a given \LTLf formula with high probability. Note that the synthesized strategy also needs to account for the learning gap caused by the lack of knowledge of $P_W$. Now we formalize the problem.
%

Our goal is to synthesize a strategy for System~\eqref{eq:sys} to ensure satisfaction of a given \LTLf formula $\varphi$. However, note that (i) under a given strategy, the satisfaction of $\varphi$ is probabilistic, and (ii) in our setting, the distribution of the disturbance is unknown. Hence, we aim to leverage data samples to generate a strategy that guarantees System~\eqref{eq:sys} satisfies $\varphi$ with high probability. Furthermore, note that the synthesized strategy must account for the learning gap due to the lack of knowledge of $P_W$. 
% Now, we formalize the problem.

\begin{problem}
    \label{prob:Syntesis}
    Consider stochastic System~\eqref{eq:sys}, a set $\{\hat\w^{(i)}\}_{i=1}^N$ of $N$ i.i.d. samples from $P_W$, a bounded set $X\in\mathcal{B}(\mathbb{R}^n)$ on which Assumption~\ref{ass:lipschitz} holds, and an \LTLf formula $\varphi'$ defined over the regions of interest $R_\text{int}$. Given a confidence level $1-\alpha \in (0,1)$, synthesize a strategy $\sigma_x$ and a high probability bound function $\underline p: X \rightarrow [0, 1]$ such that, with confidence at least $1-\alpha$, for every initial state $x_0\in X$, $\sigma_x$ guarantees that the probability that the paths $\pathX \in \PathX$ satisfy $\varphi := \varphi' \land \Box \neg \Prop_\text{unsafe}$ while remaining in $X$ is lower bounded by $\underline p(x_0)$, i.e.,
    \begin{align*}
    P_{x_0}^{\sigma_x}[\pathX \models \varphi] \ge \underline p(x_0).
    \end{align*}
\end{problem}

% We highlight that the noise distribution $P_W$ is unknown, and no assumptions are imposed on this one. Instead, since only samples are available, the guarantees required for the closed-loop system shall hold with a \emph{confidence}. This confidence is related to the probability that the $N$ samples are representative of $P_W$, and is interpreted in the frequentist sense: if the process of obtaining the $N$ samples from $P_W$ and synthesizing the strategy is repeated an infinite number of times, the result $P_{x_0}^{\sigma_x}[\pathX \models \varphi \land \Box \neg \Prop_\text{unsafe}] \ge \underline p(x_0)$ for all $x_0\in X$ will hold at least in a fraction of $1-\alpha$ of the total cases.

% \noindent
We emphasize that the noise distribution $P_W$ is unknown, and no assumptions are imposed on it. Instead, since only samples are available, the probabilistic guarantees for the closed-loop system must hold with a \emph{confidence}. This confidence is related to the probability that the $N$ samples are representative of $P_W$ and is interpreted in the frequentist sense: if the process of obtaining $N$ samples from $P_W$ and synthesizing the strategy is repeated infinitely many times, the condition $P_{x_0}^{\sigma_x}[\pathX \models \varphi] \ge \underline p(x_0)$ for all $x_0 \in X$ holds in at least $1-\alpha$ of the cases.

\paragraph*{Overview of the approach} 

Given the uncountable nature of the state-space of System~\eqref{eq:sys} and the unknown distribution $P_W$, solving Problem \ref{prob:Syntesis} exactly is infeasible. Therefore, we adopt an abstraction-based approach. This method provides a strategy along with a conservative, high-probability bound for every initial state. The abstraction is an uncertain Markov decision process (UMDP) constructed from a finite discretization of set $X$.  We learn the transition relations between the discrete regions using System~\eqref{eq:sys} and disturbance samples, capturing the system's behavior with confidence $1-\alpha$.  
Our UMDP construction is specifically designed to tightly capture the learning uncertainty.
% Our construction of UMDP is specifically tuned to also tightly capture the learning uncertainty. 
Then, we devise a strategy synthesis algorithm based on robust dynamic programming to (robustly) maximize the probability of satisfying $\varphi$ on this UMDP.  Next, we refine the obtained strategy to System~\eqref{eq:sys} such that it guarantees the closed-loop system satisfies $\varphi$ with a probability higher than the one obtained for the abstraction with confidence $1-\alpha$.

\section{Preliminaries on Uncertain Markov Decision Processes}
\label{sec:preliminaries}

% An uncertain MDP (UMDP), also known as robust MDP, is a stochastic system that generalizes the MDP class by letting its transition probability distributions be uncertain, namely, they take values in a set \cite{iyengar2005robust,el2005robust,wiesemann2013robust}.
%\aabate{are we all happy with Robust MDP, or should we consider using Uncertain MDP (as per recent literature other than that listed) instead?} \ig{Personally, I definitely prefer Uncertain MDP}\luca{I am fine with uncertain MDPs.}

An uncertain MDP (UMDP), also known as a robust MDP, is a stochastic system that generalizes the MDP class by allowing its transition probability distributions to be uncertain, taking values from a set \cite{iyengar2005robust,el2005robust,wiesemann2013robust}.

\begin{definition}[Uncertain MDP]
\label{def:robust_mdp}
    A labeled uncertain Markov decision process (UMDP) $\M$ is a tuple $\M = (S,A,\Gamma,s_0,AP,L)$, where 
    $S$ and $A$ are finite sets of states and actions, respectively, $s_0 \in S$ is the initial state,
    % \begin{itemize}
        % \setlength\itemsep{0.1mm}
        % \item $S$ and $A$ are finite sets of states and actions, respectively, $s_0 \in S$ is the initial state,
        % \item $\Gamma = \{\Gamma_{s,a} \subseteq \mathcal{P}(S) : {s\in S, a\in A}\}$, where 
        % $\mathcal{P}(S)$ is the set of probability distributions over $S$, and $\Gamma_{s,a}$ is a nonempty set of transition probability distributions for state $s\in S$ and action $a \in A$,
        % \item $AP$ is a finite set of atomic propositions, and $L:S\rightarrow 2^{AP}$ denotes the labeling function.
    % \end{itemize}
    $\Gamma = \{\Gamma_{s,a} \subseteq \mathcal{P}(S) : {s\in S, a\in A}\}$, where $\mathcal{P}(S)$ is the set of probability distributions over $S$, and $\Gamma_{s,a}$ is a nonempty set of transition probability distributions for state $s\in S$ and action $a \in A$,
    $AP$ is a finite set of atomic propositions, and $L:S\rightarrow 2^{AP}$ denotes the labeling function.
\end{definition}

A finite \emph{path} of UMDP $\M$ is a sequence $\pathrmdp = s_0 \xrightarrow{a_0} s_1 \xrightarrow{a_1} \ldots 
\xrightarrow{a_{K-1}} s_K$ of states $s_k\in S$ and actions $a_k \in A$ such that there 
exists $\gamma\in\Gamma_{s_k,a_k}$ with $ \gamma(s_{k+1})> 0$ for all $k \in \{0, \dots, K-1\}$. We denote by $\Pathrmdp$ the set of all finite paths. Given a path $\pathrmdp\in\Pathrmdp$, $\pathrmdp(k) = s_k$ is the state of $\pathrmdp$ at time $k \in \{0,\dots, K\}$, and we denote its last state by $\last(\pathrmdp)$. %A labeled interval MDP (IMDP) $\I$, also known as bounded-parameter MDP (BMDP) \cite{givan2000bounded}, is a particular case of UMDP $\I = (S, A,\Gamma,s_0,AP,L)$ where the transition probabilities belong to independent intervals: in this way, $\Gamma_{s,a} = \{ \gamma\in\mathcal{P}(S): \underline P(s,a,s')\leq \gamma(s')\leq \overline P(s,a,s')\:\text{for all}\: s,s'\in S,a\in A\}$, for some $\underline P(s,a,\cdot), \overline P(s,a,\cdot) : S\rightarrow [0, 1]$ \ig{if space, make it more formal}.
A \emph{strategy} of a UMDP $\M$ is a function $\policy: \Pathrmdp \rightarrow A$ that maps each finite path to the next action. We denote by $\Sigma$ the set of all strategies of $\M$. 
%Fixing a strategy $\policy\in\Setofpolicies$ restricts the set of transition probability distributions \ml{what do you mean by this?} of $\M$, which yields a set of Markov chains, or uncertain Markov chain. 
Given path $\pathrmdp \in \Pathrmdp$ and $\policy \in \Sigma$, the process evolves from $s_{k} = \last(\path)$ under $a_k = \policy(\pathrmdp)$ to the next state according to a probability distribution in $\Gamma_{s_k,a_k}$. An adversary is a function that chooses this distribution \cite{givan2000bounded}. Formally, an \emph{adversary} is a function $\adversary: S \times A \times \big(\naturals\cup \{0\}\big) \rightarrow \mathcal{P}(S)$ that maps each state $s_k$, action $a_k$, and time step $k\in\naturals\cup \{0\}$ to a transition probability distribution $\gamma\in\Gamma_{s_k,a_k}$, according to which $s_{k+1}$ is distributed. We denote the set of all adversaries by $\Adversary$. Given an initial condition $s_0\in S$, a strategy $\policy\in\Setofpolicies$ and an adversary $\xi\in\Xi$, the UMDP collapses to a Markov chain
 % , and a unique probability distribution $Pr_{s_0}^{\policy,\xi}$ is induced on its paths.
with a unique probability distribution $Pr_{s_0}^{\policy,\xi}$ over its paths.

\begin{definition}[Interval MDP]
\label{def:imdp}
    A labeled interval Markov decision process (IMDP) $\I$ is an UMDP whose transition probability distributions are defined by intervals: for all $s\in S$, $a\in A$, $\Gamma_{s,a} := \{ \gamma \in \mathcal{P}(S) : \underline P(s,a)(s')\le \gamma(s') \le \overline P(s,a)(s') \}$, where $\underline P(s,a)(\cdot), \overline P(s,a)(\cdot) : S \rightarrow [0,1]$.
\end{definition}

\iffalse

\subsection{Concentration Inequalities}

\ig{Does it make sense to introduce here the learning techniques we will use use? I feel like doing so will take more space}\luca{It depends but if you just rely on concentration inequalities, then no need}

\fi
\section{Data-driven UMDP Abstraction}
\label{sec:abstraction}
%We solve Problem~\ref{prob:Syntesis} in two steps: first,
In this section, we introduce a construction of a UMDP $\M$, whose path probabilities are guaranteed to encompass the probabilities of System~\eqref{eq:sys}'s trajectories with confidence $1-\alpha$
. 
%After that, in Section~\ref{sec:synthesis}, we synthesize a strategy for System~\ref{eq:sys} and lower-bound the probability that the closed-loop system satisfies the \LTLf formula $\varphi$.
%
We define the set of states $S$ of $\M$ as follows. Let $R := \{r_1,\dots, r_{|R|}\}$ be a finite partition of the continuous state-space $\mathbb{R}^n$ into non-overlapping, non-empty regions, which respects the regions of interest $R_\text{int}$ and the safe set $X$, and such that $r\in\mathcal{B}(\reals^n)$ for all $r\in R$. We let region $r_{|R|} := r_\text{unsafe}$ represent the unsafe set. 
%\ml{Why don't we simply say: Let $R := \{r_1,\dots, r_{|R|}\}$ be a finite partition continuous state-space $\mathbb{R}^n$ into non-overlapping, non-empty regions such that $\bigcup_{i=1}^{|R|-1} r_i = X$ and $r_{|R|} =\reals^n\setminus X$.} \ig{Isn't the first condition implied by the word ''partition"}
We assign each region $r \in R$ to a state $s \in S$ in the abstraction $\M$ through the bijective map $J: R \rightarrow S$, which ensures that $J^{-1}(s)  = r \in R$ is unique. For simplicity, we abuse the notation and also say $J(x) = s$ if $x \in r$ with $J(r) = s$.
We define the action set $A$ of $\M$ to be the finite control set $U$ of System~\eqref{eq:sys}. Furthermore, and with a slight abuse of language, we denote by $L$ the labeling function of $\M$, which maps each state $s\in S$ to the atomic propositions that hold at $x\in r = J^{-1}(s)$. 
%\ml{so the partition $R$ respects the regions of interest.}\ig{added that part before}

Next, we define the set of transition probability distributions of the abstraction. To that end, we begin by stating the following proposition, whose proof follows from \cite[Eq.12]{badings2023probabilities}\footnote{Measurability of the events in  \eqref{eq:tp_bounds} is formally proved in Appendix~\ref{sec:measurability}.}, which gives uniform bounds in the probabilities that System~\eqref{eq:sys} transitions from each point $x$ in some region $r\in R$ to some region $\tilde r \in \mathcal{B}(\reals^n)$.

\begin{proposition}
    \label{prop:tp_bounds}
    Given a region $r\in R$, an action $a\in A$ and a realization $w\in W$ of $\w$, denote by $\text{Reach}(r,a,w) := \{f(x,a,w) : x\in r\}$ the reachable set of $r$ under $a$ and $w$. %Let $\{r_i\}_{i\in I}$ be an arbitrary number of regions from $R$, with indices in some set $I\subseteq\naturals$, and denote the union of these regions by $\tilde r$.
    % Let $\tilde r\in\mathcal{B}(\reals^n)$. 
    Then, the probability of transitioning from each state $x\in r$ to region $\tilde r \in \mathcal{B}(\reals^n)$ under action $a\in A$ is bounded by
    \begin{align}
\label{eq:tp_bounds}
    P_W(\{w\in W : \text{Reach}(r,a,w)\subseteq \tilde r \}) \le T(\tilde r \mid x, a) \le P_W(\{w\in W : \text{Reach}(r,a,w) \cap \tilde r \neq \emptyset \})
\end{align}
\end{proposition}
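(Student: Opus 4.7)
The plan is to derive both bounds directly from the definition of the transition kernel $T(\tilde r \mid x, a) = \int_W \mathds{1}_{\tilde r}(f(x,a,w))\, P(dw) = P(\{w\in W : f(x,a,w) \in \tilde r\})$, by exhibiting pointwise (in $w$) inclusions among the three relevant subsets of $W$, and then invoking monotonicity of the probability measure $P$. The only non-trivial issue is measurability of the two reach-based events; I will use the continuity of $f$ in $w$ (Assumption~\ref{ass:lipschitz}) together with separability of $\mathbb{R}^n$ to reduce these to countable set operations on measurable sets.

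First I would fix $x \in r$, $a \in A$ arbitrary, and observe the two key set inclusions. For the lower bound, if $w$ belongs to the event $E_\ell(\tilde r) := \{w\in W : \text{Reach}(r,a,w)\subseteq \tilde r\}$, then $f(x,a,w) \in \text{Reach}(r,a,w) \subseteq \tilde r$, so $w$ belongs to $E_T(\tilde r) := \{w\in W : f(x,a,w) \in \tilde r\}$. Hence $E_\ell(\tilde r) \subseteq E_T(\tilde r)$. For the upper bound, if $w \in E_T(\tilde r)$, then $f(x,a,w)$ is a point of $\text{Reach}(r,a,w)$ that also lies in $\tilde r$, so $\text{Reach}(r,a,w) \cap \tilde r \neq \emptyset$, i.e., $w \in E_u(\tilde r) := \{w\in W : \text{Reach}(r,a,w)\cap \tilde r \neq \emptyset\}$. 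Thus $E_T(\tilde r) \subseteq E_u(\tilde r)$. Applying $P$ to both inclusions and recalling $P(E_T(\tilde r)) = T(\tilde r \mid x, a)$ yields exactly \eqref{eq:tp_bounds}.

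The main obstacle, and the reason for the footnote referencing the appendix, is verifying that $E_\ell(\tilde r)$ and $E_u(\tilde r)$ belong to $\mathcal{B}(W)$ so that $P(E_\ell(\tilde r))$ and $P(E_u(\tilde r))$ are well defined. I would handle this by exploiting that $w \mapsto f(x,a,w)$ is continuous (indeed Lipschitz) on $W$ for each fixed $x \in X$, and by picking a countable dense subset $D \subseteq r$ (which exists because $r \subseteq \mathbb{R}^n$ is separable). By continuity of $f$ in $x$ over $r$ for fixed $w$, which follows from the same Lipschitz setting after the standard swap argument, one can show $\text{Reach}(r,a,w) \cap \tilde r \neq \emptyset$ iff $f(x',a,w) \in \tilde r$ for some $x' \in D$ whenever $\tilde r$ is taken in a suitable generating subclass (open sets), giving $E_u(\tilde r) = \bigcup_{x'\in D}\{w : f(x',a,w)\in \tilde r\}$ as a countable union of Borel sets; a dual argument using closed $\tilde r$ gives $E_\ell(\tilde r) = \bigcap_{x'\in D}\{w : f(x',a,w)\in \tilde r\}$. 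A monotone class / Dynkin-type argument then extends measurability from this generating subclass to arbitrary $\tilde r \in \mathcal{B}(\mathbb{R}^n)$.

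With measurability in hand, the bounds in \eqref{eq:tp_bounds} follow immediately from the two set inclusions and monotonicity of $P$, completing the proof. I would defer the detailed measurability bookkeeping to the appendix as the authors have done, since the conceptual content of the proposition is the pair of pointwise inclusions above.
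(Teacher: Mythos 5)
Your two set inclusions plus monotonicity of $P$ are exactly the paper's argument for the proposition itself: the paper's (very short) proof is precisely the chain $\mathrm{Reach}(r,a,w)\subseteq\tilde r \implies f(x,a,w)\in\tilde r\implies \mathrm{Reach}(r,a,w)\cap\tilde r\neq\emptyset$ for every $x\in r$, with measurability deferred to the appendix. So the main body of your proposal is correct and matches the paper.

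The gap is in your sketched measurability argument, which takes a different route from the paper's and does not go through as stated. First, you invoke continuity of $x\mapsto f(x,a,w)$ in order to replace $r$ by a countable dense subset $D$, claiming it "follows from the same Lipschitz setting after the standard swap argument." It does not: Assumption~\ref{ass:lipschitz} only gives Lipschitz continuity in the \emph{third} argument $w$, uniformly over $x$; no regularity in $x$ is assumed anywhere in the paper, so the identities $E_u(\tilde r)=\bigcup_{x'\in D}\{w:f(x',a,w)\in\tilde r\}$ and $E_\ell(\tilde r)=\bigcap_{x'\in D}\{w:f(x',a,w)\in\tilde r\}$ are unjustified. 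Second, even granting continuity in $x$, the closing "monotone class / Dynkin-type argument" is not routine: the map $\tilde r\mapsto E_u(\tilde r)$ commutes with unions but not with complements, so the class of $\tilde r$ for which $E_u(\tilde r)$ is Borel is not obviously a $\lambda$-system or a $\sigma$-algebra, and extending from open sets to all of $\mathcal{B}(\reals^n)$ is exactly the hard part (it is a Choquet-capacitability-type statement, not a Dynkin lemma). The paper's appendix sidesteps both issues by working in $w$ only: it uses the Lipschitz bound to show that $w\mapsto\mathrm{Reach}(r,a,w)$ is upper semi-continuous as a set-valued map (Lemma~\ref{eq:lemma_epsilon_reach_set}), hence a measurable random closed set, and then cites the fundamental measurability theorem for random closed sets to obtain measurability of the hitting event for every Borel $\tilde r$, with the containment event handled by complementation. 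To repair your version you would need either to add a continuity-in-$x$ hypothesis together with a genuine capacity-type extension argument, or to switch to the random-set route.
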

%
\iffalse
\begin{proof}
    By the logic of \cite[Eq.12]{badings2023probabilities}
    %\ml{didn't John also prove this?}\ig{he did, but he didn't talk about measurability, and you can't define the probability if it is not measurable}
    , we have that, for all $x\in r$,
    $\text{Reach}(r,a,w)\subseteq \tilde r \implies f(x,a,w_k(\omega))\in \tilde r \implies \text{Reach}(r,a,w)\cap \tilde r \neq \emptyset.$
%
% \begin{align*}
    % \text{Reach}(r,a,w)\subseteq \tilde r \implies f(x,a,w_k(\omega))\in \tilde r \implies \text{Reach}(r,a,w)\cap \tilde r \neq \emptyset
% \end{align*}
%
\iffalse
In practice we are interested in the cases when the reachable set intersects $r$ or is a subset of $r$, rather than $\tilde r$ itself. It is evident that
%
\begin{align*}
    \text{Reach}(r,a,w_k(\omega))\subseteq r \implies f(x,a,w_k(\omega))\in \tilde r \implies \text{Reach}(r,a,w_k(\omega))\bigcap r \neq \emptyset.
\end{align*}
%
\fi
Then, \eqref{eq:tp_bounds} follows from mensurability of these events, which we prove in the supplementary material \cite{supp}.
%
\end{proof}
\fi
% 
%The proof of measurability of the events in \eqref{eq:tp_bounds} is provided in the supplementary material \ml{CITE}.
% Below, we leverage the samples from $\w$ to find data-driven bounds that contain the ones in \eqref{eq:tp_bounds}, and use these bounds to define the set of transition probability distributions of $\M$.
Below, we use the samples of $\w$ to derive data-driven bounds that contain the ones in \eqref{eq:tp_bounds}, and leverage them to define the set of transition probability distributions for $\M$.

\subsection{Data-Driven Transition Probability Bounds}
\label{sec:data_driven_bounds}

We now construct the sets $\Gamma_{s,a}$ of transition probability distributions of the abstraction by leveraging the samples from $\w$. Specifically, in our UMDP abstraction, the set $\Gamma_{s,a}$ for each state-action pair $(s,a)$ is defined by: (i) interval bounds on the probability of transitioning to each state $s'\in S$, (ii) interval bounds on the probability of transitioning to a cluster of states in $2^S$, and (iii) a bound on the probability of transitioning to states within the reachable set of the learned support of $P_W$.  Notably, (ii) and (iii) distinguish our construction from prior work, which relies solely on (i). As a result, our UMDP incorporates additional constraints, leading to tighter uncertainty sets. This yields less conservative probabilistic guarantees, as shown in the case studies. 
% but at higher computational cost as shown in the case studies.

% Note that, in contrast to other abstraction-based approaches like those that rely on \emph{interval MDP} (IMDP) abstractions, the sets $\Gamma_{s,a}$ in our UMDP are defined by more constraints than just intervals. A more complicated $\Gamma_{s,a}$ comes with higher computational cost for the abstraction and synthesis in our case. However, as we discuss in Section~\ref{sec:issues}, imposing these additional constraints is key to solve Problem~\ref{prob:Syntesis} via Markovian abstractions
% \ml{unclear what you mean by Markovian here!}
% , as we discuss in Section~\ref{sec:issues}.

% To learn the transition probability bounds in steps (i) and (ii), we use Proposition~\ref{prop:tp_bounds}, the samples from $\w$, and two well-known concentration inequalities.
% Specifically, the following proposition shows how to bound the probabilities in \eqref{eq:tp_bounds} using the samples from $\w$.

To derive the bounds in steps (i)-(iii), we use Proposition~\ref{prop:tp_bounds}, samples from $\w$, and two well-known concentration inequalities. The proposition below %Proposition~\ref{prop:hoeffding} 
enables us to compute bounds on transition probabilities between regions, which we later use to obtain bounds in (i)-(ii).

\begin{proposition}
\label{prop:hoeffding}
    Consider the set $\{\boldsymbol{\hat w}^{(i)}\}_{i=1}^N$ of i.i.d. samples from $\w$. Pick $r\in R$, $a\in A$, $\tilde r \in \mathcal{B}(\reals^n)$ and $\beta \in(0, 1)$, and let $\epsilon = \sqrt{\log(2/\beta)/(2N)} >0 $. Then, with confidence at least $1-\beta$ we have 
    that, for all $x\in r$,
    \begin{subequations}
        \label{eq:data_driven_bounds}
        \begin{align}
            T(\tilde r\mid x, a) &\ge \underline P(r, a)(\tilde r) := \frac{1}{N} \Big| \{i\in\{1,\ldots,N\}:\text{Reach}(r,a, \boldsymbol{\hat w}^{(i)})\subseteq \tilde r \} \Big| - \epsilon \label{eq:data_driven_bounds_lower}\\
            T(\tilde r\mid x, a) &\le \overline P(r, a)(\tilde r) := \frac{1}{N} 
            \Big| \{i\in\{1,\ldots,N\}:\text{Reach}(r,a, \boldsymbol{\hat w}^{(i)})\cap \tilde r \neq \emptyset\} \Big| + \epsilon. \label{eq:data_driven_bounds_upper}
        \end{align}
    \end{subequations}
\end{proposition}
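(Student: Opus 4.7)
The plan is to combine Proposition~\ref{prop:tp_bounds} with Hoeffding's concentration inequality, applied separately to the two empirical quantities in \eqref{eq:data_driven_bounds_lower} and \eqref{eq:data_driven_bounds_upper}, and then close the argument with a union bound.

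First, I would introduce the two families of Bernoulli random variables
\begin{align*}
X_i := \mathds{1}\bigl[\text{Reach}(r,a,\boldsymbol{\hat w}^{(i)}) \subseteq \tilde r\bigr],
\qquad
Y_i := \mathds{1}\bigl[\text{Reach}(r,a,\boldsymbol{\hat w}^{(i)}) \cap \tilde r \neq \emptyset\bigr],
\end{align*}
for $i \in \{1,\dots,N\}$. Since the $\boldsymbol{\hat w}^{(i)}$ are i.i.d.\ with distribution $P$, so are the $X_i$'s and the $Y_i$'s, with means $p_L := P(\{w : \text{Reach}(r,a,w) \subseteq \tilde r\})$ and $p_U := P(\{w : \text{Reach}(r,a,w) \cap \tilde r \neq \emptyset\})$, respectively. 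Measurability of these events, which is needed to make the above well-defined, is exactly the content of the appendix referenced in the footnote of Proposition~\ref{prop:tp_bounds}, so I would simply invoke it. The content of Proposition~\ref{prop:tp_bounds} then gives, uniformly in $x\in r$, the deterministic sandwich $p_L \le T(\tilde r\mid x, a) \le p_U$.

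Next, I would apply the one-sided Hoeffding inequality to each average. Concretely, for any $\epsilon>0$, the bounded i.i.d.\ sums satisfy
\begin{align*}
\Pr\!\Bigl[\tfrac{1}{N}\textstyle\sum_i X_i - p_L > \epsilon \Bigr] \le e^{-2N\epsilon^2},
\qquad
\Pr\!\Bigl[p_U - \tfrac{1}{N}\textstyle\sum_i Y_i > \epsilon \Bigr] \le e^{-2N\epsilon^2}.
\end{align*}
A union bound then yields that, with probability at least $1 - 2e^{-2N\epsilon^2}$, we have simultaneously $\tfrac{1}{N}\sum_i X_i - \epsilon \le p_L$ and $p_U \le \tfrac{1}{N}\sum_i Y_i + \epsilon$. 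Choosing $\epsilon = \sqrt{\log(2/\beta)/(2N)}$ makes the failure probability equal to $\beta$, so with confidence at least $1-\beta$ both inequalities hold. Chaining these with the deterministic sandwich from Proposition~\ref{prop:tp_bounds} produces exactly \eqref{eq:data_driven_bounds_lower}–\eqref{eq:data_driven_bounds_upper}, uniformly over $x \in r$.

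Since the uniformity in $x \in r$ is handled at the deterministic level by Proposition~\ref{prop:tp_bounds}, and since Hoeffding's inequality applies verbatim to Bernoulli i.i.d.\ samples, there is no real obstacle beyond bookkeeping. The only subtlety worth flagging is the factor of $2$ in $\log(2/\beta)$: it arises from the union bound over the two one-sided events, not from a two-sided Hoeffding applied to a single sum; this must be stated carefully so the constant in $\epsilon$ is justified. Measurability of the reach-set events is the one technical point I would not re-prove, deferring instead to the appendix pointer already in the paper.
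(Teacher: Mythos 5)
Your proposal is correct and follows essentially the same route as the paper's proof: the deterministic sandwich from Proposition~\ref{prop:tp_bounds}, one-sided Hoeffding bounds on the two empirical indicator averages (each at level $\beta/2$, which is exactly your $2e^{-2N\epsilon^2}=\beta$ accounting), and a union bound. Your explicit remark on the origin of the factor $2$ in $\log(2/\beta)$ is a slightly more careful presentation of the same step.
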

\begin{proof}
Consider the lower bound in \eqref{eq:tp_bounds}. Denote $E := \{w\in W : \text{Reach}(r,a,w)\subseteq \tilde r \}$ and note that
%
%\begin{align*}
    $T(\tilde r\mid x, a) \ge P_W(E) = \mathbb{E}_P[\mathds{1}_{E}(\w)]$
%\end{align*}
%
for all $x\in r$. Therefore applying Hoeffding's inequality to the random variable $\frac{1}{N}\sum_{i=1}^N\mathds{1}_{E}(\boldsymbol{\hat w}^{(i)})$ yields 
%
%\begin{align*}
    $P_W^N[P_W(E)  \ge \frac{1}{N}\sum_{i=1}^N\mathds{1}_{E}(\boldsymbol{\hat w}^{(i)}) - \epsilon] \ge 1-\beta/2$,
%\end{align*}
%
with $\epsilon = \sqrt{\log(2/\beta)/(2N)}$. Thus, the first expression in ~\eqref{eq:data_driven_bounds} holds 
\iffalse
%
\begin{align*}
    T(\tilde r\mid x, a)\ge P_W(E) \ge \frac{1}{N}\sum_{j=1}^N\mathds{1}_{E}(\boldsymbol{\hat w}^{(j)}) - \epsilon
\end{align*}
%
\fi
for all $x\in r$ with confidence $1-\beta/2$. Employing a similar argument, we obtain that the second expression in ~\eqref{eq:data_driven_bounds} also holds
\iffalse
%
\begin{align*}
    T(\tilde r\mid x, a)\le \frac{1}{N}\sum_{j=1}^N\mathds{1}_{E}(\boldsymbol{\hat w}^{(j)}) - \epsilon
\end{align*}
%
\fi
for all $x\in r$ with confidence $1-\beta/2$. Combining both results via the union bound, we obtain the result.
\end{proof}
%
%For practical reasons, instead of checking if $\text{Reach}(r,a,w_k(\omega^{(i)}))\subseteq \tilde r$, we check whether or not some set overapproximating $\text{Reach}(r,a,w_k(\omega^{(i)}))$ is contained in $r$. It is easy to verify that if this is the case, then $\text{Reach}(r,a,w_k(\omega^{(i)}))\subseteq \tilde r$, and thus we obtain a sound lower bound to $T(\tilde r\mid x, a)$.
%The same reasoning yields a lower bound on $T(\tilde r\mid x, a)$ with confidence $1-\beta/2$. Again, for practical reasons, we check whether or not some set overapproximating $\text{Reach}(r,a,w_k(\omega^{(i)}))$ intersects $r$, which is a necessary condition for $\text{Reach}(r,a,w_k(\omega^{(i)}))\bigcap \tilde r$, and thus we obtain a sound upper bound to $T(\tilde r\mid x, a)$.

%\ml{explain what this proposition allows you to do.  It takes care of transition probabilities between which regions?}

\noindent
% Proposition~\ref{prop:hoeffding} enables us to compute bounds in the probability of transitioning between regions $r$ and $\tilde{r}$ under action $a$.

\begin{remark}
    The complexity of computing the bounds in \eqref{eq:data_driven_bounds} is proportional to $N$, which is typically high to obtain tight bounds. To reduce this complexity, we cluster the $N$ samples from $\w$ into $N_c \ll N$ clusters, each with center $c_j$ and diameter $\phi_j$. Substituting the sets $\text{Reach}(r,a,\boldsymbol{\hat w}^{(j)})$ in \eqref{eq:data_driven_bounds} by $\{f(x,a,w) \in \reals^n : x\in r, \|w - c_j\| \le \phi_j/2)\}$, it is evident that Proposition~\ref{prop:hoeffding} still holds, with relaxed bounds. Note that this clustering induces a partition on $W$, allowing to overapproximate the sets $\{f(x,a,w) \in \reals^n : x\in r, \|w - c_j\| \le \phi_j/2)\}$ as shown by \cite{skovbekk2023formal}.
\end{remark}

%\subsection{Learning a Confidence Region for the Noise}

Next, we estimate the support of $P_W$ in (iii). Including this information into $\M$ tightens the sets $\Gamma_{s,a}$ of transition probability distributions, thus yielding a less conservative abstraction.
%Furthermore, it also helps reducing the size of the abstraction.
%\ml{motivate and explain why this is needed and how it improves the abstraction.}

\begin{proposition}[Confidence Region]
\label{prop:confidence}
Let $\hat c, \epsilon > 0$ $\beta_c >0$ and 
%
%\begin{align*}
    $N \ge \log{(1/\beta_c)}/\log{(1/(1-\epsilon_c}))$,
%\end{align*}
%
it holds, with a confidence greater than $1-\beta_c$ with respect to the random choice of $\{\boldsymbol{\hat w}^{(i)}\}_{i = 1}^N$, that 
%
%\begin{align*}
    $P(\{w\in W : \|w\| \le \hat c\}) \ge 1 - \epsilon_c$.
%\end{align*}
\end{proposition}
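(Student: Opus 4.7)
The plan is to recognize this as a classical order-statistics / scenario-approach result on empirical support coverage, and to bound the bad event by a simple product of marginal probabilities.

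First, I would reformulate the statement in terms of the one-dimensional random variable $Y := \|\w\|$ with CDF $F(t) := P(\{w \in W : \|w\| \le t\})$. The event of interest is $\{F(\hat c) \ge 1 - \epsilon_c\}$, and the complementary ``bad'' event is $B := \{F(\hat c) < 1 - \epsilon_c\}$. Define the critical threshold $t^\star := \inf\{t \ge 0 : F(t) \ge 1 - \epsilon_c\}$ (the $(1-\epsilon_c)$-quantile of $Y$). Using right-continuity of $F$, one has $F(t) < 1 - \epsilon_c$ for every $t < t^\star$, so the bad event is equivalent to $\{\hat c < t^\star\}$.

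Second, I would exploit the definition $\hat c = \max_i \|\boldsymbol{\hat w}^{(i)}\|$ to rewrite
\begin{align*}
P^N(B) = P^N\bigl(\hat c < t^\star\bigr) = P^N\!\left(\bigcap_{i=1}^N \{\|\boldsymbol{\hat w}^{(i)}\| < t^\star\}\right) = \prod_{i=1}^N P\bigl(\|\w\| < t^\star\bigr) \le (1-\epsilon_c)^N,
\end{align*}
where the second equality uses the i.i.d.\ property of the samples and the final inequality uses $P(\|\w\| < t^\star) = F(t^\star-) \le 1 - \epsilon_c$ (by definition of $t^\star$ as an infimum).

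Third, I would enforce $(1-\epsilon_c)^N \le \beta_c$ to conclude. Taking logarithms and noting that $\log(1-\epsilon_c) < 0$ gives the equivalent condition $N \ge \log(1/\beta_c)/\log(1/(1-\epsilon_c))$, which is exactly the stated sample complexity. Under this condition, $P^N(B) \le \beta_c$, so the good event $\{P(\{w : \|w\| \le \hat c\}) \ge 1 - \epsilon_c\}$ holds with confidence at least $1-\beta_c$.

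The main subtlety, and the step I would handle most carefully, is the measure-theoretic behavior at the quantile $t^\star$: if the distribution of $\|\w\|$ has an atom at $t^\star$, one must distinguish $F(t^\star-)$ from $F(t^\star)$ to justify the bound $P(\|\w\| < t^\star) \le 1-\epsilon_c$. Using the strict inequality $\hat c < t^\star$ characterizing $B$, together with the left-continuous quantity $F(t^\star-)$, resolves this cleanly and does not require $F$ to be continuous. Everything else is an elementary application of independence and a logarithmic rearrangement.
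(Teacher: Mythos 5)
Your proof is correct: the reduction to the one-dimensional variable $\|\w\|$, the characterization of the bad event via the $(1-\epsilon_c)$-quantile $t^\star$, the product bound $(1-\epsilon_c)^N$ from independence, and the careful handling of a possible atom at $t^\star$ via $F(t^\star-)$ are all sound, and the logarithmic rearrangement yields exactly the stated sample complexity. The paper itself gives no proof and simply cites Theorem 9.1 of the referenced text, whose standard argument is precisely the order-statistics/quantile reasoning you reconstructed, so your approach matches the intended one.
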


\iffalse
% Old proposition (was wrong)
\begin{proposition}[Confidence Region {\cite{tempo2013randomized}}]
%Let $(Y,\mathcal{B}(Y),P)$ be a probability space, with $Y\subseteq\mathbb{R}$, and consider the sequence $\{y^{(i)}\}_{i = 1}^N$ $P$-distributed i.i.d. random variables. 
Let $\hat c = \max\{\|\boldsymbol{\hat w}^{(1)}\|, \dots, \|\boldsymbol{\hat w}^{(N)}\|\}$.
%Let $h:Y\rightarrow \mathbb{R}_{\ge 0}$ be a measurable.
Then, for any $\epsilon_c, \beta_c >0$ and 
%
%\begin{align*}
    $N \ge \log{(1/\beta_c)}/\log{(1/(1-\epsilon_c}))$,
%\end{align*}
%
it holds, with a confidence greater than $1-\beta_c$ with respect to the random choice of $\{\boldsymbol{\hat w}^{(i)}\}_{i = 1}^N$, that 
%
%\begin{align*}
    $P(\{w\in W : \|w\| \le \hat c\}) \ge 1 - \epsilon_c$.
%\end{align*}
\end{proposition}
\fi
%
\iffalse
\begin{proof}
    Since $\|\cdot\|$ is a measurable function, a direct application of \cite{tempo2013randomized}[Theorem 9.1] to the random variables $\|y\|\, |y^{(1)}\|, \dots, \|y^{(N)}\|$ yields the
\end{proof}
\fi

        \begin{figure}[h]
            \includegraphics[width=0.4\textwidth]{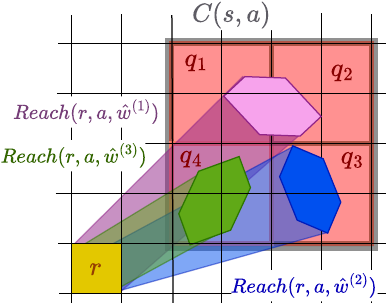}
            \caption{\small Illustration of the sets in Def.~\ref{def:robust_mdp}. $C(s,a) = \{q_1,q_2,q_3,q_4\}$, and each $q_i$ contains $4$ states. The probability that the successor state of $s=J(r)$ under action $a$ will be in $C(s,a)$ is higher than $1-\epsilon$. Note that the reachable sets corresponding to $\boldsymbol{\hat w}^{(2)}$ and $\boldsymbol{\hat w}^{(3)}$ are contained in $q_3$ and $q_4$, respectively, but no single region in the fine partition contains them completely.}
            \label{fig:figure}
        \end{figure}

We denote the learned confidence region for $w$ by $\widehat W := \{w\in W : \|w\| \le \hat c\}$, which contains at least $1-\epsilon_c$ probability mass from $P_W$ with a confidence greater than $1-\beta_c$. We also define $\widehat{\text{Post}}(s,a) := \{J(f(x,a,w))\in S : x\in J^{-1}(s), w\in \widehat W\}$ for each $s\in S$, $a\in A$ as the set of states of $\M$ that can be reached from region $J^{-1}(s)$ and for some disturbance $w\in\widehat W$.

We now have all the components needed to formally define our abstraction class. 
Intuitively, the abstraction relies on a two-layer discretization: a fine one represented by $S$ and a coarse one formed by clustering the elements of $S$ (see Figure~\ref{fig:figure}). 
Let $Q \subseteq 2^S$ represent this clustering, which is non-overlapping, i.e., $\bigcup_{q \in Q} q = S$ and $q \cap q' = \emptyset$ for all  $q \neq q' \in Q$. This clustering is crucial for obtaining non-zero lower-bound transition probabilities in \eqref{eq:data_driven_bounds_lower}, as $\text{Reach}(r, a, \hat{\w}^{(i)})$ often cannot be contained within a single small region but can be captured by a cluster of regions (see Figure~\ref{fig:figure}). %the supplementary material in \cite{supp}).
We highlight that we do not require $R$ and $Q$ having a specific shape, besides the considerations previously mentioned, and that varying $R$ and $Q$ yields a different abstraction. These shapes are therefore hyperparameters of our approach. Additionally, we leverage the learned support $\widehat W$ of the disturbance to impose the constraint that the successor state corresponding to a given state-action pair lies on some region with high probability. As described in Sections~\ref{sec:issues}, \ref{sec:issues2}, this constraint is key to make our approach work in practice. With this intuition, we formally define our abstraction as follows.

\begin{definition}[UMDP Abstraction]
    \label{def:umdp abstraction}
    % Let $Q(s,a)\subset 2^S$, where each $q'\in Q(s,a)$ is a set of aggregate states of the abstraction, or \emph{cluster}, such that its union $C(s,a) := \bigcup_{q'\in Q(s,a)} q'$ covers $\widehat{\text{Post}}(s,a)$. We define the UMDP abstraction of System~\eqref{eq:sys} as $\M = (S,A,s_0,\Gamma,AP,L)$, with, for all $s\in S$ and $a\in A$,
    Let $Q\subseteq 2^S$ be a non-overlapping clustering of $S$ and, for all $s\in S\setminus\{s_{|S|}\}$, $a\in A$, let $Q(s,a) \subseteq Q $ be the subset 
    that covers $\widehat{\text{Post}}(s,a)$, i.e., $\widehat{\text{Post}}(s,a)$ is contained in $C(s,a) := \bigcup_{q\in Q(s,a)} q$. We define the UMDP abstraction of System~\eqref{eq:sys} as $\M = (S,A,s_0,\Gamma,AP,L)$, with, $\forall s\in S\setminus\{s_{|S|}\}$ and $\forall a\in A$,
    \begin{align}
    \label{eq:Gamma}
        \Gamma_{s,a} := \big\{ & \gamma \in \mathcal{P}(S) : \underline P(r_s, a)(r_{s'})  \le \gamma(s') \le \overline P(r_s, a)(r_{s'}) \quad\;\;\; \forall s' \in C(s,a), \nonumber \\
        &\; \underline P(r_s, a)(r_{q'}) \le \sum_{s'\in q'}\gamma(s') \le \overline P(r_s, a)(r_{q'}) \:\; \forall q' \in Q(s,a), \!\!\sum_{s'\in C(s,a)} \!\!\!\! \gamma(s') \ge 1 - \epsilon_c \big\}, 
    \end{align}
where $\underline P, \overline P$ are defined in \eqref{eq:data_driven_bounds},
%
\iffalse
\begin{align}
\label{eq:conditions}
    \begin{split}
        \underline{P}(s,a)(s') &:= \frac{1}{N} |\{i\in\{1,\ldots,N\}:\text{Reach}(r_s,a, \boldsymbol{\hat w}^{(i)})\subseteq r_{s'} \}| - \epsilon\\
        \overline{P}(s,a)(s') &:= \frac{1}{N} |\{i\in\{1,\ldots,N\}:\text{Reach}(r_s,a, \boldsymbol{\hat w}^{(i)})\bigcap r_{s'} \neq \emptyset\}| + \epsilon,\\
        \underline{P}(s,a)(q') &:= \frac{1}{N} |\{i\in\{1,\ldots,N\}:\text{Reach}(r_s,a, \boldsymbol{\hat w}^{(i)})\subseteq \tilde r_{q'} \}| - \epsilon\\
        \overline{P}(s,a)(q') &:= \frac{1}{N} |\{i\in\{1,\ldots,N\}:\text{Reach}(r_s,a, \boldsymbol{\hat w}^{(i)})\bigcap \tilde r_{q'} \neq \emptyset\}| + \epsilon,
    \end{split}
\end{align}
%
\fi
% $r_{s'} = J^{-1}(s')$  for all $s'\in C(s,a)$, $\tilde r_{q'} = \bigcup_{s'\in q'} J^{-1}(s')$, for all $q'\in Q(s,a)$.
$r_{s'} = J^{-1}(s')$, and $r_{q'} = \bigcup_{s'\in q'} J^{-1}(s')$, and $\Gamma_{s_{|S|},a} = \{\delta_{s_{|S|}}\}$ for all $a\in A$, where $\delta_{s_{|S|}}$ denotes the Dirac measure located at $s_{|S|}$.
\end{definition}

\iffalse
\ig{Some results. Corollary maybe??}

Let $\M = (S,A,\Gamma,\text{AP},L)$ be a UMDP with
%
\begin{align}
\label{eq:conditions_rmdp}
    \sum_{s'\in S}\underline P(s,a)(s') \le 1 \quad &\\
    \sum_{s'\in S}\overline P(s,a)(s') \ge 1 \quad &\\
    \underline P(s,a)(q') \le \sum_{s'\in q'}\overline P(s,a)(s') \quad &\text{for all}\: q'\in Q(s,a)\\
    \overline P(s,a)(q') \ge \sum_{s'\in q'}\underline P(s,a)(s') \quad &\text{for all}\: q'\in Q(s,a)\\
    \sum_{s'\in C(s,a)} \overline P(s,a)(s') \ge 1 - \epsilon_c \quad&\\
    q' \subseteq C(s,a) \quad& \text{for all}\: q'\in Q(s,a)\\
    C(s,a) \supseteq \widehat{\text{Post}}(s,a), \quad&
\end{align}
%
for all $s\in S, a\in A$.

\fi

Note that, by making the unsafe state absorbing, we embed the safety part $\Box \neg \Prop_\text{unsafe}$ of $\varphi$ into $\M$, because the only paths of $\M$ that satisfy $\varphi'$ are those that remain in $X$. In Theorem~\ref{theorem:soundness}, we establish that the UMDP $\M$ is a sound abstraction of System~\eqref{eq:sys}, i.e., that $\M$ captures all $1$-step behaviors of System~\eqref{eq:sys}.

% \aabate{why a Lemma? is it used to prove something else?} \ig{Yes, we rely on this Lemma to prove Theorem \ref{thm:strategy_synthesis}, i.e., that strategy synthesis gives us a sound lower bound in the probability that System~\eqref{eq:sys} satisfies $\varphi$}

\begin{theorem}[Soundness of UMDP Abstraction]
\label{theorem:soundness}
    For all $s\in S\setminus\{s_{|S|}\}$, $a\in A$, $x\in J^{-1}(s)$, $s'\in S$ define $\gamma_x\in\mathcal{P}(S)$ as
    %
    %\begin{align*}
        $\gamma_x(s') := T(J^{-1}(s')\mid x,a)$ 
    %\end{align*}
    %
    for all $s'\in S$. Then, $\gamma_x\in \Gamma_{s,a}$ for all $s\in S\setminus\{s_{|S|}\}$, $a\in A$, with confidence of at least $1-\alpha$, where $\alpha = \beta_c + \big( \sum_{s\in S\setminus\{s_{|S|}\}, a\in A} |C(s,a)| + |Q(s,a)| \big)\beta$.
\end{theorem}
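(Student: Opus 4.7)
The plan is to verify, for each $s \in S\setminus\{s_{|S|}\}$, $a\in A$, and $x\in J^{-1}(s)$, that the candidate distribution $\gamma_x(\cdot) = T(J^{-1}(\cdot)\mid x, a)$ meets each of the three families of constraints defining $\Gamma_{s,a}$ in \eqref{eq:Gamma}, and then aggregate the associated confidence levels via a union bound. Observe first that $\gamma_x$ is indeed a probability distribution on $S$ because $R$ is a partition of $\reals^n$; hence condition $\gamma_x \in \mathcal{P}(S)$ in Definition~\ref{def:umdp abstraction} is automatic.

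For the per-state interval constraints, I would fix $s' \in C(s,a)$ and apply Proposition~\ref{prop:tp_bounds} with $\tilde r = J^{-1}(s')$ to sandwich $\gamma_x(s') = T(J^{-1}(s')\mid x,a)$ between $P(\{w : \text{Reach}(J^{-1}(s),a,w)\subseteq J^{-1}(s')\})$ and $P(\{w : \text{Reach}(J^{-1}(s),a,w)\cap J^{-1}(s') \neq \emptyset\})$. Proposition~\ref{prop:hoeffding}, invoked with tolerance $\beta$, then bounds these two probabilities by $\underline P(r_s, a)(r_{s'})$ and $\overline P(r_s, a)(r_{s'})$ respectively, uniformly in $x\in J^{-1}(s)$, with confidence $\ge 1-\beta$. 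The cluster-level constraints follow identically by taking $\tilde r = r_{q'} = \bigcup_{s'\in q'} J^{-1}(s')$ and using countable additivity, $\sum_{s'\in q'} \gamma_x(s') = T(r_{q'}\mid x, a)$, so Propositions~\ref{prop:tp_bounds} and \ref{prop:hoeffding} again deliver the desired two-sided bound with confidence $\ge 1-\beta$ per cluster.

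For the support-mass constraint $\sum_{s'\in C(s,a)} \gamma_x(s') \ge 1-\epsilon_c$, I would use the Confidence Region proposition to obtain $P(\widehat W) \ge 1-\epsilon_c$ with confidence $\ge 1-\beta_c$. By the definition of $\widehat{\mathrm{Post}}(s,a)$ and the inclusion $\widehat{\mathrm{Post}}(s,a) \subseteq C(s,a)$, for any $x\in J^{-1}(s)$ and $w\in \widehat W$ the successor $f(x,a,w)$ lies in $\bigcup_{s'\in C(s,a)} J^{-1}(s')$, so
\begin{equation*}
\sum_{s'\in C(s,a)} \gamma_x(s') \;=\; T\!\Bigl(\bigcup_{s'\in C(s,a)} J^{-1}(s') \,\Big|\, x, a\Bigr) \;\ge\; P(\widehat W) \;\ge\; 1-\epsilon_c,
\end{equation*}
on the same $1-\beta_c$ confidence event. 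Finally, a union bound over the $|C(s,a)|$ per-state applications and the $|Q(s,a)|$ per-cluster applications of Proposition~\ref{prop:hoeffding} (each costing $\beta$), summed over all $(s,a)$ with $s\neq s_{|S|}$, together with the single confidence-region event (costing $\beta_c$), yields total failure probability $\beta_c + \bigl(\sum_{s,a}(|C(s,a)|+|Q(s,a)|)\bigr)\beta = \alpha$, as claimed.

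The main subtlety I anticipate is ensuring that each application of Proposition~\ref{prop:hoeffding} holds \emph{uniformly in} $x\in J^{-1}(s)$: this uniformity is already baked into the statement of that proposition (the bounds depend only on $r$, $a$, $\tilde r$, and the sample, not on a specific $x$), so the point is really to invoke it correctly rather than to reprove uniformity. A secondary bookkeeping point is that the absorbing unsafe state $s_{|S|}$ is excluded from the union bound because $\Gamma_{s_{|S|},a} = \{\delta_{s_{|S|}}\}$ by construction, which matches the index range in $\alpha$.
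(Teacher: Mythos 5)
Your proposal is correct and follows essentially the same route as the paper's proof: per-state and per-cluster applications of Proposition~\ref{prop:hoeffding} (via Proposition~\ref{prop:tp_bounds} and additivity over the disjoint partition), the inclusion $\widehat{\text{Post}}(s,a)\subseteq C(s,a)$ combined with the confidence-region bound $P(\widehat W)\ge 1-\epsilon_c$ for the mass constraint, and a union bound over the $\sum_{s,a}(|C(s,a)|+|Q(s,a)|)$ learned intervals plus the single support event. The bookkeeping, including the uniformity in $x$ and the exclusion of the absorbing state $s_{|S|}$, matches the paper exactly.
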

%
%The proof of Theorem~\ref{theorem:soundness} is provided in the supplementary material~\cite{supp}.
\begin{proof}
    For clarity, let $r_s := J^{-1}(s)$ for all $s\in S$. % Let $s\in S\setminus\{s_{|S|}\}$ and $a\in A$%, and note that $\Gamma_{s_{|S|},a}$ does not depend on the samples of $\w$
    By Proposition~\ref{prop:hoeffding} we have that $\underline P(r_s,a)(r_{s'}) \le \gamma_x(s') \le \overline P(r_s,a)(r_{s'})$ with confidence $1-\beta$ pointwisely for all $s\in S\setminus\{s_{|S|}\}$, $a\in A$, $s'\in C(s,a)$. Note that, for all $s' \notin C(s,a)$, no interval constraints are learned. %: instead, these are implicit, in the form $\gamma_x(s')\in[0,1]$, which are trivially satisfied since $\gamma_x\in\mathcal{P}(S)$.
    Therefore, the total number of learned intervals for states $s'$ is $\sum_{s\in S\setminus\{s_{|S|}\}, a\in A} |C(s,a)|$. Using again Proposition~\ref{prop:hoeffding} with $\tilde r = \bigcup_{s'\in q'}r_{s'}$ and noting that $T(\tilde r \mid x, a) = \sum_{s'\in q'} T(r_{s'} \mid x, a)$ since the regions in the partition $R$ are disjoint, it follows that $\underline P(r_s,a)(r_{q'}) \le \sum_{s'\in q'}\gamma_x(s') \le \overline P(r_s,a)(r_{q'})$ with confidence $1-\beta$ pointwisely for all $q'\in Q(s,a)$, $s\in S\setminus\{s_{|S|}\}$, $a\in A$. This makes the total number of learned intervals for clusters $q'$ be $\sum_{s\in S\setminus\{s_{|S|}\}, a\in A} |Q(s,a)|$.
    \iffalse
    Furthermore, let $B := \{f(x,w)\in\mathbb{R}^n : w \in\widehat W\}$ and note that, in general, $f^{-1}(B) \supseteq \widehat W$. Then,
    %
    \begin{align*}
        P_W(f^{-1}(B)) \ge P_W(\widehat W) \ge 1-\epsilon_c.
    \end{align*}
    %
    Since, by Lemma... the last inequality holds with
    \fi
    Furthermore, since, by definition, $C(s,a) \supseteq \widehat{\text{Post}}(s,a)$, we also have that
    %
    % \begin{align*}
        % &\sum_{s'\in C(s,a)}\gamma_x(s') \ge \sum_{s'\in \widehat{\text{Post}}(s,a)}\gamma_x(s') = \sum_{s'\in \widehat{\text{Post}}(s,a)}T(J^{-1}(s')\mid x,a)
        % = T(\bigcup_{s'\in \widehat{\text{Post}}(s,a)} J^{-1}(s') \mid x,a)\\
        % &= P_W(\{w\in W : f(x,a,w) \in \bigcup_{s'\in \widehat{\text{Post}}(s,a)} J^{-1}(s')\}) = P_W(\{w\in W : J(f(x,a,w)) \in \widehat{\text{Post}}(s,a)\}) \ge P_W(\widehat W),
    % \end{align*}
    %
    $\sum_{s'\in C(s,a)}\gamma_x(s') \ge$ $\sum_{s'\in \widehat{\text{Post}}(s,a)}\gamma_x(s') = \sum_{s'\in \widehat{\text{Post}}(s,a)}T(r_{s'}\mid x,a)
    = T(\bigcup_{s'\in \widehat{\text{Post}}(s,a)} r_{s'} \mid x,a)
    = P_W(\{w\in W : f(x,a,w) \in \bigcup_{s'\in \widehat{\text{Post}}(s,a)} r_{s'}\}) = P_W(\{w\in W : J(f(x,a,w)) \in \widehat{\text{Post}}(s,a)\}) \ge P_W(\widehat W),$
    which is at least $1-\epsilon_c$ with confidence at least $1-\beta_c$. Then, it follows that the last constraint in \eqref{eq:Gamma} also holds with the same confidence. Combining this confidence with the $\sum_{s\in S\setminus\{s_{|S|}\}, a\in A} |C(s,a)| + |Q(s,a)|$ learned intervals via the union bound, we get that, with confidence at least $1-\alpha$, $\gamma_x$ fulfills all constraints in the definition of $\Gamma_{s,a}$ for all $s\in S\setminus\{s_{|S|}\}$, $a\in A$.
\end{proof}
Note that $|C(s,a)|$ and $|Q(s,a)|$ depend on the samples from $P_W$ trough $\widehat W$ (see Proposition~\ref{prop:confidence}). However, the expression for $\alpha$ in Theorem~\ref{theorem:soundness} follows from a union bound argument, which holds only if the number of transitions $\sum_{s\in S\setminus\{s_{|S|}\}, a\in A} |C(s,a)| + |Q(s,a)|$ is deterministic. We solve this issue by first estimating this number. Then, given $\alpha$ and $\beta_c$ and $\epsilon_c, N$, we solve for $\beta,\epsilon$, and obtain the corresponding abstraction $\M$. Finally, we check that the actual number of transitions does not exceed the estimated one, guaranteeing that $\M$ is sound by Theorem~\ref{theorem:soundness}.
\iffalse
% New corollary
\begin{corollary}
    \label{cor:sample_complexity}
    Given $\alpha \in (0,1)$, $\beta_c \in (0, \alpha)$ and $\epsilon, \tilde\epsilon_c \in (0,1)$, the sample complexity of obtaining a UMDP abstraction with confidence at least $1-\alpha$ is
    %\begin{align*}
        $N = \max\big\{\frac{\log(1/\beta)}{2\epsilon^2}, \frac{\log (1/\beta_c)}{2\tilde\epsilon^2} \big\}$,
    %\end{align*}
    %
    with $\beta = (\alpha - \beta_c)/\big(\sum_{s\in S\setminus\{s_{|S|}\}, a\in A} |C(s,a)| + |Q(s,a)|\big)$.
\end{corollary}

\fi

% Old corollary
\begin{corollary}
    \label{cor:sample_complexity}
    Given $\alpha \in (0,1)$ and $\epsilon, \epsilon_c \in (0,1)$, the sample complexity of obtaining a UMDP abstraction with confidence at least $1-\alpha$ is
    %\begin{align*}
        $N = \max\big\{\log( \frac{n_\text{learn}/\alpha}{2\epsilon^2}), \log(\frac{n_\text{learn}/\alpha}{\log{(1/(1-\epsilon_c}))}) \big\}$,
    %\end{align*}
    %
    with $n_\text{learn} = 1+ \sum_{s\in S\setminus\{s_{|S|}\}, a\in A} |C(s,a)| + |Q(s,a)|$.
\end{corollary}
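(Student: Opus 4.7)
The proposed proof is a direct bookkeeping exercise built on Theorem~\ref{theorem:soundness}: all of the randomness has already been analyzed, so the only remaining task is to invert each concentration bound for the specified confidence budget and combine them via the union bound. The plan is to allocate the total failure budget $\alpha$ evenly across the $n_\text{learn}$ probabilistic guarantees that appear in the proof of Theorem~\ref{theorem:soundness}, namely one confidence-region event plus one event per learned interval $\underline P(r_s,a)(r_{s'}) \le \gamma_x(s') \le \overline P(r_s,a)(r_{s'})$ for each $s'\in C(s,a)$ and per learned cluster interval $\underline P(r_s,a)(r_{q'}) \le \sum_{s'\in q'}\gamma_x(s') \le \overline P(r_s,a)(r_{q'})$ for each $q'\in Q(s,a)$.

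First, I would set $\beta := \alpha/n_\text{learn}$ and $\beta_c := \alpha/n_\text{learn}$ so that, by the union bound,
\[
\beta_c + \Bigl(\sum_{s\in S\setminus\{s_{|S|}\},\, a\in A} |C(s,a)| + |Q(s,a)|\Bigr)\beta \;=\; \frac{\alpha}{n_\text{learn}} + \frac{(n_\text{learn}-1)\alpha}{n_\text{learn}} \;=\; \alpha,
\]
which is exactly the failure budget required by Theorem~\ref{theorem:soundness}. Next, I would invert the Hoeffding tolerance expression $\epsilon = \sqrt{\log(2/\beta)/(2N)}$ from Proposition~\ref{prop:hoeffding} to obtain
$N \ge \log(2/\beta)/(2\epsilon^2) = \log(2 n_\text{learn}/\alpha)/(2\epsilon^2)$, so that every per-interval concentration holds at confidence $1-\beta$. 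Simultaneously, I would invert the scenario-style bound in the confidence-region proposition to obtain $N \ge \log(1/\beta_c)/\log(1/(1-\epsilon_c)) = \log(n_\text{learn}/\alpha)/\log(1/(1-\epsilon_c))$, guaranteeing that the support constraint holds at confidence $1-\beta_c$.

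Finally, I would take the maximum of the two required sample sizes, since both conditions must hold simultaneously for the UMDP abstraction to be valid at confidence $1-\alpha$. This yields the claimed expression
\[
N \;=\; \max\!\left\{\frac{\log(n_\text{learn}/\alpha)}{2\epsilon^2},\; \frac{\log(n_\text{learn}/\alpha)}{\log(1/(1-\epsilon_c))}\right\},
\]
matching (up to the constant inside the logarithm, which can be absorbed into the allocation of $\beta$) the expression in the statement. There is no real obstacle here beyond ensuring that the budget split is consistent with the $n_\text{learn}$ term used in Theorem~\ref{theorem:soundness}; the only judgment call is the equal-allocation scheme, which is standard and yields the tightest expression up to constants without requiring any structural assumption on the partition $R$ or the clustering $Q$.
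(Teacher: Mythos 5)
Your derivation is correct and is exactly the argument the paper intends: the corollary is stated without proof as an immediate consequence of Theorem~\ref{theorem:soundness}, and the only content is the equal split $\beta=\beta_c=\alpha/n_\text{learn}$ followed by inverting the Hoeffding tolerance and the scenario-style support bound and taking the maximum, which is what you do. The sole (cosmetic) discrepancy is the factor of $2$ inside the Hoeffding logarithm --- a strict inversion of $\epsilon=\sqrt{\log(2/\beta)/(2N)}$ gives $\log(2n_\text{learn}/\alpha)/(2\epsilon^2)$ rather than $\log(n_\text{learn}/\alpha)/(2\epsilon^2)$ --- and the corollary's displayed formula appears to have a typesetting error (the $2\epsilon^2$ and $\log(1/(1-\epsilon_c))$ belong outside the logarithms, as in your final expression), neither of which affects the substance of your argument.
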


%Since $\Gamma_{s,a}$ is non-empty with confidence at least $1-\alpha$, it is easy to verify that the conditions in \eqref{eq:conditions_rmdp} \ml{these conditions are commented out now.  Should we refer to \eqref{eq:data_driven_bounds}? Or say, the bounds in \eqref{eq:Gamma}??} are satisfied with the same confidence. \ig{This is needed for correctness of the algorithm. Put there}

% \input{Sections/naiive_imdp}

\subsection{Issues of Na\"ive Data-Driven IMDP Abstractions: Loose Abstraction}
\label{sec:issues}

As mentioned above, compared to the abstraction classes frequently used in the literature, namely, interval MDPs (IMDPs) \cite{lahijanian2015formal}, our UMDP abstraction differs in that the sets $\Gamma_{s,a}$ of transition probability distributions are defined by more constraints than just the intervals $[\underline P(r_s,a)(r_{s'}), \overline P(r_s,a)(r_{s'})]$ for all $s'\in S$. Here, we discuss why IMDPs are not a good abstraction choice in our setting because they do not capture the dynamics~\eqref{eq:sys} very tightly, and thus why a more complex $\Gamma$ is required. Then, in Section~\ref{sec:issues2} we explain how the strategy synthesis process often fails to return meaningful results for these na\"ive abstractions.

Let $\mathcal{I}$ be a UMDP abstraction of System~\eqref{eq:sys}, where $\Gamma$ is defined only by the first row of constraints in \eqref{eq:Gamma}. %only represented by the interval bounds $\underline P(r_s,a)(r_{s'}), \overline P(r_s,a)(r_{s'})$ for all $s'\in S$, making $\mathcal{I}$ an interval MDP (IMDP).
%For simplicity and without loss of generality, consider %the reach-avoid specification $\varphi := \Diamond\rm{goal}\land \Box( \neg\rm{unsafe})$. Consider also 
%that the state-space discretization was obtained uniformly gridding $X$. 
Note that the lower bound $\underline P(r_s,a)(r_{s'})$ is obtained using Expression~\eqref{eq:data_driven_bounds_lower}, which boils down to checking wether or not the the event $\text{Reach}(r_s,a,\boldsymbol{\hat w}^{(i)})\subseteq r_{s'}$ happens. However, this condition rarely takes place, since it requires the reachable set being smaller than the region $r_{s'}$ (see Figure~\ref{fig:figure}), which is uncommon in a big portion of the state-space unless (i) the system's dynamics are $1$-step contractive and (ii) the partition is "aligned" with the dynamics. As a consequence, when $s'$ is not a terminal state, e.g., the goal or unsafe regions, which are typically bigger than the rest, it very often happens that
$\underline P(r_s,a)(r_{s'}) = 0$ for all such states $s'$. However, the upper bound $\overline P(r_s,a)(r_{s'})$ behaves very differently: by \eqref{eq:data_driven_bounds_upper}, $\overline P(r_s,a)(r_{s'}) \ge \epsilon$ for all $s'\in S$. As a consequence, the set $\Gamma_{s,a}$ contains many spurious distributions, making $\I$ a very loose abstraction of System~\eqref{eq:sys}. In Section~\ref{sec:issues2} we describe how this looseness of $\I$ typically translates into poor strategy synthesis results.

\section{Strategy Synthesis}
\label{sec:synthesis}

% In this section, we focus on synthesizing a strategy for System~\eqref{eq:sys} and lower-bound the probability that the closed-loop system satisfies the \LTLf specification $\varphi$. 
% %\LTLf formula $\varphi$.
% To that end, we first synthesize a strategy $\sigma\in\Sigma$ for $\M$ that (robustly) maximizes the probability of satisfying $\varphi$, and then refine the obtained strategy into a strategy $\sigma_x$ of System~\eqref{eq:sys} in a way that the satisfaction probability bounds are preserved. 
%
\iffalse

%We begin by showing that strategy synthesis for $\M$ is equivalent to solving a robust maximal reachability problem in an augmented UMDP.

\fi
%

% We synthesize $\sigma$ via unbounded-horizon \emph{robust dynamic programming} (RDP) as described in \cite[Theorems 6.2, 6.6]{gracia2025efficient}. This strategy maximizes the probability satisfying $\varphi$ under an adversarial choice of the transition probability distributions from the sets $\Gamma_{s,a}$. %We then refine $\sigma^\varphi$ into a 
% By refining $\sigma$ into $\sigma_x$
%by letting $\sigma_x(\omega_x) = \sigma^\varphi(J(x), z_)$, where $x = \omega_x(|\omega_x|)$ and $z$ is the $|\omega_x|$-th state in the run on $\A$ that is
 % as shown in \ig{cite}, the following guarantees hold:

Here, we focus on synthesizing a strategy for System~\eqref{eq:sys} and provide a lower bound on the probability that the closed-loop system satisfies the \LTLf formula $\varphi$. We first show that standard synthesis procedures for general UMDP abstractions from the literature \cite{wolff2012robust, gracia2024data,gracia2025efficient}
also apply to our setting and then introduce a novel (tailored) algorithm that leverages the specific structure of our UMDP abstraction to reduce computational complexity. 

% The synthesis procedure is as follows. Given the UMDP abstraction $\M$ and an \LTLf specification $\varphi$, first translate $\varphi$ to its equivalent deterministic finite automaton $\mathcal{A}^\varphi$, and then construct product $\M^\varphi = \M \otimes \mathcal{A}^\varphi$.  
% The strategy $\sigma$ is synthesized using unbounded-horizon \emph{robust dynamic programming} (RDP) on $\M^\varphi$ following the approach in \cite[Theorems 6.2, 6.6]{gracia2025efficient}.
% This strategy maximizes the probability of satisfying $\varphi$ under an adversarial choice of the transition probability distributions from the sets $\Gamma_{s,a}$. 
% Then, $\sigma$ is refined into $\sigma_x$, as shown in \ig{cite}.  The following theorem establishes the guarantees of this procedure.
Strategy synthesis begins by translating $\varphi$ into its equivalent deterministic finite automaton (DFA) $\mathcal{A}^\varphi$ \cite{de2013linear} and constructing the product $\M^\varphi = \M \otimes \mathcal{A}^\varphi$. A strategy $\sigma^\varphi$ is then synthesized on $\M^\varphi$ via unbounded-horizon \emph{robust dynamic programming} (RDP) with a reachability objective \cite{wolff2012robust}, as detailed in \cite[Theorems 6.2, 6.6]{gracia2025efficient}. $\sigma^\varphi$ robustly maximizes the probability of satisfying $\varphi$ under adversarial choices of transition probabilities from the set $\Gamma^\varphi$. Finally, $\sigma^\varphi$ is refined into a strategy $\sigma_x$ for System~\eqref{eq:sys}.

We start by defining the DFA $\A^\varphi$.
\begin{definition}[DFA]
    Let $\varphi$ be an \LTLf formula defined over a set of atomic propositions $AP$. The \emph{deterministic finite automaton} (DFA) corresponding to $\varphi$ is a tuple $\A^\varphi = (Z,2^{AP},\delta,z_0,Z_F)$ where $Z$ is a finite set of states, $2^{AP}$ is a finite set of input symbols, $\delta:Z\times 2^{AP} \rightarrow Z$ is the transition function, $z_0\in Z$ is the initial state, and $Z_F\subseteq Z$ is the set of accepting states.
\end{definition}

Given a trace $\rho = \rho_0\rho_1 \dots \rho_{K} \in (2^{AP})^*$, a run $z = z_0z_1 \dots z_{K+1}$ is induced on $\A^\varphi$, where $z_{k+1} = \delta(z_k,\rho_{k})$ for all $k\in \{0,\dots,K\}$. By construction of $\A^\varphi$, trace $\rho$ satisfies $\varphi$ iff $z_{K+1} \in Z_F$ \cite{de2013linear}. Such a run is called \emph{accepting} for $\A^\varphi$. 

Next, we define the product UMDP $\M^\varphi$, which contains information about the (uncertain) transition probabilities of $\M$ and the transition function of $\A^\varphi$.
\begin{definition}[Product UMDP]
    Given UMDP $\M$ and DFA $\A^\varphi$, the product $\M\otimes\A^\varphi$ is another UMDP $\M^\varphi = (S^\varphi,A^\varphi,\Gamma^\varphi,s_0^\varphi, S_{F}^\varphi)$, where
   $S^\varphi = S\times Z$ %is the set of states
   , $A^\varphi = A$ %is the set of actions
   , $s_0^\varphi = (s_0,  \delta(z_0,L(s_0)) )$, %is the initial state, 
   $S_F^\varphi = S\times Z_F$ %is the set of accepting states, 
   ,
   and 
    $\Gamma^\varphi = \{\Gamma_{(s,z),a}^\varphi: (s,z)\in S^\varphi, a\in A^\varphi\}$ with 
       $
            \Gamma_{(s,z),a}^\varphi := \{\gamma^\varphi \in \mathcal{P}(S^\varphi) : \exists \gamma\in\Gamma_{s,a}\: \text{s.t.}\: \gamma^\varphi((s',z')) = \gamma(s') \: \text{with}\: z' = \delta(z,L(s')), \forall s'\in S\}.
       $ We denote a finite path of $\M^\varphi$ by $\omega^\varphi$ and the set of all such paths by $\Omega^\varphi$. We also let $\Sigma^\varphi$ and $\Xi^\varphi$ denote the sets of strategies and adversaries of $\M^\varphi$, respectively.
        % \ig{Maybe can be stated in a simpler way: $\Gamma_{(q,z),a}^\varphi$ is the set of distributions over $\{(q',z')\in Q^\varphi : z' = \delta(z,L(q'))\}$ and whose projection on $Q$ yields a distribution $\gamma\in\Gamma_{q,a}$}
\end{definition}
Intuitively, $\M^\varphi$ is a UMDP whose state is the product between the state spaces of $\M$ and $\A^\varphi$, and whose transition probability distributions combine information regarding the transitions of $\M$ and $\A^\varphi$. Specifically, the set $\Gamma_{(s,z),a}^\varphi$ corresponding to a given state-action pair $((s,z),a)$ contains probability distributions $\gamma^\varphi$ over the product space $S^\varphi$ such that their projections (pushforward measure) $\gamma$ onto the set of probability distributions $\mathcal{P}(S)$ belong to $\Gamma_{s,a}$. Conversely, each $\gamma^\varphi\in\Gamma_{(s,z),a}^\varphi$ is obtained by taking some $\gamma\in\Gamma_{s,a}$ and lifting it to the space $\mathcal{P}(S^\varphi)$ by taking into account the transition function of $\A^\varphi$.

Having obtained the product UMDP $\M^\varphi$, we synthesize a strategy $\sigma^\varphi$ which maximizes the probability of reaching set $S_F^\varphi$ under an adversarial choice of the transition probabilities of $\M^\varphi$ by the adversary. It can be proved \cite{wolff2012robust} that this strategy, when mapped to a strategy $\sigma\in\Sigma$ of $\M$, also maximizes the worst-case probability of $\M$ satisfying the \LTLf formula $\varphi$. %The reachability problem is solved via RDP on $\M^\varphi$. However, for simplicity, and given that $\M$ and $\M^\varphi$ are both UMDPs, we show in Proposition~\ref{prop:robust_value_iteration} how to solve an unbounded-horizon reachability problem on $\M$ instead of $\M^\varphi$.
Proposition~\ref{prop:robust_value_iteration} shows how to obtain the reachability probabilities.
\begin{proposition}[Robust Dynamic Programming {\cite[Theorem 6.2]{gracia2025efficient}}]
\label{prop:robust_value_iteration}
Given $s^\varphi\in S^\varphi$, define the optimal robust reachability probability as 
\begin{align}
\label{eq:max_reachability_probability}
\underline p(s^\varphi) :=\sup_{\sigma^\varphi\in\Sigma^\varphi}\inf_{\xi^\varphi\in\Xi^\varphi} Pr_{s^\varphi}^{\policy^\varphi,\xi^\varphi}(\{\omega^\varphi\in\Omega^\varphi : \exists k\in\naturals\cup \{0\} \:\text{s.t.}\: \omega^\varphi(k)\in S_F^\varphi\}).
\end{align}
Consider also the recursion
\begin{align}
\label{eq:robust_value_iteration_lower_bound}
    \underline p^{k+1}(s^\varphi) = 
        \begin{cases}
             1 &\text{if }\:s^\varphi \in S_F^\varphi\\ \max\limits_{a\in 
             A^\varphi}\min\limits_{\gamma\in\Gamma_{s^\varphi,a}^\varphi} \sum\limits_{s^{\varphi\prime}\in 
             S^\varphi}\gamma(s^{\varphi\prime})\underline p^{k}(s^{\varphi\prime}) &\text{otherwise},
             \end{cases}
\end{align}
for all $k\in\naturals\cup\{0,\infty\}$, with initial condition $\underline p^{0}(s^\varphi) = 1$ for all $s^\varphi\in S_F^\varphi$ and $0$ otherwise. Then, recursion \eqref{eq:robust_value_iteration_lower_bound} converges pointwisely to $\underline p$
.
\end{proposition}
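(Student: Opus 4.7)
The plan is to follow the standard robust dynamic programming argument for $(s,a)$-rectangular uncertainty sets, adapted to a reachability objective as in \cite{iyengar2005robust,wolff2012robust} and \cite[Theorem~6.2]{gracia2025efficient}. The key structural property is that $\Gamma^\varphi$ is $(s,a)$-rectangular by construction of the product UMDP: the uncertainty at distinct state-action pairs is decoupled, which underpins both a min-max equality and the existence of stationary optimal strategies and adversaries.

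I would first define the robust Bellman operator $T\colon [0,1]^{S^\varphi} \to [0,1]^{S^\varphi}$ by the right-hand side of \eqref{eq:robust_value_iteration_lower_bound}, so that $\underline p^{k+1} = T\underline p^k$. Then $T$ is monotone with respect to the pointwise order, since both $\max_{a\in A^\varphi}$ and $\min_{\gamma\in\Gamma^\varphi_{s^\varphi,a}}$ (with nonnegative weights $\gamma$) preserve inequalities. Since $\underline p^1 \ge \underline p^0$ by direct inspection—accepting states stay at $1$, and non-accepting states move from $0$ to a nonnegative value—monotonicity of $T$ propagates to $\underline p^{k+1}\ge \underline p^k$ for all $k$. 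The iterates are uniformly bounded above by $1$, so the monotone convergence theorem yields a pointwise limit $\underline p^\infty$, and continuity of the inner minimum over the compact polytope $\Gamma^\varphi_{s^\varphi,a}$ combined with the finiteness of $A^\varphi$ ensures $\underline p^\infty$ is a fixed point of $T$.

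Next I would identify $\underline p^\infty$ with $\underline p$. A straightforward induction on $k$ shows that $\underline p^k(s^\varphi)$ equals the optimal robust probability of reaching $S_F^\varphi$ in at most $k$ steps, which is a lower bound on the quantity in \eqref{eq:max_reachability_probability}, giving $\underline p^\infty \le \underline p$. For the reverse inequality, I would extract a stationary deterministic strategy $\sigma^{\varphi*}$ by selecting at each state an action attaining the maximum in the fixed-point equation for $\underline p^\infty$; by $(s,a)$-rectangularity, the worst-case adversary against $\sigma^{\varphi*}$ decomposes state-by-state into the minimizers of that equation, so the induced Markov chain reaches $S_F^\varphi$ from $s^\varphi$ with probability at least $\underline p^\infty(s^\varphi)$, yielding $\underline p \ge \underline p^\infty$.

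The main obstacle is that the reachability Bellman operator admits multiple fixed points—trivially, the constant $1$ function is one—so convergence to \emph{some} fixed point is not enough. One must show that the initialization $\underline p^0 = \mathds{1}_{S_F^\varphi}$ forces the iteration toward the \emph{least} fixed point above this indicator, and that this least fixed point coincides with $\underline p$. This step relies on the absorbing nature of $S_F^\varphi$ in $\M^\varphi$ and on $(s,a)$-rectangularity to justify the exchange between $\sup_{\sigma^\varphi}$, $\inf_{\xi^\varphi}$, and the limit in $k$ that appears when passing from finite-horizon to infinite-horizon reachability values; standard arguments from robust MDPs (see \cite[Theorem~6.2]{gracia2025efficient}) then close the gap.
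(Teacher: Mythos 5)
The paper does not actually prove this proposition: it is imported wholesale from the cited reference (Theorem~6.2 of \cite{gracia2025efficient}), so there is no in-paper argument to compare against. Judged on its own terms, your architecture — monotone Bellman operator $T$, iterates increasing from $\underline p^0=\mathds{1}_{S_F^\varphi}$ and bounded by $1$, continuity of $\gamma\mapsto\sum_{s^{\varphi\prime}}\gamma(s^{\varphi\prime})p(s^{\varphi\prime})$ over the compact polytope giving a fixed point in the limit, and identification of $\underline p^k$ with the $k$-step robust reachability value via $(s,a)$-rectangularity — is the standard and correct skeleton.

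The genuine gap is that your two "directions" are the same inequality. The induction on $k$ gives $\underline p^k\le\underline p$ and hence $\underline p^\infty\le\underline p$. Your "reverse inequality" paragraph extracts a maximizing stationary strategy $\sigma^{\varphi*}$ and argues it secures at least $\underline p^\infty$ against the worst adversary; that yields $\underline p\ge\inf_{\xi^\varphi}Pr^{\sigma^{\varphi*},\xi^\varphi}\ge\underline p^\infty$, i.e.\ $\underline p^\infty\le\underline p$ again. What is never established is $\underline p\le\underline p^\infty$, and this is exactly the direction that resolves the multiple-fixed-point worry you correctly raise at the end. To close it one must exhibit, for \emph{every} strategy, an adversary capping the infinite-horizon reachability probability at $\underline p^\infty$: take the stationary adversary $\xi^{\varphi*}$ selecting $\gamma^*_{s^\varphi,a}\in\arg\min_{\gamma\in\Gamma^\varphi_{s^\varphi,a}}\sum_{s^{\varphi\prime}}\gamma(s^{\varphi\prime})\underline p^\infty(s^{\varphi\prime})$ (admissible, since adversaries here may depend on state, action, and time). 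The fixed-point equation gives $\sum_{s^{\varphi\prime}}\gamma^*_{s^\varphi,a}(s^{\varphi\prime})\underline p^\infty(s^{\varphi\prime})\le\underline p^\infty(s^\varphi)$ for $s^\varphi\notin S_F^\varphi$ and any $a$, so along the induced chain the process $\underline p^\infty(\omega^\varphi(k\wedge\tau))$, with $\tau$ the hitting time of $S_F^\varphi$, is a bounded supermartingale, and optional stopping gives $Pr(\tau<\infty)\le\underline p^\infty(s^\varphi)$; taking the supremum over strategies yields $\underline p\le\underline p^\infty$. This is the only step where the initialization and the absorbing structure of $S_F^\varphi$ are actually used, so deferring it to ``standard arguments'' leaves the proof incomplete rather than merely terse.
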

Having obtained the reachability probabilities, we obtain a memoryless strategy $\sigma^\varphi$ that attains the probability in \eqref{eq:max_reachability_probability} for all states by following the procedure in \cite[Theorem 6.6]{gracia2025efficient}, and then map it to a finite-memory strategy $\sigma_x$ of System~\eqref{eq:sys} as described in \cite[Section 6.4]{gracia2025efficient}. The following theorem %, whose proof is given in the Supplementary Material,
ensures that satisfaction probability bounds are preserved under this procedure, thus solving Problem~\ref{prob:Syntesis}. 
% The following theorem establishes the guarantees of this procedure.

% To synthesizing a strategy for System~\eqref{eq:sys} and lower-bound the probability that the closed-loop system satisfies the \LTLf specification $\varphi$. 
%\LTLf formula $\varphi$.
% To that end, we first synthesize a strategy $\sigma\in\Sigma$ for $\M$ that (robustly) maximizes the probability of satisfying $\varphi$, and then refine the obtained strategy into a strategy $\sigma_x$ of System~\eqref{eq:sys} in a way that the satisfaction probability bounds are preserved. 

\begin{theorem}[Strategy Synthesis through Product UMDP]
\label{thm:strategy_synthesis}
    Let $\sigma^\varphi$ and $\underline p^\varphi$ be respectively the optimal strategy and the lower bound in the probability of satisfying $\varphi$ obtained via RDP on $\M^\varphi$. %Denote by $S^\varphi_0$ the set of initial states of $\M^\varphi$ and by $\text{Lift}: S \to S^\varphi_0$ the function that maps states of $\M$ to $S^\varphi_0$. 
    Furthermore, let $\sigma_x$ be the strategy obtained by refining $\sigma^\varphi$ to System~\eqref{eq:sys}. Then, with confidence $1-\alpha$,
    %
    % \begin{align*}
        % Pr_{x}^{\sigma^\varphi}[\pathX \models \varphi \land \Globally \neg \Prop_u] \ge \underline p^\varphi(\text{Lift}(s)),
    % \end{align*}
    %
    $Pr_{x}^{\sigma^\varphi}[\pathX \models \varphi] \ge \underline p^\varphi((s,z))$
    for all $x\in X$, where $s = J(x)$, $z = \delta(z_0,L(s))$.
\end{theorem}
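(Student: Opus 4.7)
The plan is to combine the soundness of the abstraction (Theorem~\ref{theorem:soundness}) with the robustness property of the RDP-based strategy synthesis on $\M^\varphi$. The key conceptual point is that the true (continuous) closed-loop behavior of System~\eqref{eq:sys} under $\sigma_x$ corresponds, in the product space, to one particular adversary choice inside $\Gamma^\varphi$, so the worst-case value $\underline p^\varphi$ obtained by RDP is automatically a lower bound on the true satisfaction probability.

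First, I would invoke Theorem~\ref{theorem:soundness}: with confidence at least $1-\alpha$, for every $s \in S\setminus\{s_{|S|}\}$, every $a \in A$ and every $x \in J^{-1}(s)$, the pointwise distribution $\gamma_x(s') = T(J^{-1}(s') \mid x, a)$ belongs to $\Gamma_{s,a}$. All statements below are made conditional on this high-confidence event. Lifting this to the product via the DFA transition function $\delta$, for every $(s,z) \in S^\varphi$, every $a \in A^\varphi$ and every $x \in J^{-1}(s)$, the distribution $\gamma_x^\varphi((s',z')) = \gamma_x(s') \mathds{1}\{z' = \delta(z,L(s'))\}$ lies in $\Gamma_{(s,z),a}^\varphi$ by the definition of the product UMDP.

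Second, I would explicitly describe the refined strategy $\sigma_x$ as in \cite[Section~6.4]{gracia2025efficient}: it maintains a DFA state $z_k$ that is updated through $z_{k+1} = \delta(z_k,L(\x_k))$, and applies the memoryless action $\sigma^\varphi((J(\x_k),z_k))$. This construction ensures that the joint process $((J(\x_k),z_k))_k$ evolves as a Markov chain whose one-step transitions are exactly the distributions $\gamma_{\x_k}^\varphi$ selected above; i.e., the closed-loop system induces a specific (time-dependent) adversary $\xi^\varphi_\text{true} \in \Xi^\varphi$ acting on $\M^\varphi$ under the strategy $\sigma^\varphi$. By the correctness of the DFA product, the event $\{\pathXbold \models \varphi\}$ under $P_x^{\sigma_x}$ is in bijection with the event $\{\omega^\varphi(k) \in S_F^\varphi \text{ for some } k\}$ under $Pr_{(s,z)}^{\sigma^\varphi,\xi^\varphi_\text{true}}$, where $s=J(x)$ and $z=\delta(z_0,L(s))$.

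Third, I would invoke the robustness of $\underline p^\varphi$: by Proposition~\ref{prop:robust_value_iteration} and the construction of $\sigma^\varphi$ in \cite[Theorem~6.6]{gracia2025efficient}, for every adversary $\xi^\varphi \in \Xi^\varphi$ the reachability probability of $S_F^\varphi$ from $(s,z)$ under $(\sigma^\varphi,\xi^\varphi)$ is at least $\underline p^\varphi((s,z))$. Applying this with $\xi^\varphi = \xi^\varphi_\text{true}$ and using the bijection from the previous step yields $Pr_x^{\sigma_x}[\pathXbold \models \varphi] \ge \underline p^\varphi((s,z))$, which is what we wanted, modulo the confidence $1-\alpha$ carried from Theorem~\ref{theorem:soundness}.

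The main obstacle I expect is the second step, namely formally arguing that the \emph{time-dependent} pointwise distributions $\gamma_{\x_k}^\varphi$ induced by the uncountable continuous dynamics really do assemble into a legitimate adversary of $\M^\varphi$. This requires (i) measurability of the kernel $x \mapsto \gamma_x^\varphi$ so that the closed-loop probability measure is well defined, (ii) checking that membership in $\Gamma_{(s,z),a}^\varphi$ only depends on the abstract state $s=J(x)$ and not on the particular $x$ in the cell, so that one can collapse the continuous kernel into a valid per-$(s,z,a,k)$ distribution of $\Xi^\varphi$, and (iii) confirming that the refinement procedure of \cite[Section~6.4]{gracia2025efficient} applies verbatim to our richer $\Gamma$ defined by Definition~\ref{def:umdp abstraction}, since $\sigma^\varphi$ was shown memoryless only for the reachability objective on the product. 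All of these follow from standard arguments, but together they constitute the bulk of the technical care needed.
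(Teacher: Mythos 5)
Your proposal is correct and follows essentially the same route as the paper: the paper simply condenses your second and third steps into a citation of a standard abstraction-correctness result (\cite[Theorem 2]{jackson2021formal}), and then, exactly as you do, combines it with Theorem~\ref{theorem:soundness} to carry the $1-\alpha$ confidence onto the final bound. Your explicit unpacking of why the continuous closed loop induces an adversary of $\M^\varphi$ (and the measurability caveats you flag) is precisely the content that the cited theorem supplies.
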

\begin{proof}
    Assume $\M$ is a correct abstraction of System~\eqref{eq:sys}. Then, it follows that \cite[Theorem 2]{jackson2021formal} $Pr_{x}^{\sigma^\varphi}[\pathX \models \varphi] \ge \underline p^\varphi((s,z)),$
    for all $x\in X$, where $s = J(x)$, $z = \delta(z_0,L(s))$. Since by Theorem \ref{theorem:soundness} this assumption holds with confidence not smaller than $1-\alpha$, the implication also holds with the same confidence, which concludes the proof.
\end{proof}

\subsection{Issues of Na\"ive Data-Driven IMDP Abstractions: Overly Conservative Solution to Problem~\ref{prob:Syntesis}}
\label{sec:issues2}

In this subsection we show that strategy synthesis often yields poor results if the abstraction is a na\"ive IMDP obtained as described in Section~\ref{sec:issues}.

Consider the IMDP abstraction $\I$ of Section~\ref{sec:issues}, and the strategy synthesis process for a simple reachability specification, which is carried out by applying Proposition~\ref{prop:robust_value_iteration} on $\I$ with goal set $S_\text{goal}$. For the sake of clarity, we depict this setup in Figure~\ref{fig:IMDP_example}(a). At iteration $k$, denote by $S_k^0$ the set of states with zero value function (black states in Figure~\ref{fig:IMDP_example}(a)). %Since the initial condition of RDP is the value function $\underline p^0: S\rightarrow [0, 1]$, such that $\underline p^0(s) := 1$ if $L(s) = \rm{goal}$ and $0$ otherwise, 
Since Recursion~\eqref{eq:robust_value_iteration_lower_bound} starts with an initial value function that is zero for all $s\notin S_\text{goal}$, it easy to check that $|S_k^0| \approx |S|$ during the first iterations. Due to the form of the transition probability bounds of $\I$, descried in Section~\ref{sec:issues}, %Denote by $\widetilde S_k\subseteq S$ the set of abstract states $s$ such that $\underline P(r_s,a)(r_{s'}) = 0$ for all $s'\in S\setminus S_k$, and note that, in general, $|\widetilde S| \approx |S|$. Now, by construction of $\I$,
the adversary $\xi$ is allowed to pick a distribution $\gamma$ over $S$ that assigns probability at least $\epsilon$ of transitioning to each $s'\in S_k^0$, and that, if $S_k^0$ is big enough, the total probability of transitioning to these states adds up to $1$, and thus $s'\in S_k^0$ with probability one. As a result, the value function at the iteration $k+1$ is again zero for all $s\in S_k^0$, and therefore RDP terminates, yielding a vacuous satisfaction probability on a big region of the state space. We empirically demonstrate this issue on case study $\#4$ in Table~\ref{tab:case_studies}, whose results we plot in Figure~\ref{fig:IMDP_example}.

\begin{figure}[b]
    \begin{minipage}[b]{.3\linewidth}
        \centering
        \includegraphics[width=\linewidth]{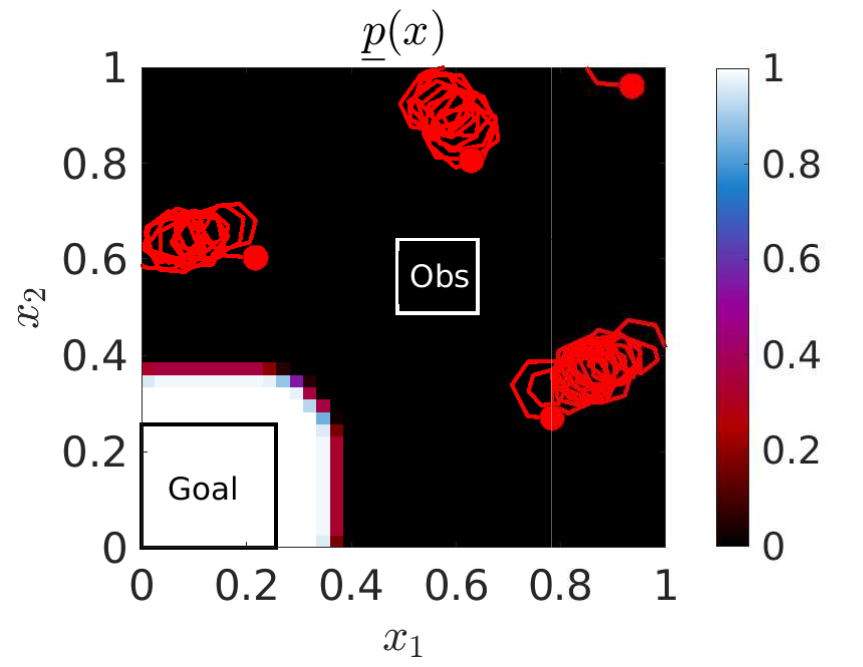}
        \subcaption{Na\"ive IMDP}
    \end{minipage}
    \hfill
    \begin{minipage}[b]{.3\linewidth}
    \centering
        \includegraphics[width=\linewidth]{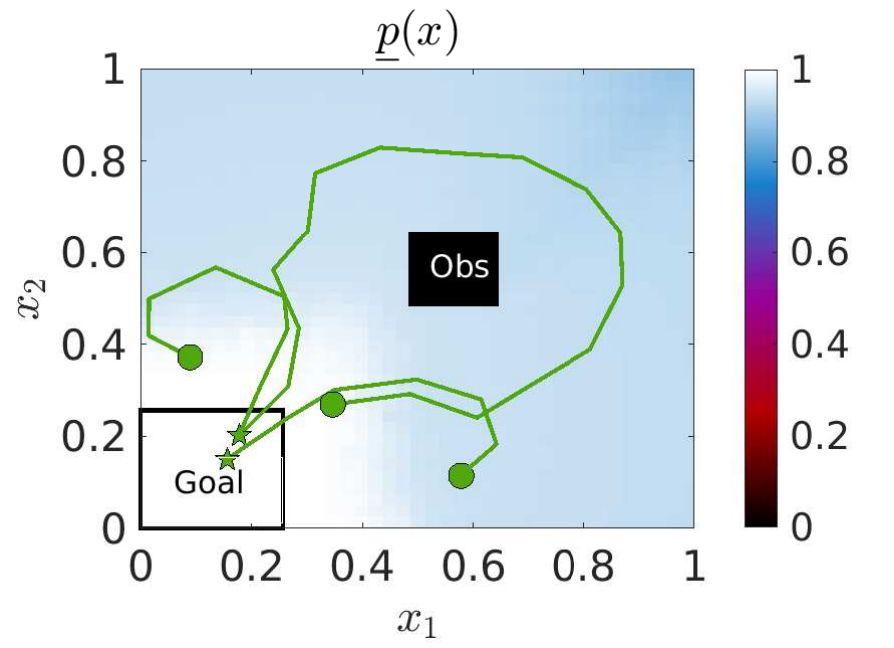}
        \subcaption{IMDP with learned support}
    \end{minipage}
    \hfill
    \begin{minipage}[b]{.3\linewidth}
    \centering
        \includegraphics[width=\linewidth]{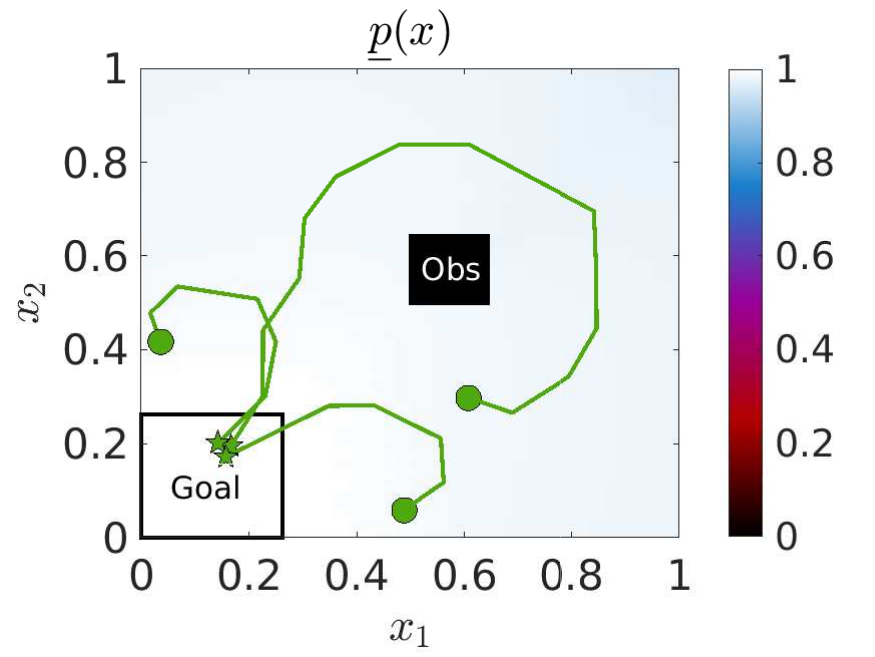}
        \subcaption{UMDP \eqref{def:umdp abstraction}}
    \end{minipage}
    \caption{Empirical demonstration of the issues of na\"ve IMDP abstractions described in Sections~\ref{sec:issues} and \ref{sec:issues2} The figures show the probabilistic guarantees that the unicycle system in case study $\#4$ of Table~\ref{tab:case_studies} satisfies the reach-avoid specification $\varphi_1$ as a function of the initial position and the choice of the abstraction, for the most favorable heading angle. The abstraction choices are:  (a) a na\"ive IMDP, (b) an IMDP with an additional constraint about the learned support of $P_W$ or, equivalently, a UMDP without the constraints that involve clusters in \eqref{eq:Gamma}, (c) a UMDP as in Definition~\ref{def:umdp abstraction}. In all cases, the number of samples used to construct the abstraction was of $N=10^6$. Furthermore, increasing the number of samples to $10^7$ sowed almost identical results when using a na\"ive IMDP abstraction. The green and red lines are trajectories of the unicycle system in closed loop with the synthesized strategy: in green, the ones that satisfy the specification. In red the ones that do not.}
    \label{fig:IMDP_example}
\end{figure}

Note that irrespective of how small $\epsilon>0$ is, refining the abstraction eventually leads to this issue, as it increases the number of states of the abstraction. Furthermore, given a constant discretization granularity, the size of $S_k^0$ is exponential in the dimension of System~\eqref{eq:sys}, which implies that the value of $\epsilon$ required to avoid this issue typically requires an impractical number $N$ of samples from $\w$.

\subsection{Tailored Synthesis Algorithm}

% Here, we introduce an algorithm to efficiently compute the choice of the adversary in RDP. We highlight that the construction of product $\M^\varphi$ preserves the unique structure of our UMDP $\M$ as shown in the supplementary material \ml{CITE}. In consequence, here we describe how our proposed algorithm works in the simpler setting of a reachability problem on $\M$, instead of on $\M^\varphi$.
\begin{algorithm}[t]\small
 \caption{$2$-layer $O$-maximization}\label{alg:2O}
 \vspace{-4mm}
\begin{multicols}{2}
\begin{algorithmic}[1]
    \Require $\M,s\in S\setminus\{s_{|S|}\}, a\in A, \underline p^k$
    \Ensure $\gamma$
    % \textbf{Input:} $\M,s\in S, a\in A, \underline p^k$, \textbf{Output:} $\gamma$
    \State Sort $\text{Post}(s,a)$ according to $\{\underline p^k(s')\}_{s'\in \text{Post}(s,a)}$ in increasing order
    \State $\gamma(s') \gets \underline P(s,a)(s')$ for all $s'\in \text{Post}(s,a)$
    \State $\gamma(s') \gets 0$ for all $s'\notin \text{Post}(s,a)$
    \State $\gamma(q') \gets \sum_{s'\in q'} \gamma(s')$ for all $q'\in Q(s,a)$
    \State $M \gets 1 - \sum_{s'\in S} \gamma(s')$
    \For{$q'\in Q(s,a)$}
        \State $m \gets \underline P(s,a)(q') - \gamma(q')$
        \If{$m >0$}
            \For{$s'\in q'$}
                \State $g \gets \min\{m, \overline P(s,a)(s') - \gamma(s')\}$
                \State $\gamma(s') \gets \gamma(s') + g$, $\gamma(q') \gets \gamma(q') + g$
                % \State $\gamma(q') \gets \gamma(q') + g$
                \State $m \gets m - g$, $M \gets M - g$
                % \State $M \gets M - g$
                %\If{$m == 0$}
                %    \State \textbf{break}
                %\EndIf
            \EndFor
        \EndIf
    \EndFor
    \iffalse
    %\State $M \gets M - \sum_{q'\in Q} \gamma(q')$
    \For{$q'\in Q$}
    \State $M \gets M - \gamma(q')$
    \For{$s'\in q'\bigcup \text{Post}(s,a)$}
    \State $\gamma(s') \gets \min\{\gamma(q'), \overline P(s,a)(s')\}$
    \State $\gamma(q') \gets \gamma(q') - \gamma(s')$
    \If{$\gamma(q') == 0$}
    \State \textbf{break}
    \EndIf
    \EndFor
    \EndFor
    \fi
    %\State $\gamma(q') \gets \underline P(s,a)(q')$ for all $q'\in Q$
    \For{$s'\in \text{Post}(s,a)$}
    \State $q' \gets$ get cluster $q'$ such that $s'\in q'$
    \If{$q' \neq \emptyset$}
    %\State $b \gets \min\{\overline P(s,a)(q') - \gamma(q'), \overline P(s,a)(s') - \gamma(s')\}$
    \State $m \gets \min\{M, \overline P(s,a)(q') - \gamma(q'),$ $ \qquad \overline P(s,a)(s') - \gamma(s')\}$
    \State $\gamma(q') \gets \gamma(q') + m$
    \Else
    \State $m \gets \min\{M, \overline P(s,a)(s') - \gamma(s')\}$
    \EndIf
    \State $\gamma(s') \gets \gamma(s') + m$, $M \gets M - m$
    % \State $M \gets M - m$
    %\If{$M == 0$}
    %\State \textbf{break}
    %\EndIf
    \EndFor
    %\State $\underline p^{k+1}(s) = \sum_{s'\in S}\gamma(s')\underline p^k(s')$
\end{algorithmic}
\end{multicols}
\vspace{-4mm}
\end{algorithm}

We introduce a synthesis algorithm tailored for UMDPs as such in Definition~\ref{def:umdp abstraction}, which exploits their structure for greater efficiency.  The algorithm 
draws inspirations from IMDP value iteration \cite{givan2000bounded} to speed up the computation of the optimal adversaries in RDP, i.e., the inner minimization problem in Equation~\eqref{eq:robust_value_iteration_lower_bound}, which is typically formulated a linear program in standard UMDPs. We note that this approach is applicable to our product UMDP $\M^\varphi$ because it retains the same structure of $\M$ %, as shown in Proposition~\ref{prop:product_bounds}.
In consequence and, for simplicity, we then describe the algorithm in the context of a reachability problem on $\M$ rather than on $\M^\varphi$.

Proposition~\ref{prop:product_bounds} shows that $\M$ and $\M^\varphi$ have the same structure.

\begin{proposition}
\label{prop:product_bounds}
    For each $s^\varphi := (s,z)\in (S\setminus\{s_{|S|}\})\times Z$, $a\in A^\varphi$, $q'\in Q(s,a)$, let $q^{\varphi\prime} := \{(s',z')\in q'\times Z : z' = \delta(z, L(s'))\}$, and define $Q^\varphi(s^\varphi,a)$ as the set of all these $q^{\varphi\prime}$. %Denote also by $\text{Proj}_S: S^\varphi \rightarrow S$ the projection operator onto $S$.
    Then, $\Gamma^\varphi_{s^\varphi,a}$ is equivalently expressed as
    \begin{align}
        \nonumber
        \Gamma_{s^\varphi,a}^\varphi = \Big\{ \gamma^\varphi \in & \mathcal{P}(S^\varphi) :\\
        \nonumber
        &\underline P(s^\varphi, a)((s',z'))  \le \gamma^\varphi((s',z')) \le \overline P(s^\varphi, a)((s',z')) \quad \forall s' \in C(s,a), z' \in Z,\\
        \nonumber
        &\underline P(s^\varphi, a)(q^{\varphi\prime}) \le \sum_{s^{\varphi\prime}\in q^{\varphi\prime}} \gamma^\varphi(s^{\varphi\prime}) \le \overline P(s^\varphi, a)(q^{\varphi\prime}) \qquad \;\;\; \forall q^{\varphi\prime} \in Q^\varphi(s^\varphi,a),\\
    \label{eq:Gamma_product}
        &\sum_{s^{\varphi\prime}\in C(s,a)\times Z} \gamma^\varphi(s^{\varphi\prime}) \ge 1 - \epsilon_c \Big\}, 
    \end{align}
    with $\underline{P}(s^\varphi,a)((s',z')) :=
            \underline{P}(r_s,a)(r_{s'})$, $\overline{P}(s^\varphi,a)((s',z')) :=
            \overline{P}(r_s,a)(r_{s'})$ if $z' = \delta(z,L(s'))$ and $0$ otherwise, and $\underline{P}(s^\varphi,a)(q^{\varphi\prime}) :=
            \underline{P}(r_s,a)(r_{q'})$, $\overline{P}(s^\varphi,a)(q^{\varphi\prime}) :=
            \overline{P}(r_s,a)(r_{q'})$, where $q'\in Q(s,a)$ is the projection of $q^{\varphi\prime}$ onto $S$.
    %
    \iffalse
    
    \begin{align*}
    \begin{split}
        \underline{P}((s,z),a)((s',z')) &:= \begin{cases}
            \underline{P}(s,a)(s')\quad &\text{if}\:\:z' = \delta(z,L(s')),\\
            0\quad &\text{otherwise}
        \end{cases}\\ 
        \overline{P}((s,z),a)((s',z')) &:= \begin{cases}
            \overline{P}(s,a)(s')\quad &\text{if}\:z' = \delta(z,L(s')),\\
            0\quad &\text{otherwise}
        \end{cases}\\
    \end{split}
    \end{align*}
%
and
%
    \begin{align*}
    \begin{split}
        \underline{P}((s,z),a)(q^{\varphi\prime}) &:=
            \underline{P}(s,a)(\text{Proj}_S(q^{\varphi\prime}))\\ 
        \overline{P}((s,z),a)(q^{\varphi\prime}) &:= \overline{P}(s,a)(\text{Proj}_S(q^{\varphi\prime}))
\end{split}
\end{align*}

\fi
%
for all $q^{\varphi\prime} \in Q^\varphi(s^\varphi,a)$. Additionally, for all $z\in Z$ and $a\in A$, $\Gamma^\varphi_{(s_{|S|},z),a} = \{\delta_{(s_{|S|},z')}\}$, with $z' = \delta(z,L(s_{|S|}))$.
\end{proposition}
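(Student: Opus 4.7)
The plan is to establish set equality by a double-inclusion argument, after disposing of the absorbing case. For $s = s_{|S|}$, since $\Gamma_{s_{|S|},a} = \{\delta_{s_{|S|}}\}$, the product definition directly forces every $\gamma^\varphi \in \Gamma^\varphi_{(s_{|S|},z),a}$ to be $\delta_{(s_{|S|},z')}$ with $z' = \delta(z,L(s_{|S|}))$, so the substantive work concerns the main assertion with $s \ne s_{|S|}$.

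For the forward inclusion $\Gamma^\varphi_{s^\varphi,a} \subseteq \text{RHS}$, I would take $\gamma^\varphi \in \Gamma^\varphi_{s^\varphi,a}$ with underlying witness $\gamma \in \Gamma_{s,a}$. The key structural fact to establish first is that $\gamma^\varphi$ is supported on the ``consistent diagonal'' $\{(s',z') \in S^\varphi : z' = \delta(z,L(s'))\}$: the definition of $\Gamma^\varphi_{s^\varphi,a}$ pins $\gamma^\varphi((s', \delta(z,L(s')))) = \gamma(s')$ for every $s' \in S$, and since these values already sum to $\sum_{s' \in S}\gamma(s') = 1$, all remaining coordinates of $\gamma^\varphi$ must vanish. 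With this observation, each constraint in the RHS becomes a direct rewriting of the corresponding constraint on $\gamma$: the individual state bounds for $s' \in C(s,a), z' \in Z$ hold (trivially when $z' \ne \delta(z,L(s'))$, since both bounds are $0$ by definition); the cluster bound reduces via $\sum_{s^{\varphi\prime} \in q^{\varphi\prime}} \gamma^\varphi(s^{\varphi\prime}) = \sum_{s' \in q'}\gamma(s')$; and the support constraint collapses to $\sum_{s' \in C(s,a)}\gamma(s') \ge 1-\epsilon_c$.

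For the reverse inclusion, given $\gamma^\varphi$ satisfying the RHS constraints, I would define $\gamma(s') := \gamma^\varphi((s', \delta(z,L(s'))))$ for all $s' \in S$. The zero upper bounds $\overline P(s^\varphi,a)((s',z')) = 0$ for $z' \ne \delta(z,L(s'))$ with $s' \in C(s,a)$ ensure that $\gamma$ captures the entire mass that $\gamma^\varphi$ places on $\{s'\} \times Z$ for such $s'$. I would then verify the three defining conditions of $\Gamma_{s,a}$ by direct substitution of the RHS constraints, after which the identity $\gamma^\varphi((s',z')) = \gamma(s')$ for $z' = \delta(z,L(s'))$ is a tautology and exhibits $\gamma^\varphi \in \Gamma^\varphi_{s^\varphi,a}$.

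The main obstacle is the bookkeeping for states $s' \notin C(s,a)$: the RHS does not impose individual bounds on pairs $(s',z')$ with $s' \notin C(s,a)$, so a priori $\gamma^\varphi$ could allocate mass to inconsistent pairs outside the diagonal, and the projection step of the reverse direction has to account for this. Following the standard convention for product MDPs, in which the reachable product state space is restricted to consistent pairs and $\mathcal{P}(S^\varphi)$ is implicitly understood as supported on the diagonal, this issue disappears; equivalently, any inconsistent mass is confined by the support constraint to a set of total probability at most $\epsilon_c$ and can be reallocated to the diagonal without violating any constraint in either description. Once this convention is fixed, the proof reduces to a routine unpacking of definitions.
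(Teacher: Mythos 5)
Your proof follows essentially the same route as the paper's: a double inclusion mediated by the diagonal projection $\gamma(s') := \gamma^\varphi\big((s',\delta(z,L(s')))\big)$, with each constraint of one description matched term-by-term to a constraint of the other, and the absorbing state treated separately. Your forward inclusion is handled exactly as in the paper, and your justification that the diagonal values already sum to one (so all off-diagonal coordinates of $\gamma^\varphi$ must vanish) is the right one. Where you differ is that you explicitly flag, and patch, a step the paper's proof silently skips in the reverse inclusion: the paper asserts $\sum_{s'\in S}\gamma^\varphi((s',\delta(z,L(s')))) = \sum_{(s',z')\in S^\varphi}\gamma^\varphi((s',z'))$ after establishing $\gamma^\varphi((s',z'))=0$ off the diagonal only for $s'\in C(s,a)$; since \eqref{eq:Gamma_product} imposes no individual bounds on pairs $(s',z')$ with $s'\notin C(s,a)$, a distribution satisfying the right-hand side may in principle place up to $\epsilon_c$ of mass on inconsistent pairs outside $C(s,a)\times Z$, and such a distribution is not of the product form. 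Your two repairs are both reasonable: adopting the standard product-MDP convention that $\mathcal{P}(S^\varphi)$ is restricted to distributions supported on consistent pairs restores literal set equality, while the reallocation argument shows the discrepancy is confined to at most $\epsilon_c$ of mass and is harmless for the downstream robust dynamic programming. In short, your write-up is correct under the stated convention and is, on this one point, more careful than the paper's own argument.
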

\begin{proof}
    Pick arbitrary $s^\varphi = (s, z)\in (S\setminus\{s_{|S|}\})\times Z$, $a\in A^\varphi$ and $\gamma^\varphi\in\mathcal{P}(S^\varphi)$ that satisfies the bounds in Expression~\eqref{eq:Gamma_product}. By the interval bounds in said expression for each $(s',z')\in C(s,a)\times Z$, we have that $\gamma^\varphi((s', z')) = 0$ for all $z' \neq \delta(z, L(s'))$. Define $\gamma\in\mathcal{P}(S)$ as $\gamma(s') := \gamma^\varphi(s', \delta(z,L(s')))$ for all $s'\in S$, noting that $\gamma$ is a probability over $S$ because $\sum_{s'\in S}\gamma(s') = \sum_{s'\in S}\gamma^\varphi((s',\delta(z, L(s')))) = \sum_{(s',z')\in S^\varphi}\gamma^\varphi((s',z')) = 1$. Since for each $(s',z')\in S^\varphi$ we have that $\underline{P}(s^\varphi, a)((s',z')) \le \gamma^\varphi((s',z'))\le \overline{P}(s^\varphi, a)((s',z'))$, it is easy to see that $\underline{P}(s,a)(s') \le \gamma(s') \le \overline{P}(s,a)(s')$ for all $s'\in S$. Next, pick $q'\in Q(s,a)$ and let $q^{\varphi\prime}$ be the set in $Q^\varphi(s^\varphi, a)$ such that $q'$ is its projection onto $S$. Because of this relationship we obtain that $\sum_{s'\in q'} \gamma(s') = \sum_{s'\in q'} \gamma^\varphi((s', \delta(z, L(s'))) = \sum_{(s',z')\in q^{\varphi\prime}} \gamma^\varphi((s',z'))\in[\underline{P}((s,z),a)(q^{\varphi\prime}), \overline{P}((s,z),a)(q^{\varphi\prime})] = [\underline{P}(s,a)(q'), \overline{P}(s,a)(q')]$. Finally, since $\gamma^\varphi$ satisfies the last condition in \eqref{eq:Gamma_product}, it follows that
    \begin{align*}
        \sum_{s'\in C(s,a)} \gamma(s') = \sum_{s'\in C(s,a)} \gamma^\varphi((s',\delta(z, L(s')))) = \sum_{s^{\varphi\prime}\in C(s,a)\times Z} \gamma^\varphi(s^{\varphi\prime}) \ge 1 - \epsilon_c.
    \end{align*}
    Therefore, $\gamma\in\Gamma_{s,a}$, which implies that $\gamma^\varphi \in \Gamma_{s^\varphi,a}^\varphi$.

    Conversely, pick a $\gamma^\varphi \in \Gamma_{s^\varphi,a}^\varphi$, and define $\gamma\in\mathcal{P}(S)$ as before. Reversing the arguments in the first part of this proof, we obtain that all $\gamma^\varphi$ satisfies the conditions in Expression~\eqref{eq:Gamma_product}. Since $\gamma^\varphi$ was arbitrary, this holds for all $\gamma^\varphi \in \Gamma_{s^\varphi,a}^\varphi$.  Additionally, for the case that $s = s_{|S|}$, it is easy to check that $\Gamma_{(s,z),a}$ necessarily has the form in the statement of Proposition~\ref{prop:product_bounds}, hence the proof is complete.
\end{proof}

% Note that, since $\M^\varphi$ has a different structure than the UMDPs considered in \cite{gracia2025efficient}, a different tailored algorithm is required. 

% First note that, for all $s\in S$, $a\in A$, it is possible to simplify the sets $\Gamma_{s,a}$ of $\M$ by discarding the last constraint in \eqref{eq:Gamma}, and then modifying the upper bounds in the transition probabilities accordingly. More formally, we increase $\overline P(s,a)(s_{|S|})$ and $\overline P(s,a)(q')$, for all $q'\in Q(s,a)$ with $s_{|S|}\in q'$, by $\epsilon_c$, and set $\overline P(s,a)(s')=0$ for all $s'\in C(s,a)\setminus \{s_{|S|}\}$ and $\overline P(s,a)(q')=0$, for all $q'\in Q(s,a)$ with $s_{|S|}\notin q'$. In consequence, the adversary is allowed to pick a distribution with a higher probability of transitioning to the unsafe region. It is then easy to check that the solution $\underline p$ obtained via RDP in this new UMDP is not higher than the one obtained for $\M$. 
% We omit the proof 
% \ml{should we refer to the supplementary material instead???}
% for the sake of space, and from now on we assume that $\M$ has been simplified as explained here.

%The algorithm first streamlines the uncertainty set of $\M$. 
As explained before, strategy synthesis amounts to solving a reachability problem on $\M^\varphi$ and. Since the set $\Gamma^\varphi$ in \eqref{eq:Gamma_product} possesses the same structure as the set $\Gamma$ in \eqref{eq:Gamma}, for the sake of clarity, we now consider simply a reachability problem on $\M$ instead of $\M^\varphi$. Let us first simplify the structure of the sets $\Gamma_{s,a}$ of $\M$ by discarding the last constraint in \eqref{eq:Gamma} and adjusting the transition probability bounds accordingly. Lemma~\ref{lemma:simplify} states that performing RDP on this simplified UMDP yields a lower bound on the reachability probability in \eqref{eq:max_reachability_probability}.

%Specifically, we increase $\overline{P}(s,a)(r_{s_{|S|}})$ and $\overline{P}(s,a)(r_{q'})$ for all $q' \in Q(s,a)$ with $s_{|S|} \in q'$ by $\epsilon_c$, and set $\overline{P}(s,a)(r_{s'}) = 0$ \ig{ERRATA} for all $s' \in C(s,a) \setminus \{s_{|S|}\}$ and $\overline{P}(s,a)(r_{q'}) = 0$ for all $q' \in Q(s,a)$ with $s_{|S|} \notin q'$. As a result, the adversary can choose a distribution with a higher probability of transitioning to the unsafe region. It can be shown that the solution $\underline{p}$ obtained via RDP in this simplified UMDP is not greater than the one obtained for $\M$. We provide a formal proof in~\cite{supp}, and hereby assume that $\M$ is simplified as described.

\begin{lemma}
\label{lemma:simplify}
    Consider the UMDP $\M$ as in Definition~\ref{def:umdp abstraction}, and let $\widetilde{\M}$ be the UMDP that differs from the latter only in the sets $\widetilde{\Gamma}_{s,a}$, defined as
    \begin{align}
    \label{eq:Gamma_tilde}
        \widetilde{\Gamma}_{s,a} := \big\{ & \gamma \in \mathcal{P}(S) : \underline P(s, a)(s')  \le \gamma(s') \le \overline P(s, a)(s') \quad\;\;\; \forall s' \in S, \nonumber \\
        &\; \underline P(s, a)(q') \le \sum_{s'\in q'}\gamma(s') \le \overline P(s, a)(q') \:\; \forall q' \in Q(s,a) \big\}, 
    \end{align}
with
    \begin{align*}
        \underline{P}(s,a)(s') &= \begin{cases}
            \underline{P}(r_{s},a)(r_{s'}) \:\quad &\forall s' \in C(s,a)\\
            0  \:\quad &\forall s' \notin C(s,a)
        \end{cases} 
        \\
        \overline{P}(s,a)(s') &= \begin{cases}
            \overline{P}(r_s,a)(r_{s'}) \:\quad &\forall s' \in C(s,a)\setminus \{s_{|S|}\}\\
            0  \:\quad &\forall s' \notin C(s,a)\bigcup \{s_{|S|}\}\\
            \overline{P}(r_s,a)(r_{s'}) + \epsilon_c  \: \quad &\text{if}\: s' = s_{|S|}\:\text{and}\:s' \in C(s,a)\\
            \epsilon_c  \: \quad &\text{if}\: s' = s_{|S|}\:\text{and}\:s' \notin C(s,a)\\
        \end{cases} 
        \\
        \underline{P}(s,a)(q') &= \underline{P}(r_s,a)(r_{q'}) \:\:\qquad\qquad \text{for all}\:\:q' \in Q(s,a)\\
        \overline{P}(s,a)(q') &= \begin{cases}
            \overline{P}(r_s,a)(r_{q'})  \: \quad &\text{if}\:q' \in Q(s,a)\:\text{and}\:\: s_{|S|} \notin q'\\
            \overline{P}(r_s,a)(r_{q'}) + \epsilon_c  \: \quad &\text{if}\:q' \in Q(s,a)\:\text{and}\:\: s_{|S|} \in q'\\
        \end{cases} 
    \end{align*}
    for all $s\in S\setminus\{s_{|S|}\}$, $a\in A$, and $\Gamma_{s_{|S|},a} = \{\delta_{s_{|S|}}\}$ for all $a\in A$. Let $\underbar p_{\widetilde{\M}}$ and $\underbar p$ be the value functions returned by running the RDP recursion in \eqref{eq:robust_value_iteration_lower_bound} on $\widetilde{\M}$ and $\M$, respectively. Then, $\underbar p_{\widetilde{\M}}(s) \le \underbar  p(s)$ for all $s\in S$.
    \end{lemma}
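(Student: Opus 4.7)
The plan is an induction argument on the RDP iterates that reduces to a single inner-problem inequality. Both recursions start from the same indicator of the target set $S_F$, and since $s_{|S|}$ is absorbing and outside the target, $\underline p^k(s_{|S|})=0$ for every $k$. Assuming inductively that $\underline p^k_{\widetilde{\M}}\le\underline p^k$ pointwise, the monotonicity of $\gamma\mapsto\langle\gamma,V\rangle$ in $V\ge 0$ reduces the step to the key inequality
\[
\min_{\tilde\gamma\in\widetilde{\Gamma}_{s,a}}\langle\tilde\gamma,V\rangle \;\le\; \min_{\gamma\in\Gamma_{s,a}}\langle\gamma,V\rangle
\]
for every $V:S\to[0,1]$ with $V(s_{|S|})=0$, every $s\ne s_{|S|}$, and every $a\in A$. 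Pointwise convergence from Proposition~\ref{prop:robust_value_iteration} then delivers the claim as $k\to\infty$.

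I would establish the displayed inequality by constructing an explicit map $\Phi_{s,a}:\Gamma_{s,a}\to\widetilde{\Gamma}_{s,a}$ that cannot increase the expectation. Given $\gamma\in\Gamma_{s,a}$, set $\tilde\gamma(s'):=\gamma(s')$ for $s'\in C(s,a)\setminus\{s_{|S|}\}$, $\tilde\gamma(s'):=0$ for $s'\notin C(s,a)\cup\{s_{|S|}\}$, and $\tilde\gamma(s_{|S|}):=\gamma(s_{|S|})+\sum_{s'\notin C(s,a)\cup\{s_{|S|}\}}\gamma(s')$; in words, transfer every unit of probability that lies outside $C(s,a)$ onto the absorbing unsafe state. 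Because $V\ge 0$ and $V(s_{|S|})=0$, this operation deletes only non-negative summands, so $\langle\tilde\gamma,V\rangle\le\langle\gamma,V\rangle$; infimizing over $\gamma\in\Gamma_{s,a}$ through this map yields the desired inequality.

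The main obstacle, and the step most prone to bookkeeping errors, is the verification that $\tilde\gamma$ indeed lies in $\widetilde{\Gamma}_{s,a}$. Singleton bounds on $C(s,a)\setminus\{s_{|S|}\}$ are preserved verbatim, and those on $s'\notin C(s,a)\cup\{s_{|S|}\}$ collapse to $\{0\}$, which $\tilde\gamma$ trivially meets. For the interval at $s_{|S|}$ and for any cluster $q'\ni s_{|S|}$, the critical upper bound check uses the fact that the transferred budget is at most $\sum_{s'\notin C(s,a)}\gamma(s')\le \epsilon_c$, which follows directly from $\sum_{s'\in C(s,a)}\gamma(s')\ge 1-\epsilon_c$ in \eqref{eq:Gamma}; the "$+\epsilon_c$" inflation of the upper bounds in \eqref{eq:Gamma_tilde} is precisely designed to absorb this extra mass, both in the case $s_{|S|}\in C(s,a)$ (where the added slack is stacked on top of the original $\overline P$) and in the case $s_{|S|}\notin C(s,a)$ (where the interval is $[0,\epsilon_c]$). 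Clusters $q'\not\ni s_{|S|}$ inherit their constraints because $\tilde\gamma=\gamma$ on $q'$, and normalization $\sum_{s'}\tilde\gamma(s')=1$ is immediate from $\sum_{s'}\gamma(s')=1$. Combining the verified map with the induction outlined above completes the argument.
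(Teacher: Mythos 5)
Your proof is correct and follows essentially the same route as the paper: both construct, from each $\gamma\in\Gamma_{s,a}$, a distribution $\tilde\gamma\in\widetilde{\Gamma}_{s,a}$ by moving the probability mass lying outside $C(s,a)$ onto the absorbing unsafe state $s_{|S|}$, use $V\ge 0$ and $V(s_{|S|})=0$ to conclude that the inner minimum over $\widetilde{\Gamma}_{s,a}$ can only be smaller, and finish by induction on the RDP iterates. If anything, your reallocation (transferring exactly $\sum_{s'\notin C(s,a)\cup\{s_{|S|}\}}\gamma(s')$ rather than a flat $\epsilon_c$ as the paper does) is the more careful choice, since it preserves normalization of $\tilde\gamma$ exactly.
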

    \begin{proof}
        Consider the value functions $\underline p^k$ and $\underline p_{\widetilde{\M}}^k$ obtained after $k$ iterations of robust dynamic programming (see Proposition~\ref{prop:robust_value_iteration}) on $\M$ and on $\widetilde{\M}$, respectively, and let $s\in S\setminus\{s_{|S|}\}$ and $a\in A$.
        It is easy to observe that a minimizer $\gamma\in\Gamma_{s,a}$ of the inner problem in Expression~\eqref{eq:robust_value_iteration_lower_bound} is any $\gamma$ that assigns as much probability mass as possible to the states $s'\in S$ with the smallest $\underline p^k(s')$. Note that, by the last constraint in the definition of $\Gamma_{s,a}$, at least $1-\epsilon_c$ mass must be allocated to the states in $C(s,a)$, which allows us to define $\gamma\in\mathcal{P}(S)$ with $\tilde{\gamma}(s') := \gamma(s')$ for all $s'\in C(s,a)\setminus\{s_{|S|}\}$, $\tilde{\gamma}(s_{|S|}) := \gamma(s_{|S|}) + \epsilon_c$ and $\tilde{\gamma}(s') := 0$ everywhere else. Then, it is easy to observe that $\tilde{\gamma}\in\widetilde{\Gamma}_{s,a}$. Furthermore, since $\underline{p}^k(s')\ge 0$ for all $s'\in S$ and $\underline{p}^k(s_{|S|}) = 0$, we have that $\sum_{s'\in S}\gamma(s') p^k(s')  \ge \sum_{s'\in S}\tilde{\gamma}(s') p^k(s')$ which, in turn, implies that
        \begin{align*}
            \min_{\gamma\in\Gamma_{s,a}}\sum_{s'\in S}\gamma(s') p^k(s') = \sum_{s'\in S}\gamma(s') p^k(s') \ge \sum_{s'\in S}\tilde{\gamma}(s') p^k(s') \ge \min_{\gamma\in\widetilde{\Gamma}_{s,a}}\sum_{s'\in S}\gamma(s') p^k(s').
        \end{align*}
        Maximizing over the actions on both sides yields $\underline p_{\widetilde{\M}}^{k+1}(s) \le \underline p^{k+1}(s)$, which holds for all $s\in S\setminus\{s_{|S|}\}$. Additionally, it is trivial to check that the previous result also holds for $s = s_{|S|}$, which means that it holds for all $s\in S$. Using this result in an induction argument and letting $k\to\infty$ we obtain the statement of the Lemma.
    \end{proof}
    
%\ml{this seems out of place.  Please fix it.} It is easy to verify that $\Gamma^\varphi$ possesses the same structure as $\Gamma$, thus allowing to perform strategy synthesis via Algorithm~\ref{alg:2O}. 

Intuitively, in the modified UMDP, the adversary is always allowed to pick a higher probability of transitioning to the unsafe state $s_{|S|}$, even when the last constraint in \eqref{eq:Gamma} is omitted. As a result, RDP returns a lower bound on the probability \eqref{eq:max_reachability_probability}. During the reminder of this section we consider that $\M$ has the structure described in Lemma~\ref{lemma:simplify}.

Then, on the modified $\M$, Alg.~\ref{alg:2O} computes the optimal choice of the adversary, for each state-action pair $(s,a)$ and iteration $k$, by extending the O-maximizing algorithm devised for IMDP value iteration \cite{givan2000bounded} to $\M$:  
via a 2-layer O-maximizing logic,
the algorithm efficiently allocates probability mass to states of $\M$ with the lowest value function while respecting the constraints in \eqref{eq:Gamma}. 
It begins by ensuring that the lower bounds $\underline P(s,a)(s')$ are satisfied for all states $s'$ (Lines 2-3), then proceeds to allocate mass to each cluster $q'\in Q(s,a)$ to meet the required lower bounds (Lines 4-12). The algorithm ensures that the total probability mass remains feasible by maintaining the constraints $\gamma(s') \le \overline P(s,a)(s')$, $\sum_{s'\in q'}\gamma(s') \le \overline P(s,a)(q')$ throughout the allocation (Lines 13-20). This allocation process guarantees that as much mass as possible is assigned to states with the smallest value function while ensuring $\gamma \in\Gamma_{s,a}$. Alg.~\ref{alg:2O} terminates once all mass is allocated. %, ensuring the resulting distribution $\gamma$ belongs to the feasible set $\Gamma_{s,a}$, while minimizing the reachability probability objective.  
Note that RDP algorithm \cite{gracia2025efficient} calls Alg.~\ref{alg:2O} for every $(s,a)$ and in every iteration until termination.  The following theorem proves its correctness and runtime complexity.

\begin{theorem}[Correctness of Algorithm~\ref{alg:2O}]
\label{thm:soundness_alg}
    Let $k\in\naturals\cup\{0\}$, $s\in S\setminus\{s_{|S|}\}$, $a\in A$ and $\underline p^k\in\reals^{|S|}$ be as defined in Proposition~\ref{prop:robust_value_iteration}. Define $\text{Post}(s,a) := \{s' \in S : \overline P(s,a)(s') > 0\}$. Then, the output $\gamma$ of Algorithm~\ref{alg:2O} satisfies $\gamma \in \arg\min_{\gamma\in\Gamma_{s,a}}\sum_{s'\in S}\gamma(s') \underline p^k(s')$, and it has a computational complexity of $\mathcal{O}(|\text{Post}(s,a)|\log(|\text{Post}(s,a)|))$.
\end{theorem}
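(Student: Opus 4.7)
The proof plan has three parts: (i) showing $\gamma \in \Gamma_{s,a}$ (feasibility); (ii) showing optimality via a two-layer exchange argument that generalizes the classical O-maximization correctness proof for IMDPs \cite{givan2000bounded}; and (iii) bounding complexity.

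For (i), I would prove by induction through the three phases of Algorithm~\ref{alg:2O} that the constraints of $\widetilde{\Gamma}_{s,a}$ in \eqref{eq:Gamma_tilde} are maintained. Lines 2--3 establish $\gamma(s') = \underline P(s,a)(s')$ for all $s' \in \text{Post}(s,a)$, satisfying the individual lower and upper bounds. The inner loop in Lines 7--11 then adds just enough mass within each cluster to reach $\underline P(s,a)(q')$ without exceeding any individual upper bound, thanks to the $\min$ on Line 10. Finally, Lines 13--20 distribute the residual mass $M$ while respecting both individual and cluster upper bounds simultaneously via the triple $\min$ on Line 15. Non-emptiness of $\Gamma_{s,a}$ ensures the total mass can be fully allocated so that $M = 0$ and $\sum_{s'}\gamma(s') = 1$ upon termination.

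For (ii), I would use an exchange argument. Suppose for contradiction that $\gamma^* \in \Gamma_{s,a}$ achieves a strictly smaller objective. Let $A = \{s' : \gamma^*(s') > \gamma(s')\}$ and $B = \{s' : \gamma^*(s') < \gamma(s')\}$. Mass conservation yields $\sum_{A}(\gamma^* - \gamma) = \sum_{B}(\gamma - \gamma^*) > 0$, and a strictly smaller objective forces the existence of $s_A \in A$ and $s_B \in B$ with $\underline p^k(s_A) < \underline p^k(s_B)$. Because the algorithm processes states in increasing $\underline p^k$ order in Lines 13--20, it already attempted to transfer as much mass as possible to $s_A$ before $s_B$; the only reasons it could have stopped are that either (a) the individual upper bound $\overline P(s,a)(s_A)$ is tight in $\gamma$, contradicting $\gamma^*(s_A) > \gamma(s_A)$, or (b) $s_A$ lies in a cluster $q'$ with $\gamma(q') = \overline P(s,a)(q')$. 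In case (b), the constraint $\sum_{s' \in q'}\gamma^*(s') \le \overline P(s,a)(q')$ forces some $s'' \in q' \cap B$ to exist, and a local intra-cluster exchange between $s_A$ and $s''$ reduces matters to the single-state IMDP case. The non-overlapping property of $Q(s,a)$ ensures these local exchanges do not interact across clusters, so the argument can be closed cluster-by-cluster. A parallel sub-argument applies to Phase 2 (Lines 6--12), provided the iteration over $s' \in q'$ inherits the sorted order from Line 1; this ensures cluster lower bounds are met using the smallest-$\underline p^k$ states first.

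For (iii), the cost is dominated by the sort on Line 1, which is $\mathcal{O}(|\text{Post}(s,a)|\log|\text{Post}(s,a)|)$. The remaining lines perform at most $\mathcal{O}(|\text{Post}(s,a)|)$ work in aggregate, since the non-overlapping partition $Q(s,a)$ guarantees that each state is visited a constant number of times across both loops. The main obstacle will be a rigorous handling of the two-layer exchange argument in (ii), in particular ruling out ``cross-cluster'' improvements without violating a cluster upper bound; the non-overlapping structure of $Q(s,a)$ is the key enabling property that localizes the exchange within a single cluster.
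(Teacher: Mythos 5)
Your proposal is correct and follows essentially the same route as the paper's proof: the same three-part decomposition (feasibility via the phase-by-phase maintenance of the lower/upper and cluster constraints, optimality of the greedy allocation, and complexity dominated by the sort, with the disjointness of $Q(s,a)$ ensuring the loops are linear in $|\text{Post}(s,a)|$), resting on the same structural facts ($\Gamma_{s,a}\neq\emptyset$ and the compatibility inequalities between cluster and individual bounds guaranteeing the greedy never gets stuck). The only difference is that where the paper simply asserts greedy optimality from the three properties of the allocation order, you formalize it with a two-layer exchange argument localized within clusters — a more rigorous rendering of the same idea rather than a different proof.
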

%
%The proof of Theorem~\ref{thm:soundness_alg} is provided in \cite{supp}. 
Note that the computational complexity of solving the linear program in the theorem statement using a standard Simplex algorithm is of $\mathcal{O}(|\text{Post}(s,a)|^3)$, highlighting the computational advantage of using Alg.~\ref{alg:2O}.
\begin{proof}
    We equivalently prove that any $\gamma$ generated by the algorithm belongs to $\Gamma_{s,a}$, and that $\gamma$ assigns the most probability to states $s'\in \text{Post}(s,a)$ with the smallest $\underline p^k(s')$. First we prove that $\gamma\in\Gamma_{s,a}$. From lines $2$ and $3$, we know that $\gamma(s') \ge \underline P(s,a)(s')$ for all $s'\in S$. Furthermore, by construction of $\M$ (see how Proposition~\ref{prop:hoeffding} computes the transition probability bounds and how these are modified in Lemma~\ref{lemma:simplify}), it is easy to verify that $\underline P(s,a)(q') \le \sum_{s'\in q'}\underline P(s,a)(s')$ for all, $q'\in Q(s,a)$, and therefore lines $4-14$ guarantee that each $q'\in Q(s,a)$ receives exactly $\underline P(s,a)(q')$ probability mass. Since, also by construction of $\M$, %$\gamma(q') \ge \underline P(s,a)(q')$ and that 
    it holds that $\underline P(s,a)(q') \le \sum_{s'\in q'}\overline P(s,a)(s')$ for all $q'$, this allocation of probability mass is always feasible, and line $13$ is reached with $m=0$. Additionally, by the logic of lines $10-11$, $\gamma(s') \le \overline P(s,a)(s')$ for all $s'$ so far. %Note that, since $\underline P(s,a)(q') \le \sum_{s'\in q'\bigcap\text{Post}(s,a)} \overline P(s,a)(s')$ by definition \ig{maybe explain a bit more}, this allocation is always feasible, and line $20$ is always reached with $m=0$.
     In lines $13-20$ we allocate the remaining mass from $M$ to the states $s'\in \text{Post}(s,a)$ while respecting $\gamma(s') \le \overline P(s,a)(s')$ and $\gamma(q') \le \overline P(s,a)(q')$. Since $\Gamma_{s,a}\neq\emptyset$, it follows that $\sum_{s'\in \text{Post}(s,a)} \overline P(s,a)(s') \ge 1$% and $\sum_{q'\in Q(s,a)} \overline P(s,a)(q') \ge 1$
     , hence line $20$ is reached with $M=0$. Furthermore, by construction of $\M$ (see Proposition~\ref{prop:hoeffding} and Lemma~\ref{lemma:simplify}), the constraint $\gamma(q') \le \overline P(s,a)(q')$ never prevents all mass from $M$ to be allocated. Therefore, $\gamma\in\Gamma_{s,a}$. %Additionally, the algorithm terminates in finite time. 
     Furthermore, since 1) the algorithm assigns first the least amount of mass that guarantees satisfaction of the lower bounds $\underline P(s,a)(s')$ and $\underline P(s,a)(q')$ for all $s',q'$, 2) when mass is assigned to $\gamma(s')$ (lines $2,3,11,20$), the states with smaller $\underline p^k(s')$ are considered first, and 3) the mass allocated to such states is the maximum amount that the upper bounds $\overline P(s,a)(s')$ and $\overline P(s,a)(q')$ allow. It follows that $\gamma$ minimizes the expression in Theorem~\ref{thm:soundness_alg}. Finally we prove the statement regarding the computational complexity. Note that the for loops in lines $6$ and $13$ are not nested. Additionally, since the sets $q'\in Q(s,a)$ are disjoint, the for loops in lines $6$ and $9$ are equivalent to a single for loop on $s'\in S$. Therefore lines $2$ through $20$ have complexity $\mathcal{O}(|\text{Post}(s,a)|)$, which is negligible in comparison with the complexity of sorting in line $1$, which is $\mathcal{O}(|\text{Post}(s,a)|\log(|\text{Post}(s,a)|))$. Therefore the algorithm has complexity $\mathcal{O}(|\text{Post}(s,a)|\log(|\text{Post}(s,a)|))$, which concludes the proof.
\end{proof}

\section{Case Studies}
\label{sec:case_studies}

We now demonstrate empirically the effectiveness of our approach through $7$ case studies. These include a nonlinear pendulum with non-additive %under an aerodynamic drag 
disturbances, kinematic unicycle models with $2$- and $3$-D state-spaces and under nonlinear coulomb friction, a $2$-D linear system with multiplicative noise, and a $4$-D thermal regulation benchmark with multiplicative uncertainty. The considered specifications are reach-avoid ($\varphi_1$), the \LTLf specification from \cite{gracia2024data} ($\varphi_2$) and a $15$-step safety specification ($\varphi_3$). For details of these %systems, specifications, and case study 
setups, see Appendix~\ref{app:appendix_B_arXiv}.

\begin{table}[t]
    \centering
    \caption{\small
    % Benchmark results. In ``Approach'' we indicate what is the abstraction we used: UMDP denotes an abstraction as in Definition~\ref{def:robust_mdp}, ``IMDP (Learn Support)'' refers to a UMDP having only the first and third constraints in \eqref{eq:Gamma}, and ``Na\"ive IMDP'' denotes a traditional IMDP abstraction. $e_{avg}$ denotes the average difference (over all states) between the lower and upper bounds in the satisfaction probabilities. The time required for abstraction and synthesis are given in minutes. %, for which respective timeouts of $8$ and $2$ hours are fixed.
    % We denote the number of samples of the noise after clustering by $N_\text{cluster}$.
    Benchmark results. ``Approach" indicates the abstraction used: UMDP (Definition~\ref{def:robust_mdp}), ``IMDP (Learn Support)" (a UMDP with only the first and third constraints in \eqref{eq:Gamma}), and ``Na\"ive IMDP" (traditional IMDP). $e_{avg}$ denotes the average difference in satisfaction probabilities between the lower and upper bounds across all states. Time for abstraction and synthesis is given in minutes, and $ N_\text{cluster}$ denotes the number of noise samples after clustering. The confidence in all case studies is $1-\alpha = 1-10^{-9}$ and $\beta_c = \alpha/2$.
    }
    \scalebox{0.66}{
    \begin{tabular}{c|c|c|c|c|c|c|c|c}
        System (Spec.) & Approach & $|Q|$ & $|A|$ & $N$ & $N_\text{cluster}$ & $e_{avg}$ & Abstr. Time & Synth. Time \\
        \midrule
        Pendulum ($\varphi_1$) & UMDP & $10^4$ & $5$ & $5\times 10^3$ & $47$ & $0.552$ & $1.796$ & $3.643$ \\
        \cite{gracia2024data} & UMDP & $10^4$ & $5$ & $10^4$ & $44$ & $0.007$ & $1.857$ & $6.679$ \\
        & UMDP & $10^4$ & $5$ & $10^5$ & $47$ & $0.003$ & $1.797$ & $2.515$ \\
        %& UMDP & $4\times10^4$ & $5$ & $10^{6}$ & $47$ & $0$ & $7.915$ & $84.539$ \\
        & \cite{gracia2024data} & $4\times10^4$ & $5$ & $10^{6}$ & $49$ & $0$ & $5.273$  & $61.167$ \\
         %&  &  &  & $0.018$ & $-$ & $-$ & $-$ & timeout & $-$ \\
        \midrule
        Pendulum ($\varphi_1$) & UMDP & $1.225\times 10^5$ & $5$ & $5\times10^4$ & $172$ & $0.108$ & $130.680$ & $203.817$ \\
        %(Torque-Limited)& UMDP & $6.25\times 10^4$ & $5$ & $10^5$ & $175$ & $0.082$ & $40.602$ & $70.773$ \\
        (Torque-Limited) & UMDP & $4\times 10^4$ & $5$ & $5\times10^4$ & $173$ & $0.440$ & $31.328$ & $15.203$ \\
        & UMDP & $4\times 10^4$ & $5$ & $10^5$ & $179$ & $0.186$ & $30.952$ & $40.241$ \\
        & UMDP & $4\times 10^4$ & $5$ & $10^6$ & $179$ & $0.048$ & $30.626$ & $29.936$ \\
        & UMDP & $4\times 10^4$ & $5$ & $10^7$ & $191$ &  $0.033$ & $40.520$ & $25.308$ \\
        & IMDP (Learn Support) & $4\times 10^4$ & $5$ & $10^7$ & $191$ &  $0.164$ & $16.714$ & $29.119$ \\
        \midrule
        $3D$ Unicycle ($\varphi_1$) & UMDP & $5.932\times10^4$ & $10$ & $10^{4}$ & $235$ & $0.579$ & $34.689$ & $33.156$ \\
        \cite{gracia2024data} & UMDP & $5.932\times10^4$ & $10$ & $10^{5}$ & $325$ & $0.267$ & $45.610$ & $36.18$ \\
        & UMDP & $5.932\times10^4$ & $10$ & $10^{6}$ & $358$ & $0.156$ & $52.733$ & $33.751$ \\
        & \cite{gracia2024data} & $6.4\times10^4$ & $10$ & $5\times10^{8}$ & $8869$ & $0.447$ & $457.431$ & $43.342$\\
        \midrule
        $3D$ Unicycle ($\varphi_1$) & UMDP & $7.401\times10^4$ & $10$ & $10^5$ & $248$ & $0.5$ & $68.708$ & $302.855$ \\
        (difficult) & UMDP & $5.932\times10^4$ & $10$ & $10^6$ & $285$ & $0.241$ & $61.583$ & $200.311$ \\
        & UMDP & $7.401\times10^4$ & $10$ & $10^6$ & $285$ & $0.165$ & $75.430$ & $207.516$ \\
        & IMDP (Learn Support) & $7.401\times10^4$ & $10$ & $10^6$ & $285$ & $0.166$ & $25.109$ & $21.606$ \\
        & Na\"ive IMDP & $5.932\times10^4$ & $10$ & $10^7$ & $340$ & $0.870$ & $45.865$ & $3.68$ \\
        & UMDP & $5.932\times10^4$ & $10$ & $10^7$ & $353$ & $0.170$ & $70.593$ & $178.599$ \\
        %& UMDP & $7.401\times10^4$ & $10$ & $10^7$ & $345$ & $0.1848$ & $62.510$ & $63.437$ \\
        \midrule
        Multiplicative & UMDP & $3.6\times10^{3}$ & $1$ & $4.705\times10^{3}$  & $247$ & $0.369$ & $0.156$ & $0.051$ \\
        noise ($\varphi_1$) & IMDP (Learn Support) & $3.6\times10^{3}$ & $1$ & $4.371\times10^{3}$  & $247$ & $0.377$ & $0.156$ & $0.011$ \\
        \cite{skovbekk2023formal} & UMDP & $3.6\times10^{3}$ & $1$ & $4.725\times10^{4}$  & $306$ & $0.296$ & $0.194$ & $0.064$ \\
        & UMDP & $3.6\times10^{3}$ & $1$ & $4.722\times10^{5}$  & $312$ & $0.258$ & $0.199$ & $0.070$ \\
        & UMDP & $9.22\times10^{3}$ & $1$ & $4.722\times10^{5}$  & $309$ & $0.243$ & $0.594$ & $0.306$ \\
        & IMDP & $9.22\times10^{3}$ & $1$ & $4.722\times10^{5}$  & $309$ & $0.256$ & $0.594$ & $0.063$ \\
        &\cite{gracia2024data} & $10^{4}$ & $1$ & $4.66\times10^{5}$  & $1066$ & $0.323$ & $13.149$ & $3.863$ \\
        \midrule
        $2D$ Unicycle ($\varphi_2$) & IMDP (Learn Support) & $3.6\times10^3$ & $8$& $5\times10^{3}$ & $36$ & $0.135$  & $0.257$ & $4.527$\\
        \cite{gracia2024data} & UMDP & $3.6\times10^3$ & $8$& $5\times10^{3}$ & $36$ & $0.087$  & $0.257$ & $9.012$\\
         & UMDP & $3.6\times10^3$ & $8$& $10^{4}$ & $35$ & $0.038$  & $0.254$ & $7.013$\\
         & UMDP & $3.6\times10^3$ & $8$& $10^{5}$ & $41$ & $0.003$  & $0.294$ & $5.622$\\
         %& UMDP & $3.6\times10^3$ & $8$& $10^{6}$ & $43$ & $0.003$  & $0.390$ & $2.767$\\
         & UMDP & $3.6\times10^3$ & $8$& $10^{7}$ & $45$ & $0.000$  & $0.321$ & $4.341$\\
         & \cite{gracia2024data} & $3.6\times10^3$ & $8$ & $10^{7}$ & $46$ & $0.030$  & $0.106$ & $3.043$\\
         \midrule
         4-Room Heating ($\varphi_3$) & UMDP & $2.074\times 10^4$ & $16$ & $5\times 10^4$ & $620$ & $0.042$ & $143.869$ & $5.739$\\
         \cite{abate2010approximate} & IMDP (Learn Support) & $2.074\times 10^4$ & $16$ & $5\times 10^4$ & $620$ & $0.102$ & $129.381$ & $2.99$\\
         & UMDP & $2.074\times 10^4$ & $16$ & $5\times 10^5$ & $753$ & $0.020$ & $258.441$ & $6.127$\\
        %& UMDP & $2.074\times 10^4$ & $16$ & $10^6$ & $818$ & $0.044$ & $282.854$ & $7.156$\\
         %& IMDP (Learn Support) & $2.074\times 10^4$ & $16$ & $10^6$ & $818$ & $0.069$ & $204.963$ & $3.296$
    \end{tabular}
    }
    \label{tab:case_studies}
\end{table}

We compare our approach against \cite{gracia2024data}, the only related work addressing the same problem, in Case Studies $1,3,5$-$6$. $2$ is a more challenging version of $1$, where $\w$ is unbounded and the pendulum cannot swing up in one go due to control saturation. Similarly, $4$ extends $3$ with unbounded noise of larger variance and a smaller goal set. Note that \cite{gracia2024data} relies on ambiguity set learning and cannot handle an unbounded $\w$.  We also show results obtained using a na\"ive IMDP abstraction with and without learned support, to show tightness of our approach. Additionally, we show the results of our approach for case studies $\#2$ and $\#6$ in Figure \ref{fig:results}, and for case study $\#4$ in Figure~\ref{fig:IMDP_example}.
 Table~\ref{tab:case_studies} summarizes our results, highlighting the clear advantages of our approach over \cite{gracia2024data}. Our method significantly reduces sample complexity, often by orders of magnitude, allowing for smaller abstractions while achieving tighter results, thus reducing computation times in most cases. For abstractions of the same size, our synthesis time is typically smaller, sometimes by an order of magnitude. Additionally, the table demonstrates that our approach produces tighter results (less error hence higher probabilistic guarantees) than using a na\"ive IMDP, showcasing the benefits of incorporating additional information into the definition of $\Gamma$, albeit with higher computational effort. Furthermore, UMDPs yield a smaller error, less than a half in some case studies, than when the abstraction is just an IMDP with an additional constraint related to learning the support of $P_W$, showing the benefit of a $2$-layer partition. We also compared the performance of Alg.~\ref{alg:2O} against the linear programming solver \emph{Linprog} on an abstraction with $|S| = 1600$ and $|A| = 8$, achieving the same guarantees but reducing the total synthesis time from 
% $48$ min to $1$ min.
$2880$s to $60$s, a reduction of $48\times$.

\begin{figure}[h]
    \begin{minipage}[b]{.3\linewidth}
        \centering
        \includegraphics[width=\linewidth]{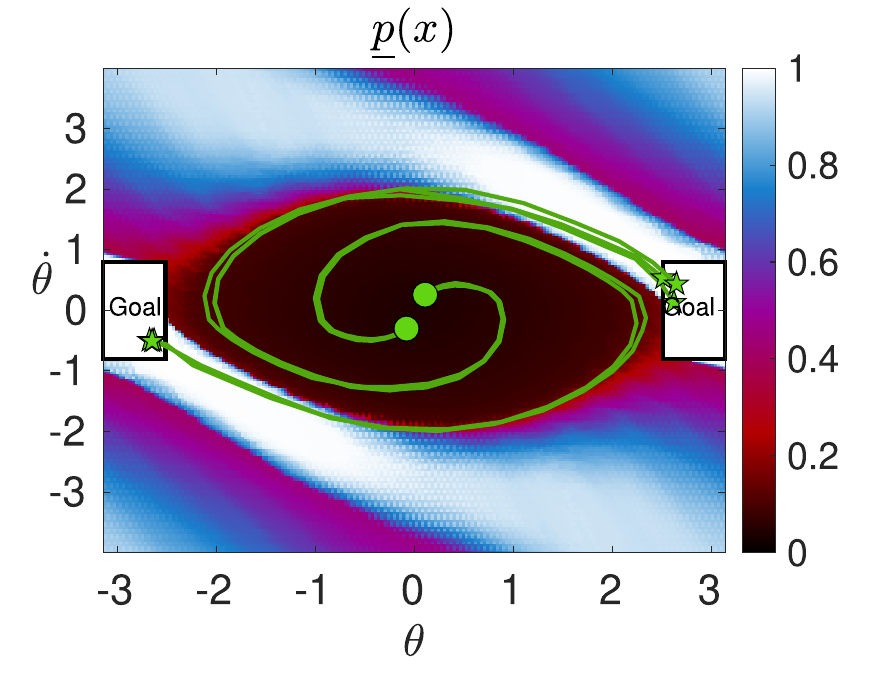}
        \subcaption{Pendulum ($N=5\times10^4$)}
    \end{minipage}
    \hfill
    \begin{minipage}[b]{.3\linewidth}
    \centering
        \includegraphics[width=\linewidth]{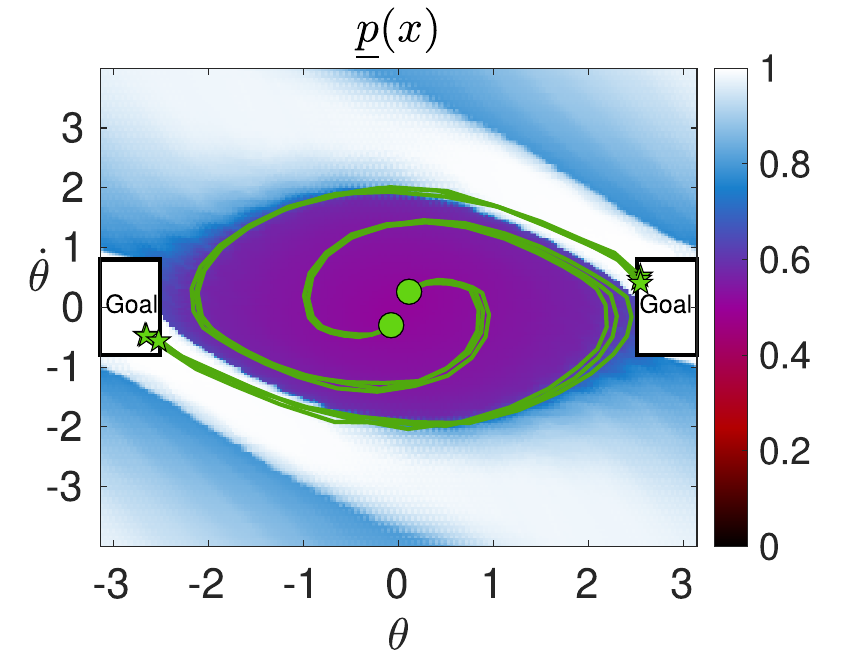}
        \subcaption{Pendulum ($N=10^5$)}
    \end{minipage}
    \hfill
    \begin{minipage}[b]{.3\linewidth}
    \centering
        \includegraphics[width=\linewidth]{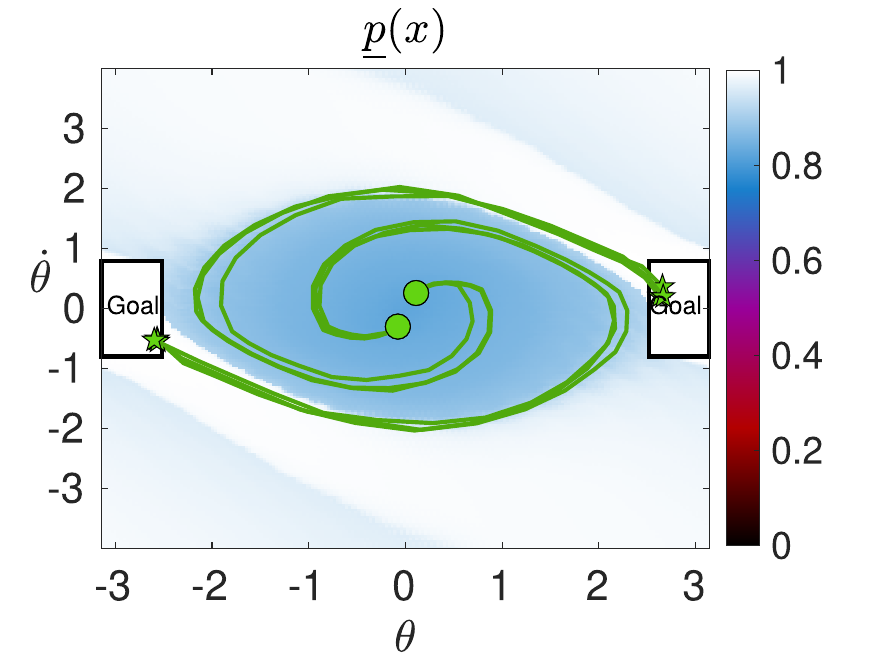}
        \subcaption{Pendulum ($N=10^6$)}
    \end{minipage}
    \hfill
    \begin{minipage}[b]{.3\linewidth}
        \centering
        \includegraphics[width=\linewidth]{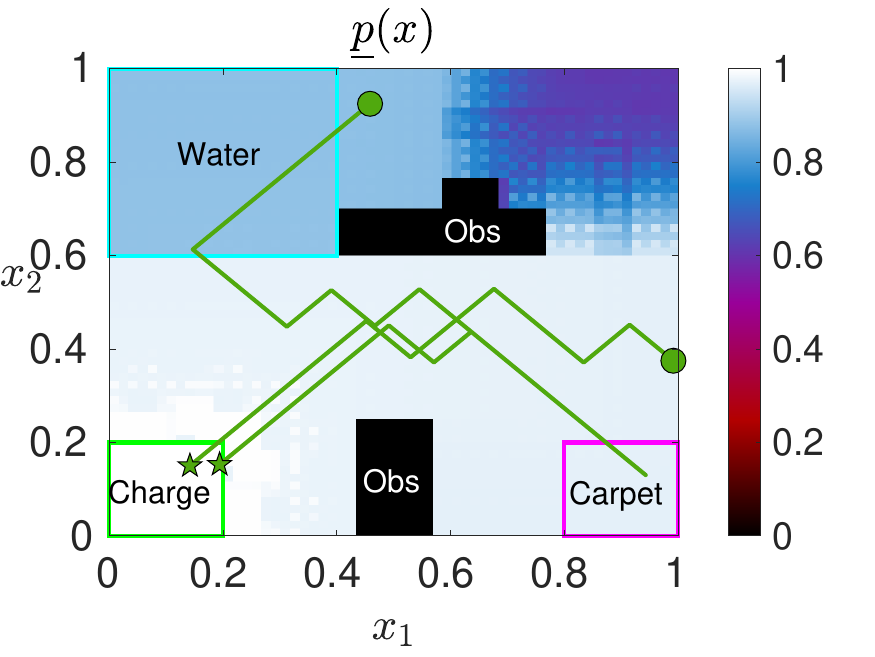}
        \subcaption{$2$D Unicycle ($N=5\times10^3$)}
    \end{minipage}
    \hfill
    \begin{minipage}[b]{.3\linewidth}
    \centering
        \includegraphics[width=\linewidth]{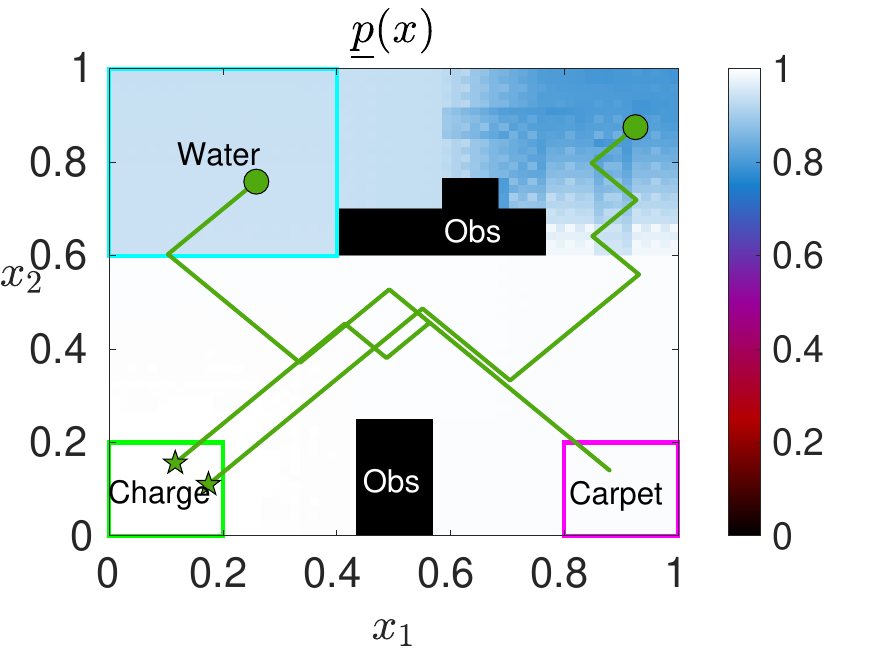}
        \subcaption{$2$D Unicycle ($N=10^4$)}
    \end{minipage}
    \hfill
    \begin{minipage}[b]{.3\linewidth}
    \centering
        \includegraphics[width=\linewidth]{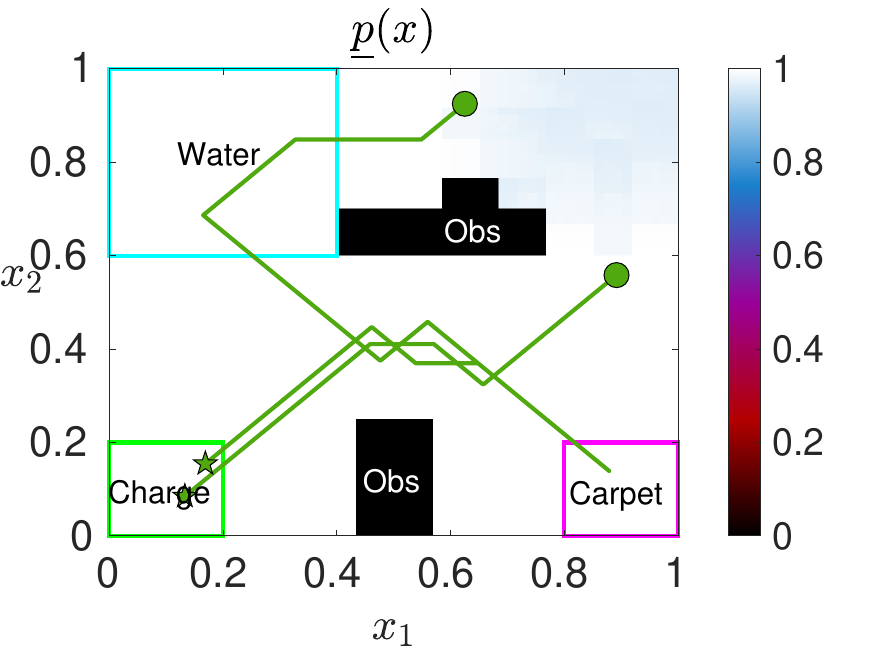}
        \subcaption{$2$D Unicycle ($N=10^5$)}
    \end{minipage}
    \caption{Synthesis results obtained in case studies $\#3$ (Figures (a-c)) and $\#6$ (Figures (d-f)). The figures show the lower bound in the probability of satisfying $\varphi_1$ and $\varphi_2$, respectively, along with trajectories of the system in closed loop with the synthesized strategy. It can be observed that increasing the number of samples raises this lower bound, and that all trajectories satisfy the respective specification.}
    \label{fig:results}
\end{figure}

\section{Conclusion}

We propose an approach to synthesize strategies for nonlinear stochastic systems with unknown disturbances via abstractions to UMDPs. We also identify pitfalls in the use of na\"ive abstractions for nonlinear systems and present a synthesis algorithm tailored to our UMDP class. Our extensive case studies show the efficacy and advantages of our framework w.r.t. existing works. In future research we plan to increase the tightness of our results by including additional information into the UMDP and investigate overlapping clusters.

\section{Acknowledgements}
I. Gracia and M. Lahijanian are supported in part by National Science Foundation (NSF) under grant number 2039062 and Air Force Research Lab (AFRL) under agreement number FA9453-22-2-0050.
L. Laurenti is partially supported by the NWO(grant OCENW.M.22.056).

\newpage

\appendix

\section{Measurability of the events in \eqref{eq:tp_bounds}}
\label{sec:measurability}

Let $r\in R$, $\tilde r\in\mathcal{B}(\reals^n)$ and $a\in A$, and consider the set-valued function $w\mapsto g(w) :=\text{Reach}(r,a,w)$, for all $w\in W$. Denote also by $B(0,\epsilon) \subset \reals^n$ the $\epsilon$-ball centered at the origin. We begin by stating the following technical lemma:
\begin{lemma}
\label{eq:lemma_epsilon_reach_set}
    For all $w\in W$, $\epsilon >0$, there exists a $\delta >0$ such that for all $w'\in W$ with $\|w - w'\| < \delta$, it holds that $g(w') \subseteq g(w)\oplus B(0,\epsilon)$.
\end{lemma}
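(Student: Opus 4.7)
The plan is a direct application of Assumption~\ref{ass:lipschitz}. First I would restrict attention to regions $r \in R$ with $r \subseteq X$ (which is the case for every $r$ other than $r_{\text{unsafe}}$, and the case $r = r_{\text{unsafe}}$ is irrelevant for the subsequent measurability argument since $\Gamma_{s_{|S|},a}$ is defined separately in Definition~\ref{def:umdp abstraction}). Fix $w \in W$ and $\epsilon > 0$, and let $L_a$ be the Lipschitz constant associated with the action $a$ through Assumption~\ref{ass:lipschitz}.

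Next I would choose $\delta := \epsilon / L_a > 0$ and take an arbitrary $w' \in W$ with $\|w - w'\| < \delta$. To show the inclusion $g(w') \subseteq g(w) \oplus B(0,\epsilon)$, I would pick an arbitrary $y \in g(w') = \text{Reach}(r,a,w')$. By definition of the reachable set, there exists $x \in r \subseteq X$ with $y = f(x,a,w')$. Setting $y_0 := f(x,a,w) \in g(w)$, Assumption~\ref{ass:lipschitz} directly yields
\begin{equation*}
\|y - y_0\| = \|f(x,a,w') - f(x,a,w)\| \le L_a \|w' - w\| < L_a \delta = \epsilon.
\end{equation*}
Hence $y = y_0 + (y - y_0) \in g(w) \oplus B(0,\epsilon)$. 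Since $y \in g(w')$ was arbitrary, this proves the claim.

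There is essentially no hard step here: the lemma is a uniform-in-$x$ consequence of the Lipschitz hypothesis, and the only thing to watch is that every $x$ used in the argument lies in $X$ so that Assumption~\ref{ass:lipschitz} applies. The main conceptual point, worth stating explicitly in the proof, is that the uniformity of $L_a$ over $x \in X$ is exactly what allows us to choose a single $\delta$ that works for all points of the reachable set simultaneously, yielding the set-valued (Hausdorff-type) continuity of $g$ that is subsequently used to establish measurability of the events in~\eqref{eq:tp_bounds}.
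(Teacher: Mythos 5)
Your proof is correct and follows essentially the same route as the paper's: choose $\delta = \epsilon/L_a$, take an arbitrary point $f(x,a,w')$ of $g(w')$, and pair it with $f(x,a,w)\in g(w)$ using the uniform-in-$x$ Lipschitz bound from Assumption~\ref{ass:lipschitz}. Your explicit remark that the argument requires $r\subseteq X$ (so the assumption applies to every $x$ used) is a small but worthwhile clarification that the paper leaves implicit.
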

\begin{proof}
Pick $w\in W$, $\epsilon >0$, and let $\delta = \epsilon/L_u$. We want to show that for all $w'\in W$ with $\|w - w'\| < \delta$ and $y'\in g(w')$ there exists a $y \in g(w)$ that is only $\epsilon$-apart. Then the statement in the proposition follows. Define $x$ such that $f(x,a,w') = y'$, which is possible by definition of $g(w')$, and let $y = f(x,a,w)$, which implies $y\in g(w)$. By Assumption~\ref{ass:lipschitz}, we obtain that $\|y-y'\| = \|f(x,a,w) - f(x,a,w')\| \le L_u\|w-w'\| < \epsilon$. Thus the proof is concluded.
\end{proof}
\begin{proposition}
    The events in Expression~\eqref{eq:tp_bounds} are measurable.
\end{proposition}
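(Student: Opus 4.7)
The plan is to establish measurability first on open and closed $\tilde r$ using Lemma~\ref{eq:lemma_epsilon_reach_set}, and then extend to all Borel sets via a monotone class argument. Let $E_1(\tilde r) := \{w \in W : g(w) \subseteq \tilde r\}$ and $E_2(\tilde r) := \{w \in W : g(w) \cap \tilde r \neq \emptyset\}$ denote the two events; they are dual via $E_1(\tilde r) = W \setminus E_2(\tilde r^c)$, so it suffices to prove one family is Borel measurable over $\mathcal{B}(\reals^n)$, provided we keep track of both simultaneously. A preliminary observation is that $g(w)$ is compact for each $w$: the cells $r \in R$ can be taken closed and bounded, and $x \mapsto f(x,a,w)$ is continuous, so $g(w) = f(r,a,w)$ is the continuous image of a compact set.

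The core step is to show that $E_1(O)$ is open in $W$ for every open $O \subseteq \reals^n$. Given $w_0 \in E_1(O)$, compactness of $g(w_0)$ together with openness of $O$ furnishes $\epsilon > 0$ such that $g(w_0) \oplus B(0,\epsilon) \subseteq O$. Lemma~\ref{eq:lemma_epsilon_reach_set} then yields $\delta > 0$ with $g(w') \subseteq g(w_0) \oplus B(0,\epsilon) \subseteq O$ for every $w' \in W$ satisfying $\|w' - w_0\| < \delta$, placing $w_0$ in the interior of $E_1(O)$. By duality, $E_2(F) = W \setminus E_1(F^c)$ is closed for every closed $F \subseteq \reals^n$. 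Since every open set is an $F_\sigma$ and $E_2$ commutes with countable unions, $E_2(O)$ is an $F_\sigma$ (hence Borel) for every open $O$, and by complementation $E_1(F)$ is $G_\delta$ (hence Borel) for every closed $F$.

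To extend to all Borel $\tilde r$, I would define $\mathcal{D} := \{A \in \mathcal{B}(\reals^n) : E_1(A), E_2(A) \in \mathcal{B}(W)\}$. By the previous step $\mathcal{D}$ contains every open and every closed set, and by duality it is closed under complements. The key commutation properties are that $E_2$ distributes over countable unions directly, while for a decreasing sequence of closed sets $F_n \downarrow F$ the finite intersection property applied to the compact sets $g(w) \cap F_n$ yields $E_2(\bigcap_n F_n) = \bigcap_n E_2(F_n)$; the analogues for $E_1$ follow by duality. A monotone class argument then promotes $\mathcal{D}$ to the $\sigma$-algebra generated by open and closed sets, i.e.\ $\mathcal{B}(\reals^n)$. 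The hardest part is precisely this final monotone class step: neither $E_1$ nor $E_2$ commutes with the full Boolean algebra of operations on $\tilde r$, so compactness of $g(w)$ must be invoked at each stage (specifically to exchange $E_2$ with decreasing intersections) in order to lift measurability from the generators to all Borel sets.
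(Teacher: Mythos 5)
Your first two paragraphs are sound and are essentially the paper's argument in different clothing: showing that $E_1(O)$ is open for every open $O$ (equivalently, that $E_2(K)$ is closed for every compact/closed $K$) is exactly the upper semi-continuity of $w \mapsto g(w)$ that the paper establishes from Lemma~\ref{eq:lemma_epsilon_reach_set}, and your use of compactness of $g(w_0)$ to extract the $\epsilon$-fattening $g(w_0)\oplus B(0,\epsilon)\subseteq O$ is correct. The divergence, and the problem, is in the last step. The class $\mathcal{D} := \{A \in \mathcal{B}(\reals^n) : E_1(A), E_2(A) \in \mathcal{B}(W)\}$ is indeed closed under complements via the duality $E_1(A)=W\setminus E_2(A^c)$, but it is \emph{not} shown to be closed under countable unions: $E_2$ distributes over arbitrary countable unions, but $E_1(\bigcup_n A_n) = W\setminus E_2\bigl(\bigcap_n A_n^c\bigr)$, and $E_2$ does not commute with countable intersections of arbitrary Borel sets. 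The finite-intersection-property argument you invoke rescues only decreasing intersections of \emph{closed} sets; once you pass to level two of the Borel hierarchy the sets $A_n^c$ are no longer closed and the compactness argument is unavailable. For the same reason $\mathcal{D}$ fails to be a $\lambda$-system (increasing unions break $E_1$: $g(w)\subseteq\bigcup_n A_n$ does not force $g(w)\subseteq A_N$ for some $N$ unless the $A_n$ are open). So neither the monotone class theorem nor Dynkin's lemma applies to $\mathcal{D}$ as constructed, and the extension from open/closed generators to all of $\mathcal{B}(\reals^n)$ is a genuine gap, not a routine bookkeeping step. This is a known obstruction in the theory of set-valued maps: the hit-functional $B\mapsto E_2(B)$ commutes with unions but not with complements or intersections, which is precisely why the general statement cannot be obtained by transfinite induction through the Borel hierarchy.

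The paper closes this gap differently: after establishing upper semi-continuity it concludes that $g$ is a measurable random closed set and then cites the Fundamental Measurability Theorem \cite[Theorem 2.3]{molchanov2005theory} to pass from measurability of $\{w : g(w)\cap K \neq \emptyset\}$ on compact (or open) test sets to measurability on all Borel $\tilde r$. That theorem is not proved by a monotone class argument; it rests on the projection theorem / Choquet capacitability machinery, and in full generality yields measurability with respect to the completion of $\mathcal{B}(W)$ under $P$, which suffices for the probabilities in \eqref{eq:tp_bounds} to be well defined. If you want to keep your elementary route, the honest fix is either to (a) restrict $\tilde r$ to the sets actually used in the abstraction (finite unions of partition cells, which are locally closed, hence simultaneously $F_\sigma$ and $G_\delta$, and covered by your open/closed cases together with countable unions of $E_2$), or (b) explicitly invoke the random-set measurability theorem for the general Borel case as the paper does.
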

\begin{proof}
    Denote by $\mathcal{F}$, $\mathcal{G}$ and $\mathcal{K}$ respectively the sets of all closed, open and compact subsets of $\mathbb{R}^n$ with respect to the usual topology. Consider the Fell topology $\mathcal{T}(\mathcal{F})$ on $\mathcal{F}$, i.e. the one generated by the sets
\begin{align*}
    \mathcal{F}^K := \big\{F \in \mathcal{F}(\mathbb{R}^n) : F \bigcap K = \emptyset\big\}\qquad\text{and}\qquad
    \mathcal{F}_G := \big\{F \in \mathcal{F}(\mathbb{R}^n) : F \bigcap G \neq \emptyset\big\},
\end{align*}
for all $G\in\mathcal{G}$ and $K\in\mathcal{K}$. Denote by $\mathcal{B(\mathcal{F})}$ the Borel $\sigma$-algebra on $(\mathcal{F}, \mathcal{T}(\mathcal{F}))$. Denote also by $\mathcal{T}(W)$ the usual topology on $W$ and by $\mathcal{B}(W)$ the Borel $\sigma$-algebra on $W$. For simplicity, assume that $\text{Reach}(r,a,w)$ is closed and note that if it is not, we can always consider its closure without compromising the soundness of the approach. First, we establish that the function $g$ is $(W, \mathcal{B}(W))$-$(\mathcal{F},\mathcal{B}(\mathcal{F}))$ measurable by proving a sufficient condition, namely, upper semi-continuity of $g$: $g$ is upper semi-continuous if $g^-(K) := \{w\in W : g(w)\bigcap K \neq \emptyset\}$ is closed on $(W,\mathcal{T}(W))$ for all $K\in\mathcal{K}$. We proceed by contradiction: assume that $g^-(K)$ is not closed for some $K\in\mathcal{K}$. Then, there must exist a $w_\infty\in W\setminus g^-(K)$ and a sequence $\{w_n\}_{n\in\naturals} \subset g^-(K)$ that converges to $w_\infty$. Since $K$ and $g(w_\infty)$ are closed and non-intersecting, there exists $\epsilon>0$ such that $(g(w_\infty)\oplus B(0,\epsilon))\bigcap K = \emptyset$. %It is straightforward to note that $w_\infty$ must belong to the boundary of $g^-(K)$.
By convergence of $\{w_n\}_{n\in\mathbb{N}}$, for every $\delta >0$, there exists $N \in \mathbb{N}$ such that $\|w_\infty-w_N\| < \delta$. This implies, by Lemma~\ref{eq:lemma_epsilon_reach_set}, that we can choose $N$ large enough so that we ensure, $g(w_N) \subseteq g(w_\infty)\oplus B(0,\epsilon)$. It follows that $g(w_N)\bigcap K \subseteq (g(w_\infty)\oplus B(0,\epsilon))\bigcap K = \emptyset$, which proves that $w_N\notin g^-(K)$, being this result a contradiction. Therefore $g^-(K)$ is closed in $(W,\mathcal{T}(W))$ for all $K\in\mathcal{K}$, making $g$ upper-semicontinuous and measurable, i.e., $w \mapsto \text{Reach}(r,a,w)$ is a well-defined random set \cite{molchanov2005theory}. Finally, since $\tilde r\in\mathcal{B}(\reals^n)$, %can be written as the union of countably many closed sets, 
the event $\{w\in W : g(w) \bigcap \tilde r \neq \emptyset\}$ is measurable, i.e., it belongs to $\mathcal{B}(W)$ \cite[Theorem 2.3]{molchanov2005theory}. Furthermore, since $\{w\in W : g(w) \subseteq\tilde r\} = W\setminus\{w\in W : g(w) \bigcap \reals^n\setminus\tilde r \neq \emptyset\}$ and $\reals^n\setminus\tilde r\in\mathcal{B}(\reals^n)$, the event $\{w\in W : g(w) \subseteq \tilde r\}$ is also in $\mathcal{B}(W)$, thus concluding the proof.
\end{proof}
\section{Details of the Case Studies}
\label{app:appendix_B_arXiv}

In this section we describe in detail our case-studies, and provide precise values of all system and approach-related parameters.

We consider $3$ different specifications:
\begin{itemize}
    \item Reach-avoid $\varphi_1 := \Box (\neg \Prop_\text{unsafe}) \land \Diamond \text{goal}$,
    \item Complex \LTLf specification $\varphi_2 := \Box(\neg\Prop_\text{unsafe}) \land \Box(\rm{water}\rightarrow(\neg \rm{charge}\: \mathcal{U}\rm{carpet}))\land\Diamond(\rm{charge})$ considered in \cite{vazquez2018learning} and representing the task of reaching a charge station while remaining safe and, if the system goes through a region with water, then first drying in a carpet before charging,
    \item $15$-step safety specification $\varphi_3 := \Box^{\le 15} (\neg \Prop_\text{unsafe})$ considered in \cite{abate2010approximate},
\end{itemize}
where $\Box^{\le 15}$ is the \emph{bounded globally} operator, interpreted as ``something happens at all time steps less or equal than $15$''. The confidence level in all case studies is $1-\alpha = 1-10^{-9}$ and $\beta_c = \alpha/2$.

\subsection{Pendulum}

Case studies $\#1$ and $\#2$ both consider the problem of swinging up a pendulum, starting from the downward orientation, without ever exceeding some limits in its angular velocity. The state of the system is $2$-dimensional, composed by the angle $\boldsymbol{\theta}_k$ w.r.t. the vertical (downward) position, and the angular velocity $\boldsymbol{\dot\theta}_k$, whereas the control input $u_k$ is the torque applied at the joint. The disturbance $\w_k$ corresponds to a horizontal wind disturbance, which generates an aerodynamic drag force that is quadratic in $\w_k$ and depends on the state in a nonlinear fashion. The system dynamics are
\begin{align*}
    \begin{bmatrix}
        \boldsymbol{\theta}_{k+1}\\
        \boldsymbol{\dot\theta}_{k+1}
    \end{bmatrix} = \begin{bmatrix}
        \boldsymbol{\theta}_k + \Delta t \boldsymbol{\dot\theta}_k\\
        \boldsymbol{\dot\theta}_k + \Delta t \big( -c_d\:\text{sign}(l\boldsymbol{\dot\theta}_k - \w_k\cos(\boldsymbol{\theta}_k)) (l\boldsymbol{\dot\theta}_k - \w_k\cos(\boldsymbol{\theta}_k))^2 - \sin(\boldsymbol{\theta}_k) + u_k \big)
    \end{bmatrix}.
\end{align*}
In the first case study in Table~\ref{tab:case_studies}, we let $\Delta t = 0.25$, $c_d = 0.3$ and $\w_k$ be distributed according to a zero-mean Gaussian distribution with covariance $0.04$, truncated in the interval $[-1,1]$. Note that even though our approach can handle unbounded disturbances, for this system to satisfy Assumption~\ref{ass:lipschitz}, we need the support $W$ of $P_W$ to be bounded, even if the exact bounds are unknown to our approach. We also let $X$ be the set of all states $[\theta,\dot\theta]$ such that $|\dot\theta| \le 3$, and define $U$ by uniformly discretizing the interval $[-0.8, 0.8]$ into $5$ values. Furthermore, we define the $\text{goal}$ set as the set of states $[\theta, \dot\theta]$ such that $\theta$ is within $0.628$ of the upward position and $|\dot\theta| \le 0.6$. On the other hand, in the second case study in Table~\ref{tab:case_studies} we let $\Delta t = 0.3$, $c_d = 0.2$ and $\w_k \sim \mathcal{N}(0, 0.049)$ truncated on the interval $[-3, 3]$. Again, in our approach we let the support $W$ of $P_W$ be unknown. We also let $X$ contain all states with an angular velocity $|\dot\theta| \le 4$, and define $U$ by discretizing the interval $[-0.415, 0.415]$ as explained before. Note that the limit in the maximum control input in case study $\#2$ makes it impossible to swing up the pendulum in one attempt, hence making it necessary to swing the pendulum back and forth repeatedly to achieve the swing up. For this reason, we refer to case study $\#2$ as ``torque limited''. Furthermore, we let the $\text{goal}$ set contain all states $[\theta, \dot\theta]$ such that $\theta$ is within $0.628$ of the upward position and $|\dot\theta| \le 0.6$ (see Figure~\ref{fig:results}(a-c)). In both case studies we partition the state-space via a uniform grid, and learn the support of $P_W$ by setting $\epsilon_c = 0.001$. When constructing our UMDP abstractions, we define the coarse layer $Q$ by uniformly clustering the regions in $R$. Each cluster $q$ contains the abstract states in $S$ corresponding to $2\times 2$ regions in $R$, as shown in Figure~\ref{fig:figure}.

\subsection{3D Unicycle}

The system is a kinematic model of a unicycle, where the state is $3$-dimensional, composed by the $2$D components $[\boldsymbol{x}_k, \boldsymbol{y}_k, \boldsymbol{\theta}_k]$ of the position in the plane and the unicycle's heading angle. The control inputs are the linear velocity and the yaw (heading angle) rate. The disturbance $\w_k$ represents \emph{Coulomb friction}, which generates a force in the opposite direction of the linear velocity. Its dynamics are the following:
\begin{align}
\label{eq:3D_unicycle}
    \begin{bmatrix}
        \boldsymbol{x}_{k+1}\\
        \boldsymbol{y}_{k+1}\\
        \boldsymbol{\theta}_{k+1}
    \end{bmatrix} = \begin{bmatrix}
        \boldsymbol{x}_{k} + \Delta t ( u_k^{(1)} - c_d^{(1)} \w_k^{(1)} )\cos(
        \boldsymbol{\theta}_{k})\\
        \boldsymbol{y}_{k} + \Delta t ( u_k^{(1)} - c_d^{(1)} \w_k^{(1)} )\sin(
        \boldsymbol{\theta}_{k})\\
        \boldsymbol{\theta}_k + \Delta t\: u_k^{(2)} + c_d^{(2)} \w_k^{(2)}
    \end{bmatrix}
\end{align}
Note that the effect of the disturbance is nonlinear in the state. In the third and fourth case studies in Table~\ref{tab:case_studies}, we let $\Delta t = 0.5$ and $c_d = [0.1, 0.05]$. We also let $X$ be the set of all states whose position is in $[0,1]^2$, except for a rectangular obstacle in the center of $X$ whose edges are $0.154$ long (see Figure~\ref{fig:IMDP_example}). In case study $\#3$, $\w_k$ is distributed according to a Gaussian distribution with mean $[0.4, 0]^T$ and covariance $\text{diag}(0.067^2, 0.067^2)$, truncated on a ball of radius $1$ centered on its mean. In this case study the support $W$ of $P_W$ is known. For this reason, we define $\widehat{\text{Post}}(s,a)$, and therefore our UMDP abstraction, using $W$ instead of $\widehat W$, and let $\epsilon_c = 0$, as the successor state of the pair $(s,a)$ will be in $\widehat{\text{Post}}(s,a)$ with probability $1$. We also define $U := \{0.21, 0.3\}\times\{ -2, -1, 0, 1, 2 \}$. Furthermore, the $\text{goal}$ set is defined as the set of states whose position components lie inside the box $[0, 0.359]^2$. On the other hand, in case study $\#4$, $\w_k$ is distributed according to a non-truncated Gaussian distribution with the same mean as in case study $\#3$, but a significantly larger covariance of $\text{diag}(0.2^2, 0.2^2)$. We also let $U := \{0.15, 0.3\}\times\{ -2, -1, 0, 1, 2 \}$, and define the $\text{goal}$ set as the smaller box $[0, 0.256]^2$. Given the smaller goal set and the higher covariance and support of the disturbance, we denote case study $\#4$ as ``$3$D unicycle (difficult)'', and we learn the support of $P_W$ by setting $\epsilon_c = 0.001$. In both case studies we partition the state-space via a uniform grid. When constructing our UMDP abstractions, we define the coarse layer $Q$ by uniformly clustering the regions in $R$. Each cluster $q$ contains the abstract states in $S$ corresponding to $3\times 3\times 3$ regions in $R$ (see Figure~\ref{fig:figure} for the intuition).

\subsection{Multiplicative Noise}

Case study $\#5$ is a $2$-dimensional system with multiplicative noise taken from \cite{skovbekk2023formal}. In this case study, the support $W$ of $P_W$ is known. Because of this, we use it in Definition~\ref{def:umdp abstraction} instead of $\widehat W$. When constructing our UMDP abstractions, we define the coarse layer $Q$ by uniformly clustering the regions in $R$. Each cluster $q$ contains the abstract states in $S$ corresponding to $3\times 3$ regions in $R$ (see Figure~\ref{fig:figure} for the intuition).

\subsection{2D Unicycle}

Case study $\#6$ is a $2$-dimensional unicycle model obtained from  \eqref{eq:3D_unicycle} by fixing the first component of the input to $0.3$ and by considering the heading angle $\theta_k$ as the input. The latter takes values in the set $U$, obtained via a uniform discretization of the interval $[-\pi,\pi]$ into $8$ values. We consider $c_d=0.2$, $\Delta t = 0.5$, and let $\w_k$ be distributed according to a Gaussian with mean $0.4$ and variance $0.067^2$, truncated on the interval $[-0.6, 1.4]$. We let this support be known, and therefore we define $\widehat{\text{Post}}(s,a)$, using $W$ instead of $\widehat W$, and let $\epsilon_c = 0$, as the successor state of the pair $(s,a)$ will be in $\widehat{\text{Post}}(s,a)$ with probability $1$. The size and position of the regions of interest is depicted in Figure~\ref{fig:results} (d-f). We discretize the state space uniformly and, when constructing our UMDP abstraction, we define the coarse layer $Q$ in such a way that each cluster $q$ corresponds to $2\times 2$ adjacent regions in the partition $R$, i.e., each cluster corresponds to $4$ abstract states in $S$, as shown in Figure~\ref{fig:figure}.% We also consider $\epsilon_c = 0.001$.

\subsection{4-Room Heating}

Case study $\#7$ is a $4$-dimensional system whose state is composed by the temperatures of $4$ rooms, and its dynamics are taken from \cite{abate2010approximate} and modified to make the noise multiplicative:
\begin{align*}
    \x_{k+1} = \text{diag}(1 + \w_k)A\x_t + b + b_u u_k,
\end{align*}
with
\begin{align*}
    A := \begin{bmatrix}
    0.901 &0.0625 &0 &0\\
     0.0625 & 0.839 & 0.0625 &0 &0\\
     0 & 0.0625 & 0.839 & 0.0625\\
     &0 &0 & 0.0625 &0.901
    \end{bmatrix}, \quad b := \begin{bmatrix}
        0.219\\
        0.219\\
        0.219\\
        0.219
    \end{bmatrix},
\end{align*}
and $b_u = \text{diag}(1, 1, 1, 1)$. We let each component of $\w_k$ be Gaussianly distributed with mean $0$ and covariance $4.44\times 10^{-5}$, and let all components be independent from each other. We define the safe set as $X := [18.5, 23.5]^4$. Note that even though the covariance of the disturbance is small, its effect on the dynamics is way bigger because of the multiplicative nature of $\w_k$. We define the set $U$ as the set of all binary-valued vectors in $\reals^n$, meaning that each control input $u\in U$ is such that $u^{(i)} = 1$ if the radiator of the $i$-th room is on and $0$ if it is off, for all $i\in \{1,2,3,4\}$. We partition the state-space uniformly to obtain the abstraction, and use $\epsilon_c = 0.001$. When constructing our UMDP abstractions, we define the coarse layer $Q$ by uniformly clustering the regions in $R$. Each cluster $q$ contains the abstract states in $S$ corresponding to $3\times 3 \times 3 \times 3$ regions in $R$ (see Figure~\ref{fig:figure} for the intuition).

\bibliographystyle{IEEEtran}\bibliography{yourbibfile}

% Generated by IEEEtran.bst, version: 1.14 (2015/08/26)
\begin{thebibliography}{10}
\providecommand{\url}[1]{#1}
\csname url@samestyle\endcsname
\providecommand{\newblock}{\relax}
\providecommand{\bibinfo}[2]{#2}
\providecommand{\BIBentrySTDinterwordspacing}{\spaceskip=0pt\relax}
\providecommand{\BIBentryALTinterwordstretchfactor}{4}
\providecommand{\BIBentryALTinterwordspacing}{\spaceskip=\fontdimen2\font plus
\BIBentryALTinterwordstretchfactor\fontdimen3\font minus \fontdimen4\font\relax}
\providecommand{\BIBforeignlanguage}[2]{{%
\expandafter\ifx\csname l@#1\endcsname\relax
\typeout{** WARNING: IEEEtran.bst: No hyphenation pattern has been}%
\typeout{** loaded for the language `#1'. Using the pattern for}%
\typeout{** the default language instead.}%
\else
\language=\csname l@#1\endcsname
\fi
#2}}
\providecommand{\BIBdecl}{\relax}
\BIBdecl

\bibitem{belta2005discrete}
C.~Belta, A.~Bicchi, M.~Egerstedt, E.~Frazzoli, E.~Klavins, and G.~J. Pappas, ``Symbolic planning and control of robot motion [grand challenges of robotics],'' \emph{IEEE Robotics \& Automation Magazine}, vol.~14, no.~1, pp. 61--70, 2007.

\bibitem{lavaei2022automated}
A.~Lavaei, S.~Soudjani, A.~Abate, and M.~Zamani, ``Automated verification and synthesis of stochastic hybrid systems: A survey,'' \emph{Automatica}, vol. 146, p. 110617, 2022.

\bibitem{de2013linear}
G.~De~Giacomo and M.~Y. Vardi, ``Linear temporal logic and linear dynamic logic on finite traces,'' in \emph{IJCAI'13 Proceedings of the Twenty-Third international joint conference on Artificial Intelligence}.\hskip 1em plus 0.5em minus 0.4em\relax Association for Computing Machinery, 2013, pp. 854--860.

\bibitem{iyengar2005robust}
G.~N. Iyengar, ``Robust dynamic programming,'' \emph{Mathematics of Operations Research}, vol.~30, no.~2, pp. 257--280, 2005.

\bibitem{lahijanian2015formal}
M.~Lahijanian, S.~B. Andersson, and C.~Belta, ``Formal verification and synthesis for discrete-time stochastic systems,'' \emph{IEEE Transactions on Automatic Control}, vol.~60, no.~8, pp. 2031--2045, 2015.

\bibitem{givan2000bounded}
R.~Givan, S.~Leach, and T.~Dean, ``Bounded-parameter markov decision processes,'' \emph{Artificial Intelligence}, vol. 122, no. 1-2, pp. 71--109, 2000.

\bibitem{cauchi2019efficiency}
N.~Cauchi, L.~Laurenti, M.~Lahijanian, A.~Abate, M.~Kwiatkowska, and L.~Cardelli, ``Efficiency through uncertainty: Scalable formal synthesis for stochastic hybrid systems,'' in \emph{Proceedings of the 22nd ACM international conference on hybrid systems: computation and control}, 2019, pp. 240--251.

\bibitem{skovbekk2023formal}
J.~Skovbekk, L.~Laurenti, E.~Frew, and M.~Lahijanian, ``Formal abstraction of general stochastic systems via noise partitioning,'' \emph{IEEE Control Systems Letters}, 2023.

\bibitem{el2005robust}
L.~El~Ghaoui and A.~Nilim, ``Robust solutions to markov decision problems with uncertain transition matrices,'' \emph{Operations Research}, vol.~53, no.~5, pp. 780--798, 2005.

\bibitem{wolff2012robust}
E.~M. Wolff, U.~Topcu, and R.~M. Murray, ``Robust control of uncertain markov decision processes with temporal logic specifications,'' in \emph{2012 IEEE 51st IEEE Conference on decision and control (CDC)}.\hskip 1em plus 0.5em minus 0.4em\relax IEEE, 2012, pp. 3372--3379.

\bibitem{jackson2021strategy}
J.~Jackson, L.~Laurenti, E.~Frew, and M.~Lahijanian, ``Strategy synthesis for partially-known switched stochastic systems,'' in \emph{Proceedings of the 24th International Conference on Hybrid Systems: Computation and Control}, 2021, pp. 1--11.

\bibitem{adams2022formal}
S.~Adams, M.~Lahijanian, and L.~Laurenti, ``Formal control synthesis for stochastic neural network dynamic models,'' \emph{IEEE Control Systems Letters}, vol.~6, pp. 2858--2863, 2022.

\bibitem{gracia2022distributionally}
I.~Gracia, D.~Boskos, M.~Lahijanian, L.~Laurenti, and M.~Mazo~Jr, ``Distributionally robust strategy synthesis for switched stochastic systems,'' \emph{arXiv preprint arXiv:2212.14260}, 2022.

\bibitem{badings2023robust}
T.~Badings, L.~Romao, A.~Abate, D.~Parker, H.~A. Poonawala, M.~Stoelinga, and N.~Jansen, ``Robust control for dynamical systems with non-gaussian noise via formal abstractions,'' \emph{Journal of Artificial Intelligence Research}, vol.~76, pp. 341--391, 2023.

\bibitem{badings2023probabilities}
T.~Badings, L.~Romao, A.~Abate, and N.~Jansen, ``Probabilities are not enough: Formal controller synthesis for stochastic dynamical models with epistemic uncertainty,'' in \emph{Proceedings of the AAAI Conference on Artificial Intelligence}, vol.~37, no.~12, 2023, pp. 14\,701--14\,710.

\bibitem{kazemi2024data}
M.~Kazemi, R.~Majumdar, M.~Salamati, S.~Soudjani, and B.~Wooding, ``Data-driven abstraction-based control synthesis,'' \emph{Nonlinear Analysis: Hybrid Systems}, vol.~52, p. 101467, 2024.

\bibitem{coppola2022data}
R.~Coppola, A.~Peruffo, and M.~Mazo~Jr, ``Data-driven abstractions for verification of deterministic systems,'' \emph{arXiv preprint arXiv:2211.01793}, 2022.

\bibitem{coppola2023data}
R.~Coppola, A.~Peruffo, and M.~Mazo, ``Data-driven abstractions for verification of linear systems,'' \emph{IEEE Control Systems Letters}, vol.~7, pp. 2737--2742, 2023.

\bibitem{lechner2022stability}
M.~Lechner, D.~{\v{Z}}ikeli{\'c}, K.~Chatterjee, and T.~A. Henzinger, ``Stability verification in stochastic control systems via neural network supermartingales,'' in \emph{Proceedings of the AAAI Conference on Artificial Intelligence}, vol.~36, no.~7, 2022, pp. 7326--7336.

\bibitem{badings2024learning}
T.~Badings, W.~Koops, S.~Junges, and N.~Jansen, ``Learning-based verification of stochastic dynamical systems with neural network policies,'' \emph{arXiv preprint arXiv:2406.00826}, 2024.

\bibitem{mazouz2024piecewise}
R.~Mazouz, F.~B. Mathiesen, L.~Laurenti, and M.~Lahijanian, ``Piecewise stochastic barrier functions,'' \emph{arXiv preprint arXiv:2404.16986}, 2024.

\bibitem{schon2023bayesian}
O.~Sch{\"o}n, B.~van Huijgevoort, S.~Haesaert, and S.~Soudjani, ``Bayesian approach to temporal logic control of uncertain systems,'' \emph{arXiv preprint arXiv:2304.07428}, 2023.

\bibitem{mathiesen2023inner}
F.~B. Mathiesen, L.~Romao, S.~C. Calvert, A.~Abate, and L.~Laurenti, ``Inner approximations of stochastic programs for data-driven stochastic barrier function design,'' in \emph{2023 62nd IEEE Conference on Decision and Control (CDC)}.\hskip 1em plus 0.5em minus 0.4em\relax IEEE, 2023, pp. 3073--3080.

\bibitem{gracia2025efficient}
I.~Gracia, D.~Boskos, M.~Lahijanian, L.~Laurenti, and M.~Mazo~Jr, ``Efficient strategy synthesis for switched stochastic systems with distributional uncertainty,'' \emph{Nonlinear Analysis: Hybrid Systems}, vol.~55, p. 101554, 2025.

\bibitem{salamati2021data}
A.~Salamati, A.~Lavaei, S.~Soudjani, and M.~Zamani, ``Data-driven verification and synthesis of stochastic systems through barrier certificates,'' \emph{arXiv preprint arXiv:2111.10330}, 2021.

\bibitem{gracia2024data}
I.~Gracia, D.~Boskos, L.~Laurenti, and M.~Lahijanian, ``Data-driven strategy synthesis for stochastic systems with unknown nonlinear disturbances,'' \emph{arXiv preprint arXiv:2406.09704}, 2024.

\bibitem{bertsekas1996stochastic}
D.~Bertsekas and S.~E. Shreve, \emph{Stochastic optimal control: the discrete-time case}.\hskip 1em plus 0.5em minus 0.4em\relax Athena Scientific, 1996, vol.~5.

\bibitem{wiesemann2013robust}
W.~Wiesemann, D.~Kuhn, and B.~Rustem, ``Robust markov decision processes,'' \emph{Mathematics of Operations Research}, vol.~38, no.~1, pp. 153--183, 2013.

\bibitem{jackson2021formal}
J.~Jackson, L.~Laurenti, E.~Frew, and M.~Lahijanian, ``Formal verification of unknown dynamical systems via gaussian process regression,'' \emph{arXiv preprint arXiv:2201.00655}, 2021.

\bibitem{abate2010approximate}
A.~Abate, J.-P. Katoen, J.~Lygeros, and M.~Prandini, ``Approximate model checking of stochastic hybrid systems,'' \emph{European Journal of Control}, vol.~16, no.~6, pp. 624--641, 2010.

\bibitem{molchanov2005theory}
I.~Molchanov and I.~S. Molchanov, \emph{Theory of random sets}.\hskip 1em plus 0.5em minus 0.4em\relax Springer, 2005, vol.~19, no.~2.

\bibitem{vazquez2018learning}
M.~Vazquez-Chanlatte, S.~Jha, A.~Tiwari, M.~K. Ho, and S.~Seshia, ``Learning task specifications from demonstrations,'' \emph{Advances in neural information processing systems}, vol.~31, 2018.

\end{thebibliography}

\end{document}